\documentclass[aps,prl,longbibliography,reprint,superscriptaddress]{revtex4-2}

\usepackage{hhline}

\usepackage{array}
\usepackage{layout}
\usepackage{mathtools}

\usepackage{xcolor}
\usepackage{graphicx}% Include figure files
\usepackage{dcolumn}% Align table columns on decimal point
\usepackage{tikz}
\usepackage{subcaption}
\usepackage{dirtytalk}
\usepackage{setspace}
\usepackage{enumitem}

\usepackage{ragged2e}
\usepackage{verbatim}

\usepackage{bm}% bold math
\usepackage{algorithmic}
\usepackage[utf8]{inputenc}
\usepackage{amsmath}
\usepackage{hyperref}
\usepackage{cleveref}
\usepackage{zx-calculus}
\usepackage[margin=1in]{geometry}

\usepackage{amssymb}
\usepackage{siunitx}
\usepackage{float}
\usepackage{amsthm}
\usepackage{tikz-cd}
\usepackage{bbm}
\hypersetup{
    colorlinks = true,
    linkcolor = violet,
    linkbordercolor = {white}
}
\usepackage{quantikz}
\usepackage{braket}

\setcounter{secnumdepth}{3}

\theoremstyle{definition}
\newtheorem{theorem}{Theorem}[section]
\theoremstyle{definition}
\newtheorem{claim}[theorem]{Claim}
\theoremstyle{definition}
\newtheorem{definition}[theorem]{Definition}
\theoremstyle{definition}
\newtheorem{conjecture}[theorem]{Conjecture}
\theoremstyle{definition}
\newtheorem{corollary}[theorem]{Corollary}
\theoremstyle{definition}
\newtheorem{remark}[theorem]{Remark}

\theoremstyle{definition}

\theoremstyle{definition}
\newtheorem{lemma}[theorem]{Lemma}
\theoremstyle{definition}
\newtheorem{proposition}[theorem]{Proposition}

\theoremstyle{remark}

\begin{document}

\title{Equivalence Classes of Quantum Error-Correcting Codes}

\author{Andrey Boris Khesin}
\affiliation{Department of Mathematics, Massachusetts Institute of Technology, Cambridge, MA}%

\author{Alexander M. Li}
%\email{alexander.limy@gmail.com}
\affiliation{C. Leon King High School, Tampa, FL}

\date{June 17, 2024}% It is always \today, today,
             %  but any date may be explicitly specified

\begin{abstract}
Quantum error-correcting codes (QECC's) are needed to combat the inherent noise affecting quantum processes. Using ZX calculus, we present QECC's as ZX diagrams, graphical representations of tensor networks. In this paper, we present canonical forms for CSS codes and CSS states (which are CSS codes with 0 inputs), and we show the resulting canonical forms for the toric code and certain surface codes. Next, we introduce the notion of prime code diagrams, ZX diagrams of codes that have a single connected component with the property that no sequence of rewrite rules can split such a diagram into two connected components. We also show the Fundamental Theorem of Clifford Codes, proving the existence and uniqueness of the prime decomposition of Clifford codes. Next, we tabulate equivalence classes of ZX diagrams under a different definition of equivalence that allows output permutations and any local operations on the outputs. Possible representatives of these equivalence classes are analyzed. This work expands on previous works in exploring the canonical forms of QECC's in their ZX diagram representations.

\bigskip

\textbf{Keywords:} canonical form, Clifford codes, error-correcting codes, graph states, quantum compilers, quantum computing, quantum mechanical system, surface code, toric code, ZX calculus, ZX normal form
\end{abstract}

\maketitle

                              %display desired

%, singlelinecheck=false, format=plain}
\section{Introduction}

\selectfont

The work done in the past half century on quantum computing has brought large-scale quantum computers closer to reality. Today, quantum computers can employ a low number (up to a few hundred) of qubits, in the form of photons and nuclear spins \cite{xu2023two}, but they have been used mainly for experiments. Quantum computers differ from classical computers and classical supercomputers through the use of qubits rather than bits. The properties of quantum mechanics inherent in qubits, including superposition and entanglement, allow quantum computers to efficiently simulate quantum systems, making certain calculations much more efficient when done on quantum computers \cite{nielsen2002quantum}. 

However, as with classical information processing systems, quantum information processing systems face noise that disrupts information transmission between the sender and receiver. Because of the vulnerability of qubits to this noise, one of the principal challenges in quantum computing is to account for this noise \cite{kim2023evidence}. To this end, quantum error-correcting codes are developed so that quantum information can be transmitted successfully in the presence of noise~\cite{nielsen2002quantum}. An important restriction on quantum error-correcting codes stems from the no-cloning theorem -- while classical computers can copy bits, quantum mechanics does not allow for the cloning of unknown qubits, and the measurement of a qubit eliminates the information available in the qubit \cite{nielsen2002quantum}. As such, constructing suitable quantum error-correcting codes presents new challenges compared to their classical counterparts. The first advances into quantum error-correcting codes against general errors came with the Shor code~\cite{shor1995scheme}, published in 1995 and the Steane code~\cite{steane1996multiple}, in 1996. Other examples of quantum error-correcting codes are the five qubit code~\cite{knill2001benchmarking} and the toric code~\cite{ec1}.

%Clifford codes are a well-studied family of quantum error-correcting codes, and recent advances have 

A number of approaches have been created to represent the components of quantum error-correcting codes. The \textit{stabilizer} formalism is a method that expresses quantum error-correcting codes in terms of stabilizers, operators that, when applied to certain stabilizer states, preserve the state~\cite{gottesman1997stabilizer}. This approach borrows ideas from group theory to represent the whole class of stabilizers with a finite number of generators. To make the idea of quantum error-correcting codes visual, recent advances have made progress on the topic of presenting \textit{graph states}~\cite{vandennest2004graphical, hu2022improved}.

\begin{figure*}
\centering
    \includegraphics[scale=0.46]{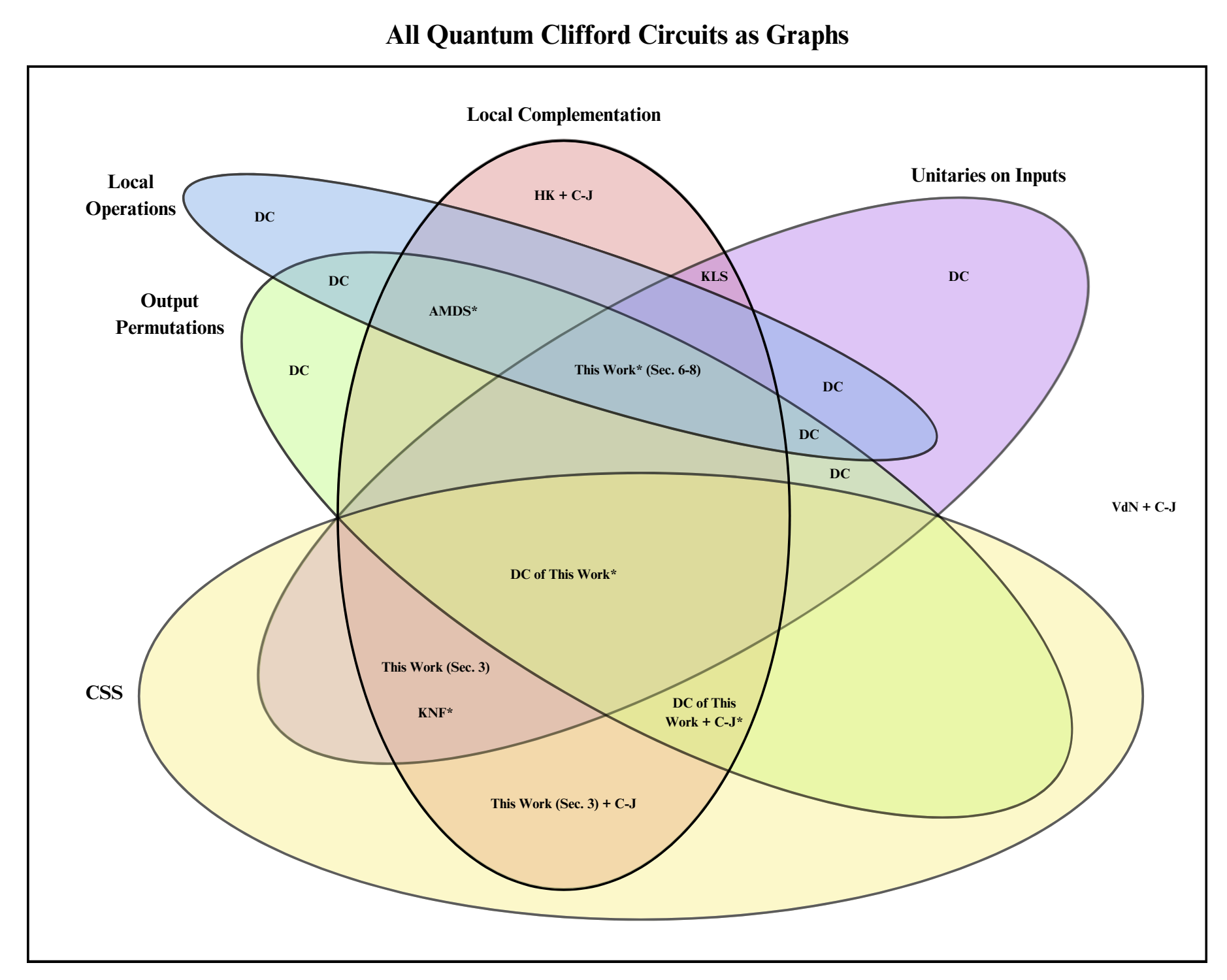}

    \caption{\justifying A summary of the work done on the equivalence classes of graphs of quantum Clifford encoders. Four of the categories -- output permutations, local operations, local complementation, and unitaries on inputs -- are different equivalences that preserve the information received and alter the graph in some way. The CSS category separates work done on encoders for CSS codes from work done on general Clifford encoders. Note that the CSS and local operations categories do not intersect because equivalence by local operations can transform a CSS code into a non-CSS code. The colored regions of the diagram are labeled with works that have been done on finding canonical forms under the corresponding equivalences. The asterisk (*) denotes results that give non-unique forms, and DC denotes results that are ``direct consequences" of previous works. The other abbreviations used are explained in the main text.
}

    \label{venn-diagram}
\end{figure*}

Following the work on graph states, work has been done on representing Clifford codes using ZX calculus~\cite{coecke2008interacting, coecke2011interacting,backens2014zx}. The properties of ZX calculus that allow it to replace the stabilizer tableau formalism (a tabulated form of the generators of the stabilizers) are its universality (it can express every quantum operation), soundness (tableaus can derive equivalence of ZX calculus diagrams), and completeness (ZX calculus diagrams can derive equivalence of tableaus)~\cite{coecke2008interacting,backens2014zx}. This graphical language has had various applications in quantum information~\cite{zx1, zx2, zx3, zx4} and quantum computation problems~\cite{de2020zx, kissinger2020reducing}.

%Previous works in graphical representations of quantum states have considered equivalence classes under local complementation, an operation that preserves the manner of entanglement between the vertices of the encoder, making it useful to find equivalent graph states \cite{lcorbit}.

\bigskip

\Cref{venn-diagram} gives a summary of the work done in graphically presenting quantum Clifford encoders. The set of all quantum Clifford encoders as graphs is split into different categories based on the type of code represented (CSS or general Clifford codes) and four different equivalence operations (output permutations, local operations, local complementation, and unitaries on inputs). Note that equivalence under local operations means equivalence under any local Clifford operations on the outputs. All four of these categories represent different equivalences, which preserve the information the receiver gets while changing the encoder's ZX diagram in some way. Some of them change the encoder (output permutations, local operations, and unitaries on inputs), and a different set of them change the code (output permutations and local operations).

Expressing quantum Clifford encoders as graphs and finding canonical forms for equivalent graphs has had recent advances in the past few decades (see Figure \ref{venn-diagram}). Van den Nest et al.'s (VdN) work \cite{vandennest2004graphical} provides a conversion between any Clifford state and a graph with local Clifford gates, and the Choi-Jamiolkowski (C-J) isomorphism \cite{choi1975completely,jamiolkowski1972linear} extends this to a conversion between any Clifford circuit and a graph with local Clifford gates. Starting from the graphs of the Clifford encoders, we can find the canonical forms of the direct consequences (DC's) located only within the bubbles for output permutations, local operations, and unitaries on inputs. For output permutations, we remove the numbering on the outputs. For local operations, we remove all local Clifford operations on the outputs. For unitaries on inputs, we remove input-input edges, local Clifford operations on the inputs, and the numbering on the inputs. Any combination of these three become DC in Figure \ref{venn-diagram}. The Hu-Khesin (HK) form from \cite{hu2022improved} provides a canonical form for quantum Clifford states. In the context of quantum encoders, this is equivalent to having no inputs and only outputs. Also, the Khesin-Lu-Shor (KLS) form from \cite{KLS} built on the HK form, providing a canonical form for Clifford encoders. The KLS paper shows the process of transforming stabilizer tableaus into the ZX calculus, then performing operations that preserve equivalence to transform the graph into its canonical form. Adcock, Morley-Short, Dahlberg, and Silverstone (AMDS) \cite{adcock2020mapping} considered equivalence of graph states under local complementations and the effects of relabeling the nodes. Kissinger's ZX Normal Form (KNF) \cite{kissinger2022phase} found a way to represent the CSS codes using internal measurement nodes. Section \ref{sec: kl forms} establishes the KL canonical form of CSS codes and states. In Figure \ref{venn-diagram}, DC's of Section \ref{sec: kl forms} (within the CSS bubble) follow through a removal of the numbering on output nodes. In Sections \ref{sec 4} to \ref{sec:bipartite forms}, we consider equivalence classes under all four equivalences for general Clifford encoders.

\bigskip

In this paper, \Cref{section:background} contains key definitions and background on ZX calculus and Clifford encoders. Section \ref{sec: kl forms} presents our main result, the KL canonical form for CSS codes, giving a unique, phase-free form for CSS codes that minimizes the number of nodes and clearly shows the $Z$ stabilizers and logical $Z$ operators. These results are also extended to CSS states, which will be defined later. Section \ref{section:surfacetoric} contains our work on toric codes and specific surface codes, and it shows different forms of these codes, including the canonical form for the toric code based on Section \ref{sec: kl forms}. These two sections build on the KLS forms for Clifford codes \cite{KLS} and recent work \cite{huang2023graphical, li2023graphical, kissinger2022phase} that introduced the normal form of CSS codes, which can efficiently determine the stabilizers from the ZX normal form. Our representations of CSS codes will also have this property. Furthermore, our representations reduce the number of nodes used in the ZX diagrams so that each node corresponds to either an input or output.

Section \ref{sec: prime codes} introduces the prime code diagrams, which focus on diagrams that are  composed of one connected component. Furthermore, we prove the Fundamental Theorem of Clifford Codes, showing the unique prime decomposition of Clifford codes.

Section \ref{sec 4} provides another definition of equivalence, permitting outputs to be permuted as a valid operation among equivalent graphs. The reason this definition of equivalence is also considered is that changing the order of the outputs does not change any code parameters. Simplifications on the set of Clifford encoders considered are given to narrow down the search for the canonical form. In Sections \ref{sec: tabulations from code} and \ref{sec:bipartite forms}, we show work done on identifying equivalence classes and finding representative forms. We analyze the equivalence class sizes and the presence of bipartite forms among these classes. Section \ref{sec:bipartite forms} expands on the equivalence classes containing bipartite forms and considers some classes that do not have bipartite forms.

\section{Background}
\label{section:background}

In this section, we define key terms and background on error-correcting codes and ZX calculus.

First, we define the following matrices.
\begin{definition} \label{def:def444}

    The \textit{Pauli matrices} are $$I \equiv \begin{pmatrix}1&0\\0&1\end{pmatrix}, \quad X \equiv \begin{pmatrix}
    0&1\\1&0
    \end{pmatrix},$$ $$Y \equiv \begin{pmatrix}
       0&-i\\i&0
    \end{pmatrix},\quad Z\equiv \begin{pmatrix} 1&0\\0&-1\end{pmatrix}.$$

    The Pauli matrices represent quantum gates that can act on qubits and alter their state. All four gates are Hermitian, and the three gates $X,Y,Z$ are pairwise anti-commutative. The \textit{Pauli operators on $n$ qubits} are $n$-fold tensor products of Pauli matrices, multiplied by a factor of the form $i^k$ where $k \in \{0,1,2,3\}$ and $i = \sqrt{-1}$.
\end{definition}

The Pauli operators are all equal to their conjugate transposes, so all Pauli operators are Hermitian and unitary. The Pauli operators form a group, called the \textit{Pauli group}. 

Pauli operators can act on states in multi-qubit systems. For example, in a three qubit system, the tensor product $Z\otimes Z \otimes I$ will make $Z$ act on the first qubit, $Z$ act on the second qubit, and $I$ act on the third qubit. The notation for the tensor product can be simplified to $Z_1Z_2$, with the subscripts showing which qubits the operators are acting on. We may also write these three gates as $ZZI$, omitting the tensor product symbols. Other Pauli operators on multiple qubits can be written analogously. 

Other quantum gates that are commonly used in quantum error-correcting codes are the Hadamard ($H$), controlled-NOT (CNOT), phase ($S$), and $\pi/8$ ($T$) gates:
{\setstretch{1}
\[H \equiv \frac{1}{\sqrt{2}} \begin{pmatrix}
1 & 1 \\ 1& -1 \end{pmatrix}, \quad \text{CNOT} \equiv \left(\begin{smallmatrix}
1&0&0&0\\ 0&1&0&0\\ 0&0&0&1\\0&0&1&0
\end{smallmatrix}\right),\]
\[S \equiv \begin{pmatrix}
1 & 0 \\ 0&i\end{pmatrix}, \quad T \equiv \begin{pmatrix}
1&0\\0&e^{i \pi  /4}\end{pmatrix}.\]
}
%% add more on operators
%% 

These operations have the property of universality, the ability to approximate any operator to arbitrary accuracy~\cite{nielsen2002quantum}.

%NEW:
We define \textit{encoders} of quantum error-correcting codes, or QECC's, as families of quantum processes that apply a transformation on some number of input qubits, mapping it to a given range.
Specifically, we will only be concerned with the case of \textit{full-rank} or \textit{non-degenerate} encoders where the encoding operation is injective, meaning that the dimension of the range is at least as large as the number of inputs.
We write that an encoder takes $k$ inputs, or logical qubits, and gives $n$ outputs, or physical qubits.
Note that an encoder can mean any quantum map in the family of such processes defined by the image.
Thus, two different circuits that differ only by a unitary operation on the inputs represent the same encoder.

Encoding quantum information into a larger number of qubits provides redundancy, making it possible to correct certain errors. The \textit{stabilizers} of an encoder are a set of commuting operations that can be composed with the output of an encoder without any change to the overall process. In the stabilizer formalism, the stabilizers determine the entire quantum code \cite{gottesman1996class}.
A family of encoders are \textit{Clifford codes}, quantum error-correcting codes such that each stabilizer is a Pauli operator on $n$ qubits. All stabilizers of a Clifford code on $k$ inputs form a group isomorphic to $\mathbb{Z}_2^{n-k}$ and can be defined by a set of linearly independent \textit{generators}. For $k$-to-$n$ codes, or $[n,k]$ codes, we have $n-k$ generators. The code maps the input qubits onto elements of the \textit{codespace}, the range of the code, which is the intersection of the +1 eigenspaces of the code's stabilizers.

%OLD:
%We define \textit{encoders} of quantum error-correcting codes, or QECC's, as systems of quantum processes that turn inputted qubits into outputted qubits. Encoding the information into a larger number of qubits provides redundancy, making it possible to correct some errors. A \textit{stabilizer} of an encoder is a sequence of gates that can be attached at the end of an encoder without any change to the encoder's function. In the stabilizer formalism, the stabilizers determine the entire quantum code. \textit{Clifford codes} are quantum error-correcting codes such that each stabilizer is a Pauli operator on $n$ qubits. All stabilizers of a Clifford code are determined by a group of \textit{generators}. Each stabilizer can be written as a product of some generators. Thus, listing out the generators determines the whole encoder in terms of its stabilizers. An encoder takes $k$ inputs, or logical qubits, and gives $n$ outputs, or physical qubits. The generators are linearly independent, and all stabilizers commute with each other. For $k$-to-$n$ codes, or $[n,k]$ codes, this restricts the number of generators to exactly $n-k$. The codes map the input qubits onto elements of the \textit{codespace}, the set of all possible outcomes of the code.

ZX calculus is a graphical language used for expressing quantum processes. It makes the representation of Clifford codes graphical, and we use the conventions as described in \cite{backens2014zx, KLS}. The $Z$ (or green) nodes \zx{\zxZ{\alpha}} and $X$ (or red) nodes \zx{\zxX{\beta}} represent tensors that can be used to represent quantum operations such as qubits, gates, or measurements. Each node has a phase, with empty nodes representing a phase of 0. A $Z$ node with $n$ inputs, $m$ outputs, and phase $\alpha$ is equivalent to applying the operation $\ket{0}^{\otimes m}\bra{0}^{\otimes n} + e^{i\alpha}\ket{1}^{\otimes m}\bra{1}^{\otimes n}$, ignoring normalization. An $X$ node with $n$ inputs, $m$ outputs, and phase $\beta$ is equivalent to applying the operation $\ket{+}^{\otimes m}\bra{+}^{\otimes n} + e^{i\beta}\ket{-}^{\otimes m}\bra{-}^{\otimes n}$, ignoring normalization.

The ZX diagram represents a tensor network, and connections between nodes are inner products on the indices of these nodes. It is possible to convert efficiently between quantum circuits and ZX diagrams, with a specific example given in Appendix \ref{appendix sec: convert KLS to circuit}. Hadamard gates, represented by \zx{\zxH{}}, can be placed on edges between nodes or free edges. Another convention for Hadamard gates placed on edges is to color the edge in blue rather than the default black. In this work, edges with Hadamards will be represented with blue edges while edges without Hadamards will be represented with black edges.

ZX calculus has a set of basic rewrite rules that allow for diagrams to be converted into equivalent forms. Because ZX calculus with respect to Clifford codes and states is complete \cite{backens2014zx}, these basic rewrite rules can be used to derive all other rewrite rules.

\begin{definition}[Basic rewrite rules \cite{van2020zx}]
\label{def: basic rewrite rules}
The basic rewrite rules hold when the colors are interchanged, and all other rewrite rules are derived from these eight. Note that this is not a minimal set of rules \cite{coecke2018picturing}. 
\begin{enumerate}

    \item Merging/un-merging rule: Two $Z$ (or $X$) nodes with phases $\alpha$ and $\beta$ that are connected by edges with no Hadamards may be combined into a single $Z$ (or $X$) node with phase $\alpha + \beta$. The resulting node has all the external edges of the two original nodes. A node may also be un-merged into two nodes, and the partition of the external edges connected to each node can be done arbitrarily.

   % \begin{figure}[h]
    \begin{minipage}{0.4\textwidth}
    \begin{flushright}
        \includegraphics[scale=0.30]{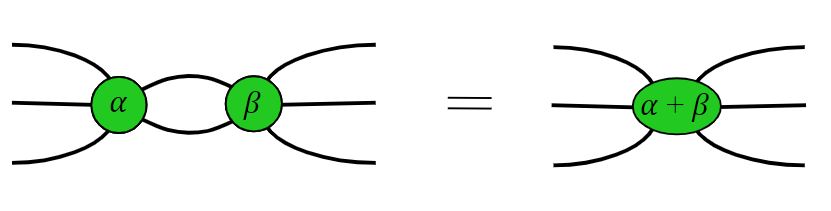}
        %\caption{\justifying Two $Z$ nodes with phases $\alpha$ and $\beta$ are merged into one $Z$ node with phase $\alpha + \beta$.}
        %\label{fig: merge rule}
    \end{flushright}
    \end{minipage}
   %\end{figure}

    \item Identity removal rule: $Z$ (or $X$) nodes with phase 0 and exactly two edges can be removed.

    \begin{minipage}{0.30\textwidth}
    \begin{flushright}
        \includegraphics[scale=0.40]{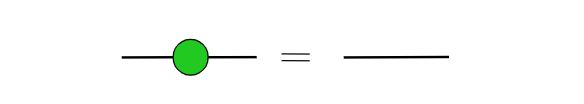}
        %\caption{\justifying Two $Z$ nodes with phases $\alpha$ and $\beta$ are merged into one $Z$ node with phase $\alpha + \beta$.}
        %\label{fig: merge rule}
    \end{flushright}
    \end{minipage}

    \item Hadamard cancellation rule: Two Hadamard gates, sharing an edge and both having exactly two edges, can be cancelled since $HH = I$.
    
    \item $\pi$-copy rule: A $Z$ (or $X$) node of phase $\pi$ slides through and copies onto all the other edges of an adjacent $X$ (or $Z$) node. The $X$ (or $Z$) node has its phase negated. This rule will be used repeatedly in Section \ref{sec: kl forms}.
    %% will be repeatedly using this later
    Shown below is an example with a $Z$ $\pi$ node.

    % use minipage
  %  \begin{figure}[h]
    \begin{minipage}{0.4\textwidth}
    \begin{flushright}
        \includegraphics[scale=0.55]{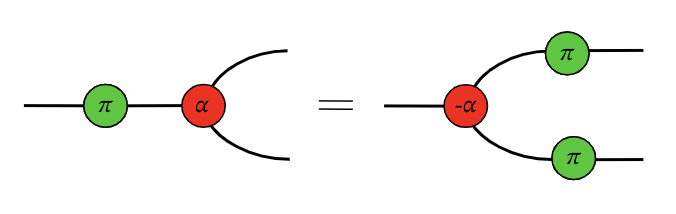}
       % \caption{\justifying A $Z$ node with phase $\pi$ slides through an $X$ node of phase $\alpha$, copying onto all other adjacent edges of the $X$ node. The $\alpha$ is negated to $-\alpha$.}
       % \label{fig: pi-copy rule}
    \end{flushright}
    \end{minipage}
  %  \end{figure}

    \item State-copy rule: A $Z$ (or $X$ node) with a phase of $a\pi$ with $a \in \{0,1\}$ can be copied through onto all of the other adjacent edges of an $X$ (or $Z$ node) with any phase $\alpha$.

    \begin{minipage}{0.4\textwidth}
    \begin{flushright}
        \includegraphics[scale=0.24]{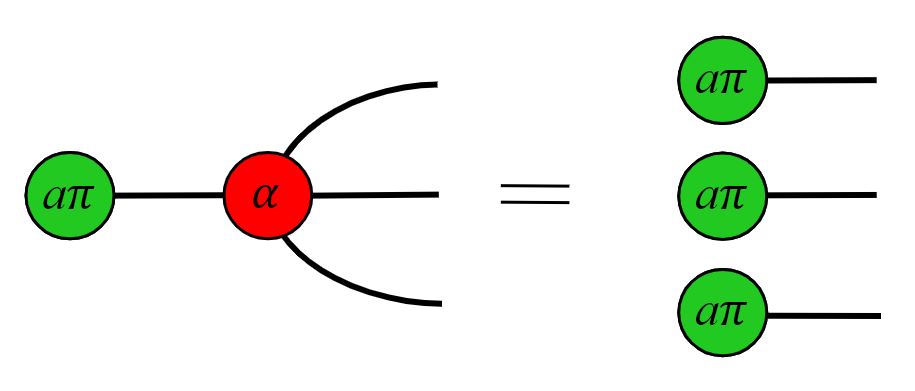}
       % \caption{\justifying A $Z$ node with phase $\pi$ slides through an $X$ node of phase $\alpha$, copying onto all other adjacent edges of the $X$ node. The $\alpha$ is negated to $-\alpha$.}
       % \label{fig: pi-copy rule}
    \end{flushright}
    \end{minipage}
  %  \end{figure}

    \item Color change rule: A $Z$ (or $X$) node can be exchanged for an $X$ (or $Z$) node if all edges adjacent to the node have Hadamards added to them.

    \begin{minipage}{0.4\textwidth}
    \begin{flushright}
        \includegraphics[scale=0.32]{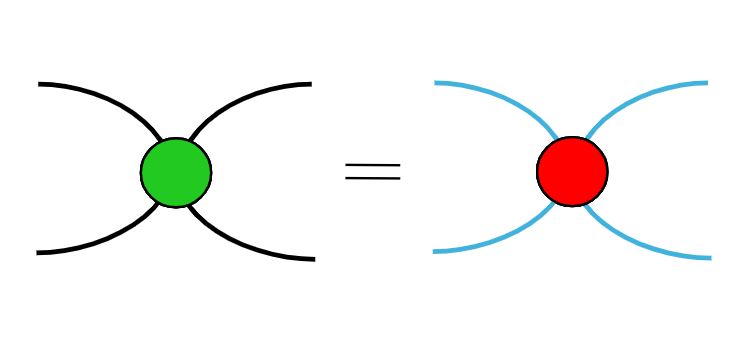}
        %\caption{\justifying The bialgebra rule can be applied on a $Z$ node and $X$ node.}
      %  \label{fig: merge rule}
    \end{flushright}
    \end{minipage}

    \item Bi-algebra rule: By acting on an edge between a $Z$ and an $X$ node, each external edge gets one node, and a complete bipartite graph is formed between these new nodes. An example is shown below. There may be one or more (rather than two) edges coming in from the left side of the graph, and there may be one or more edges exiting on the right side of the graph.

    %\begin{figure}[h]
    \begin{minipage}{0.4\textwidth}
    \begin{flushright}
        \includegraphics[scale=0.25]{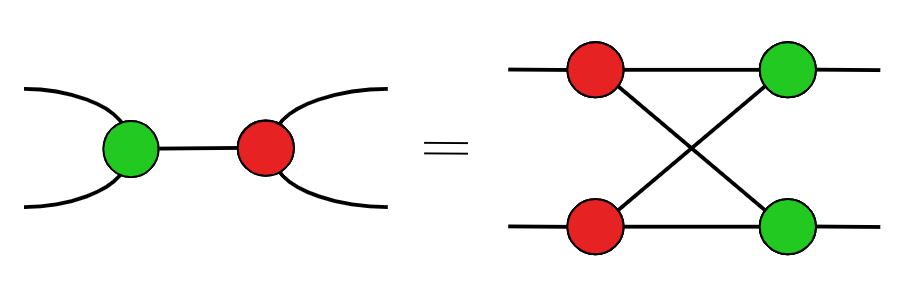}
        %\caption{\justifying The bialgebra rule can be applied on a $Z$ node and $X$ node.}
       % \label{fig: merge rule}
    \end{flushright}
    \end{minipage}
  %  \end{figure}

    \item Hopf rule: If a $Z$ node and $X$ node share multiple edges that have no Hadamards, two of these shared edges may be removed together.

    \begin{minipage}{0.4\textwidth}
    \begin{flushright}
        \includegraphics[scale=0.20]{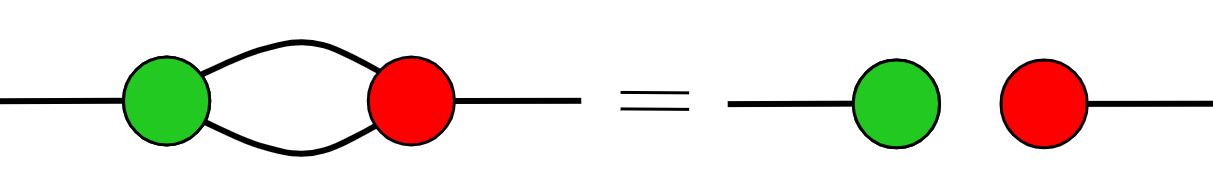}
        %\caption{\justifying The bialgebra rule can be applied on a $Z$ node and $X$ node.}
        %\label{fig: merge rule}
    \end{flushright}
    \end{minipage}

\end{enumerate}

\end{definition}

Additional rules that will be used later are given below.

\begin{definition}[Derived rewrite rules]Below are two rules that can be derived from the set of basic rewrite rules above.
\label{definition: derived rewrite rules}

\begin{enumerate}
       \item Loop rule: Self-loops on a node can be removed. If the self-loop has a Hadamard, then removing the loop adds a phase of $\pi$ to the node.

    \begin{minipage}{0.4\textwidth}
    \begin{flushright}
        \includegraphics[scale=0.27]{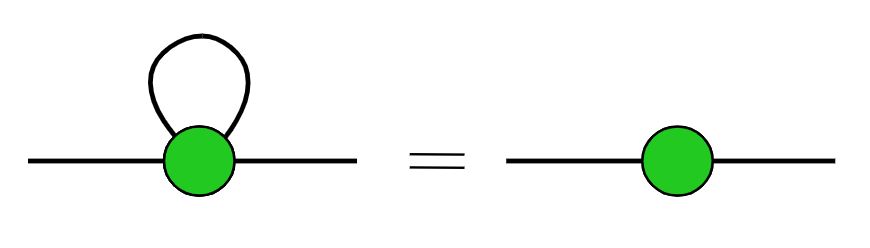}
        %\caption{\justifying The bialgebra rule can be applied on a $Z$ node and $X$ node.}
        %\label{fig: merge rule}
    \end{flushright}
    \end{minipage}
       
    \item Hadamard-sliding rule: This rule allows the colors of two adjacent vertices to be swapped while switching neighbors and toggling the edges between the neighbors of the two vertices. 

    \begin{minipage}{0.4\textwidth}
    \begin{flushright}
        \includegraphics[scale=0.20]{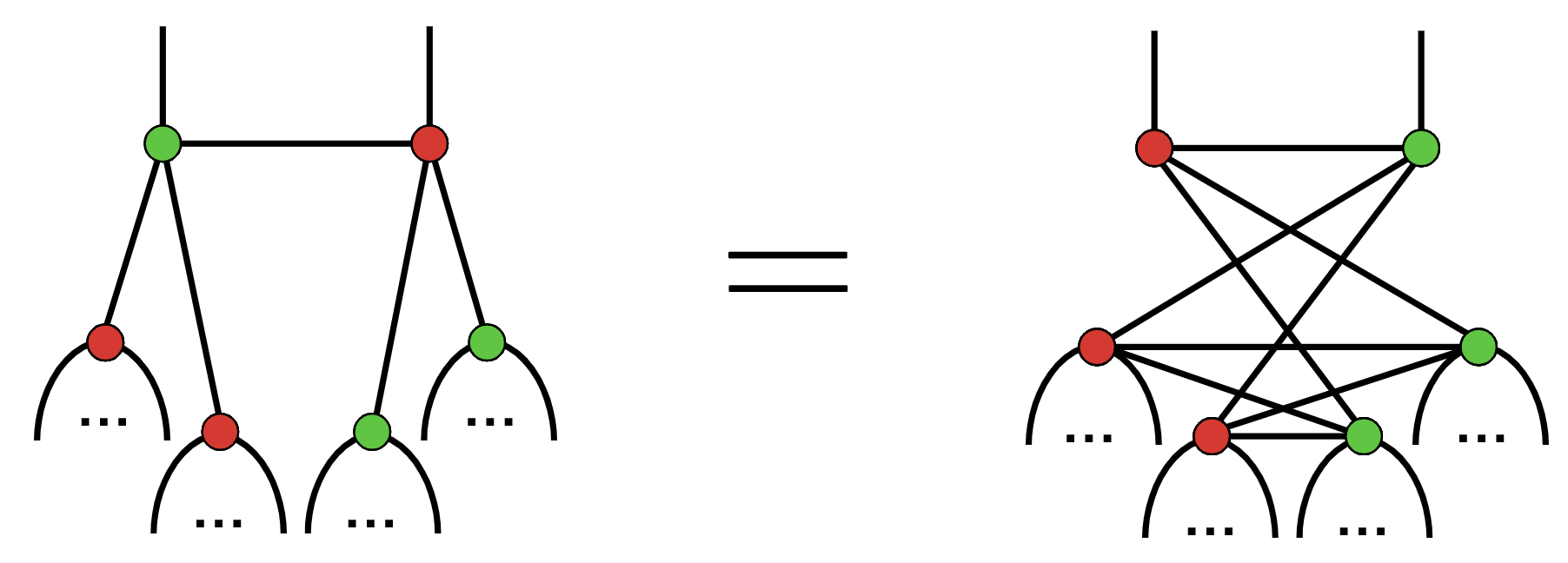}
        %\caption{\justifying The bialgebra rule can be applied on a $Z$ node and $X$ node.}
        %\label{fig: merge rule}
        %\label{pic: hadamard-sliding rule}
    \end{flushright}
    \end{minipage}
\end{enumerate}

\end{definition}

We consider the ZX diagrams of codes that are expressed in their \textit{encoder-respecting form} \cite{KLS}, which we re-define here.

\begin{definition}
    \label{encoder-respecting form}
    The \textit{encoder-respecting form} is a ZX representation of a Clifford code that contains $Z$ and $X$ nodes, with each node corresponding to an input or output. Each node has a corresponding free edge (not connected to any other nodes), with input nodes having input edges and output nodes having output edges. The outputs may have local operations on their free edges. Each of the $k$ input nodes may only have connections with the output nodes, while each of the $n$ output nodes may have connections amongst each other. Also, the output edges are numbered from 1 to $n$.
\end{definition}

\begin{figure}[h]
\includegraphics[scale=0.45]{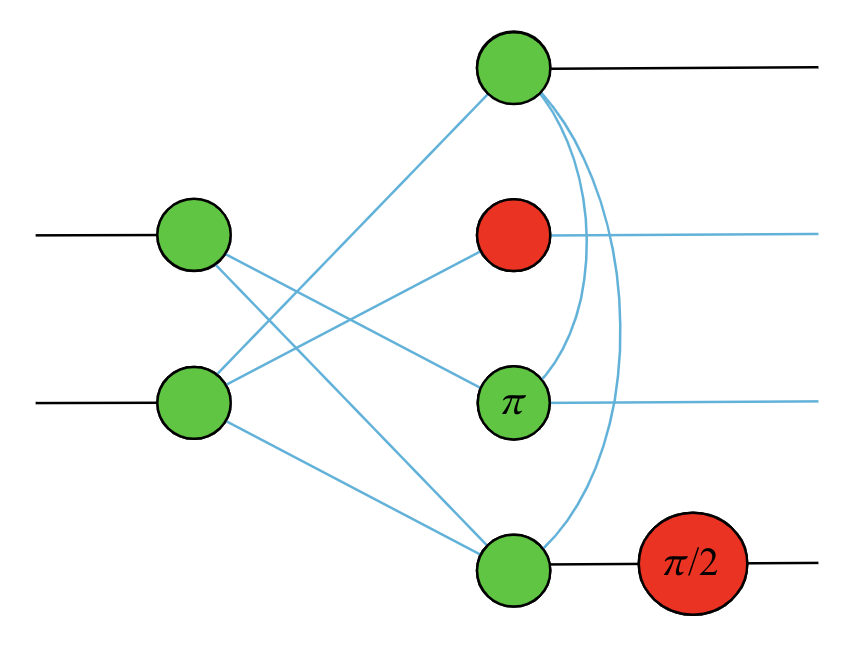}
\caption{\justifying Example of an encoder in ZX calculus. The incoming edges from the left side are input edges (sending information in) and the outgoing edges on the right side are output edges (sending encoded information out). Note the local operations applied on the output qubits, with blue edges representing edges with Hadamards.}
\label{examplee}
\end{figure}

An example of a Clifford code expressed in encoder-respecting form is shown in \Cref{examplee}. There are local operations on the output edges, as shown by the blue free edges and $X$ gate ($X$ node with phase $\pi/2$). The $X$ node is not considered an output node since it is isolated as a local operation on its neighboring $Z$ node. There are internal edges in the graph between the input and output nodes and amongst the output nodes, but not amongst the input nodes. Two qubits of are encoded into four qubits in this example.

Encoders with the same sets of stabilizers have the same Khesin-Lu-Shor (KLS) canonical form \cite{KLS}. This form consists of four rules that can be efficiently checked for a given ZX diagram.

\begin{theorem}[KLS canonical form for Clifford codes] There is a canonical form for the ZX diagrams of an encoder, where the ZX diagrams in the same equivalence class have identical stabilizers. This diagram is in encoding-respecting form, and all nodes are $Z$ nodes. The canonical form satisfies the following four rules:
\begin{enumerate}
    \item \textit{Edge rule:} All internal edges have Hadamards, and there is exactly one $Z$ node per free edge.
    \item \textit{Hadamard rule:} Output nodes with Hadamard gates on their free edges cannot share an edge with a lower-numbered output node or with an input node.
    \item \textit{RREF rule:} The adjacency matrix representing the edges between input nodes and output nodes is in reduced row-echelon form (RREF).
    \item \textit{Clifford rule:} In the RREF matrix, the pivot columns of the input to output adjacency matrix correspond to pivot output nodes. There are no local Clifford operations on the pivot or input nodes, or their free edges. There are also no input-input edges or pivot-pivot edges.
    
\end{enumerate}
\end{theorem}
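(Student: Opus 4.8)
The plan is to prove the theorem in two halves: \emph{reachability}, that every Clifford encoder has an equivalent ZX diagram meeting all four rules, and \emph{completeness}, that the diagram so obtained is determined by the stabilizer group alone, so that it deserves the name ``canonical.'' I would present reachability as a sequence of normalization passes, each preserving the encoder (up to the allowed input unitaries) and establishing one of the rules without spoiling the earlier ones.

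For reachability, I would start from any ZX diagram of the encoder, put it in encoder-respecting form (Definition \ref{encoder-respecting form}), and recolour every $X$ node to a $Z$ node with the color-change rule, then use merging/un-merging to coalesce or split nodes until each input or output free edge carries exactly one $Z$ node, clearing self-loops with the loop rule and collapsing parallel edges with the Hopf rule. After exhaustively merging along Hadamard-free internal edges, any surviving internal edge must carry a Hadamard, since otherwise its endpoints would merge; this is the \emph{edge rule}. Next, I would read the input-to-output connectivity as a matrix $M$ over $\mathbb{F}_2$. Because the encoder is defined only up to a unitary on its inputs, conjugating by CNOTs (and input permutations) on the input qubits is free, and in the diagram these moves realise exactly the elementary row operations on $M$; Gaussian elimination then places $M$ in RREF, giving the \emph{RREF rule}, with the pivot set fixed by the row space of $M$. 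Then I would enforce the \emph{Hadamard rule} by processing output nodes in increasing order: whenever a node with a Hadamard on its free edge is adjacent to an input node or a lower-numbered output node, apply a local complementation or pivot (derived from the Hadamard-sliding and basic rules) centred there to relocate the offending Hadamard, using supporting lemmas to check that no already-processed node is disturbed and that $M$ remains in RREF. Finally, for the \emph{Clifford rule}, I would push local Clifford operations off the input nodes (absorbing them into the input unitary) and off the pivot output nodes — each pivot column of $M$ has a single $1$, so a pivot output touches exactly one input, and that edge transports the local Clifford away — and remove input-input and pivot-pivot edges, which correspond to CZ- and CNOT-type Cliffords on the inputs and are likewise free.

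For completeness, I would show that from a diagram satisfying the four rules one can read off a full generating set of stabilizers directly: the Hadamard-edge adjacency together with the residual local Cliffords on the non-pivot outputs yields generators in the standard graph-state fashion. Conversely, I would argue that the stabilizer group fixes every feature of the diagram. The row space of $M$, hence its RREF and its pivot set, is dictated by how logical operators propagate through the code; given that data the Hadamard rule determines which free edges carry Hadamards; and the remaining output-output edges and the local Cliffords on non-pivot outputs are then forced, because the Clifford rule has removed all the freedom that could otherwise be used to alter them, with the completeness of Clifford ZX calculus \cite{backens2014zx} ensuring the rewrite system is rich enough to realise every such reduction. Assembling these observations, the assignment from rule-satisfying diagrams to stabilizer groups is injective, so the form is a genuine canonical form.

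The main obstacle I anticipate is the compatibility of the normalization passes. The local complementations used for the Hadamard rule alter output-output edges and local Cliffords, and could a priori disturb the RREF block of $M$ or re-create violations at nodes already processed, while the Clifford-rule cleanup could reintroduce Hadamards on free edges or edges among pivots. Overcoming this requires a careful ordering of the steps together with invariance lemmas — essentially a commutation analysis of local complementation, pivoting, and input-CNOT conjugation against the matrix $M$ — showing that the operations available at each stage act trivially on the structure frozen at the earlier stages. The uniqueness half conceals a second, more bookkeeping-heavy obstacle: a precise degrees-of-freedom count confirming that the four rules jointly leave no unconstrained choice, which is the part where the completeness result is genuinely needed.
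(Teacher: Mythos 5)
The statement you are proving is not actually proved in this paper: it is quoted verbatim from \cite{KLS} as background, and the text immediately following it simply asserts the result and refers the reader there. The paper's own technical work is the CSS analogue (Theorem~\ref{CSS code canonical form}), so there is no in-paper proof of the Clifford-code theorem to compare your attempt against. What I can do is evaluate your outline on its merits and against the style of argument the paper actually uses for the CSS case.

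Your reachability half is a plausible sketch of the constructive pass in \cite{KLS}, but two steps are understated. First, ``recolour every $X$ node to $Z$, merge along Hadamard-free edges, therefore all surviving internal edges carry Hadamards'' does not by itself yield one node per free edge; that requires the local-complementation machinery of the ZX-HK form (the paper cites \cite{hu2022improved} for exactly this in the proof of Lemma~\ref{lemma: k(A) otimes K(B) sim K(C)}). Second, pivot-pivot edges are not removed by conjugating CNOTs onto the inputs: the paper's own description (again in Lemma~\ref{lemma: k(A) otimes K(B) sim K(C)}) is local complementation at the pivots' inputs followed by an input permutation, a move that has nontrivial side effects on the rest of the graph and is part of why the ``invariance lemmas'' you flag are genuinely needed, not a bookkeeping afterthought.

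The completeness half is where the real gap sits. You assert that ``the stabilizer group fixes every feature of the diagram'' and give a chain of ``is determined by'' claims, but you never show that two distinct rule-satisfying diagrams produce distinct stabilizer groups, and you explicitly defer the degrees-of-freedom count. That count is not optional; it is the whole content of uniqueness. Notice that for the CSS analogue the paper does exactly such a count and nothing else: Lemma~\ref{lemma: number of CSS codes} counts stabilizer groups of each type, Lemmas~\ref{lemma: number of ways to connect green output to red output}--\ref{lemma: input to green output connections} count diagrams of matching parameters, the two totals agree, and Lemmas~\ref{lemma: red nodes have Z checks}--\ref{lemma: inputs have logical ops} supply a surjection, which together force a bijection. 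A proof of the Clifford-code theorem needs the analogous enumeration (or some other injectivity argument that you actually carry out); without it your ``assembling these observations'' step is circular, since it assumes the very injectivity it is meant to establish.
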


A ZX diagram satisfying these four rules is the unique KLS canonical form for the equivalence class where all the ZX diagrams have the same stabilizers. Also, a given ZX diagram can be efficiently transformed to its KLS canonical form using a series of operations.

Furthermore, the KLS canonical forms may be efficiently transformed into quantum circuits.

\begin{theorem}
Consider an $[n,k]$ encoder given in its KLS form. Then, it can be efficiently transformed into an equivalent quantum circuit using the following steps.

\begin{enumerate}
\item Start with $k$ open wires representing the inputs of the circuit.
\item Add a $\ket{0}$ state for each of the $n-k$ non-pivot output nodes.
\item Apply an $H$ gate to all $n$ wires.
\item Apply a $CX$ gate between the wires corresponding to the edges between inputs and non-pivot outputs. The input node is the target qubit, and the output node is the controlled qubit.
\item Apply a $CZ$ gate between the wires corresponding to the edges between only outputs.
\item Apply the local operations attached to the outputs.
\end{enumerate}
   
\end{theorem}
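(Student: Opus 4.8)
The plan is to verify that the circuit produced by the six-step recipe implements the same encoder as the given KLS diagram, by reading the diagram itself as a circuit and applying ZX rewrite rules. The key observation is that the KLS form is already in encoder-respecting form with all internal edges carrying Hadamards and exactly one $Z$ node per free edge, so we can interpret each $Z$ node together with its free edge as a point in the circuit and each Hadamard edge as a relation to be discharged. I would begin by recalling that a $Z$-spider with one free (output) edge and one internal edge is an identity wire up to the Hadamard on the internal edge, and that a $Z$-spider in the ``GS-LC'' / graph-state sense with free edge plus several Hadamard edges is obtained from the $\ket{+}$ state by $CZ$ gates, which is exactly why step 3 applies $H$ to all wires (turning the $\ket{0}$'s of step 2 into $\ket{+}$'s, i.e. bare $Z$-spiders) and step 5 applies the $CZ$'s corresponding to output--output Hadamard edges.

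Next I would handle the input nodes. After the RREF rule, each non-pivot output node that is adjacent to an input node contributes, via the bi-algebra and state-copy rules, exactly a $CX$ from that output onto the input; this is the content of step 4, and the fact that the input-to-output adjacency matrix is in RREF with pivot columns corresponding to pivot output nodes guarantees that these $CX$'s act consistently (each input is ``copied'' onto its pivot output and then spread to the non-pivot outputs it is connected to). The Clifford rule — no local operations on pivot or input nodes, no input--input or pivot--pivot edges — is what makes this clean: the pivot nodes act as pure copy-dots for the logical information, and the only remaining structure is the output--output $CZ$'s among non-pivots and the local Cliffords on the free output edges, which is step 6. I would assemble these pieces by starting from the $k$ open input wires (step 1), introducing the $n-k$ ancillas in $\ket{0}$ (step 2), and then showing that reading the KLS diagram left-to-right — inputs, then the bipartite input/output structure, then the output/output structure, then the local gates — reproduces steps 3 through 6 in order, each rewrite being an instance of color-change, bi-algebra, state-copy, or Hadamard-cancellation from Definitions \ref{def: basic rewrite rules} and \ref{definition: derived rewrite rules}.

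The efficiency claim is then immediate: each of the $n$ nodes, each of the $O(n^2)$ internal edges, and each of the $O(n)$ local operations is processed once, so the circuit has $O(n^2)$ gates and is produced in polynomial time. The main obstacle I anticipate is bookkeeping the direction and target/control assignment of the $CX$ gates in step 4: one must check that orienting the $CX$ with the input node as target and the output node as control is precisely what the bi-algebra rule produces when an $X$-type copy meets the $Z$ input spider, and that composing these over a full RREF matrix does not introduce spurious phases or extra edges — this is where the pivot-pivot and input-input edge-freeness from the Clifford rule is essential. A secondary subtlety is making sure the global normalization/scalar is tracked (or explicitly declared irrelevant, as the ZX conventions in the excerpt do by ignoring normalization), and that the local operations in step 6 are applied in the correct order relative to the numbering of the output edges, which the Hadamard rule constrains.
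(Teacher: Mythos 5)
Your high-level strategy — read the KLS diagram as a circuit, with $\ket{0}$ plus $H$ giving a bare $Z$-spider and Hadamard edges between output spiders giving $CZ$ gates — matches the paper's proof (Appendix B). However, the paper handles step 4 (the $CX$ gates) by a different and cleaner mechanism than the one you sketch. Rather than invoking bi-algebra and state-copy, the paper repeatedly \emph{un-merges} spiders so that every internal Hadamard edge becomes a separate $CZ$ gate between two wires, and then observes that the Hadamard sitting on each input--pivot edge can be slid leftward through those $CZ$'s using the identity $ZH = HX$: each $CZ$ that the $H$ passes through turns into a $CX$ with the input wire as target. Simultaneously the $\ket{+}$ preparations on the non-pivot wires are rewritten as $H\ket{0}$, which is what makes step 3 (a global $H$ layer) come out as a single clean layer. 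This argument never touches the bi-algebra rule, and it gives the target/control orientation for free — the subtlety you correctly anticipate is resolved simply by which wire the $H$ lived on.

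Your invocation of bi-algebra and state-copy for step 4 is not carried through, and I am skeptical it would work cleanly without extra color changes: in KLS form every node is a $Z$-spider with Hadamard internal edges, so there are no $Z$--$X$ adjacencies of the form the bi-algebra rule acts on until you first change colors, at which point you are essentially reinventing the paper's $ZH = HX$ trick with more moving parts. The Clifford rule and the RREF structure do play the role you assign them (no pivot--pivot or input--input edges means no stray $CZ$'s to account for, and the pivot column structure is what lets the $H$ slide reach exactly the right $CZ$'s), and your efficiency analysis is correct. So the proposal is pointed in the right direction and correctly identifies the key structural facts, but the central rewrite sequence in step 4 should be replaced by the un-merge-then-slide argument, which is both concretely correct and considerably tighter.
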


The complete proof of this theorem can be found in Appendix \ref{appendix sec: convert KLS to circuit}. This procedure works by building up the encoder's quantum circuit representation in layers, starting from the input qubits, which correspond to the ZX diagram input nodes, adding auxiliary qubits that encode the information from the input qubits, and connecting the wires using the appropriate gates.

The \textit{neighborhood} $N(v)$ of a vertex $v$ in a graph $G=  (V,E)$ is the set of all vertices in $V$ adjacent to $v$, not including $v$ itself. An operation commonly used to transform equivalent ZX diagrams between each other is defined below.

\begin{definition}
    \label{localcomplementation}
    Consider a simple graph $G = (V,E)$, where $V$ is the set of vertices and $E$ is the set of undirected edges between vertices. Consider a vertex $v\in V$. By performing a \textit{local complementation} about vertex $v$, all edges connecting two vertices in $N(v)$ are toggled. That is, if the edge existed before the local complementation, it is removed; if it did not exist before, it is added.
\end{definition}

\section{Canonical form for CSS codes and states}
\label{sec: kl forms}

Calderbank-Shor-Steane (CSS) codes are a commonly studied class of quantum error-correcting codes constructed starting from two classical codes \cite{zarei2017strong, steane1999enlargement, sarvepalli2009sharing, harris2018calderbank}. The generators of a CSS code's stabilizers can be chosen such that each of the generators is either a Pauli operator with only $I$ and $Z$ gates or a Pauli operator with only $I$ and $X$ gates.

We introduce the new notion of a CSS state.

\begin{definition}

    A \textit{CSS state} is a CSS code with 0 inputs.
\end{definition}

We now present the Khesin-Li (KL) canonical form of CSS codes and CSS states in ZX calculus, which we prove later in this section. The KL form for CSS codes is a special case of the KLS forms (which are for Clifford codes).

All KL forms are in encoder-respecting form (see Definition \ref{encoder-respecting form}), so every input and output node has its own free edge, and the output nodes are numbered from $1$ to $n$.

\begin{figure}
    \includegraphics[scale=0.35]{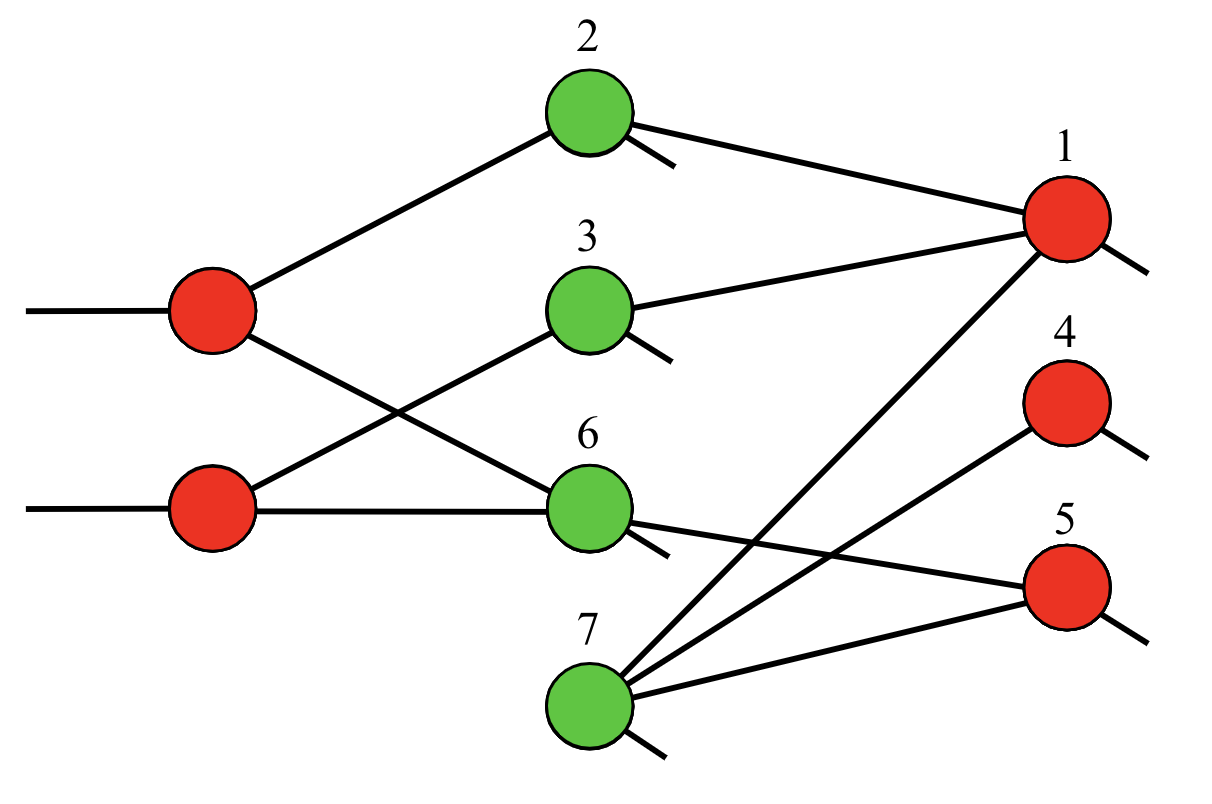}

    \caption{\justifying
        Example of a CSS code in KL canonical form. The short diagonal edges are the free output edges of the 7 output nodes.}
            \label{fig: KL form example}

        \end{figure}

\begin{theorem}[KL canonical form for CSS codes]
\label{CSS code canonical form}
Any CSS code can be expressed uniquely in ZX calculus under the following rules.
\begin{itemize}
\item \textit{Bipartite rule:} The nodes can be split into two groups, one consisting of the input nodes (which are all $X$ nodes) and output $X$ nodes, and one consisting of the output $Z$ nodes. The only interior edges allowed are between nodes of different groups.
\item \textit{Phase-free rule:} All input and output nodes have phase 0, and there are no local operations on any free edges.
%\item \textit{Hadamard rule:} Each output $X$ node can only connect to higher-numbered output $Z$ nodes. This restriction is equivalent to the one set in the KLS form, in which outputs without Hadamards cannot connect to lower-numbered output nodes or input nodes.
\item \textit{RREF rule:} The adjacency matrix between the input $X$ nodes and output $Z$ nodes is in reduced row echelon form (RREF). The adjacency matrix between the output $X$ nodes and \textit{all output nodes}, where the output $X$ nodes are marked as connected to themselves, is also in RREF.

%\item \textit{Pivot rule:} None of the pivot outputs (from the RREF form) can connect to more than one input node.
    
\end{itemize}

\end{theorem}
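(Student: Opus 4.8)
The plan is to establish existence and uniqueness as two separate arguments, with a single structural ``readout'' lemma doing most of the work for both. The lemma I would prove first says: for any phase-free, local-operation-free ZX diagram in encoder-respecting form whose nodes are partitioned into a block of $X$-nodes (the $k$ input nodes together with some output nodes) and a block of output $Z$-nodes, with interior edges only between the two blocks, the entire stabilizer group and a complete set of logical operators of the encoder it represents are given by explicit formulas in the two $\mathbb{F}_2$ adjacency matrices $A_1$ (input $X$-nodes versus output $Z$-nodes) and $A_2$ (output $X$-nodes versus all output nodes, with the diagonal marking self-edges). Concretely, each output $X$-node with support $s$ contributes the $X$-type stabilizer $X^{\otimes s}$; each non-pivot output $Z$-node imposes one $Z$-type parity check on its neighbourhood; and pushing an $X$ onto input node $i$ through the diagram --- using the state-copy and $\pi$-copy rules of \Cref{def: basic rewrite rules} to spread it over the output $Z$-neighbours and then, as $X$ acting on those $Z$-spiders, onto their output $X$-neighbours --- yields the $i$-th logical $X$ operator. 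This is book-keeping with the copy rules, but it is the technical heart. (One could instead try to specialize the KLS canonical form, but that form uses only $Z$-nodes while the KL form is genuinely bichromatic, so a direct construction is cleaner.)

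Given the lemma, \textbf{existence} is a construction. Starting from parity-check matrices $M_X$, $M_Z$ for the $X$- and $Z$-type stabilizer generators of the CSS code (with $M_X M_Z^{\mathsf T}=0$) and any choice of logical operators, I would: put $M_X$ in RREF; declare the pivot qubits of $M_X$ to be the output $X$-nodes and the rest the output $Z$-nodes; take $A_2$ to be this RREF (each pivot $1$ reinterpreted as a self-edge); reduce the logical $X$ representatives modulo the $X$-stabilizers, restrict them to the output-$Z$ coordinates, and put the resulting matrix in RREF to obtain $A_1$; and give every node phase $0$ with bare free edges. By the readout lemma this diagram is in encoder-respecting form, its stabilizer group has check matrices $M_X$ and --- using $M_X M_Z^{\mathsf T}=0$ to see the $Z$-spider constraints are exactly the rows of $M_Z$ --- $M_Z$, and its logical operators are the chosen ones, so it represents the code; and by construction it satisfies the bipartite, phase-free, and RREF rules.

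For \textbf{uniqueness}, suppose $D$ and $D'$ both obey the three rules and encode the same CSS code on the same labeled set of physical qubits. The first point is that the partition of the $n$ output nodes into $X$-type and $Z$-type is an invariant of the code: I would characterize the output $X$-nodes intrinsically in terms of the stabilizer group and codespace --- for instance via which single-qubit Paulis on qubit $j$ can be completed to a stabilizer using only the other output qubits --- and verify from the readout lemma that both diagrams produce this same set. Once the partition agrees, the bipartite and phase-free rules reduce each diagram to the pair $(A_1,A_2)$. The row space of $A_2$ in $\mathbb{F}_2^n$ is the $X$-stabilizer group, a code invariant, and the RREF generating matrix of a fixed subspace is unique, so $A_2=A_2'$; then the rows of $A_1$, under the now-fixed identification of coordinates, are the canonical $X$-stabilizer-reduced logical-$X$ representatives, again a code invariant, and their RREF is unique, so $A_1=A_1'$. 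Hence $D=D'$.

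The main obstacle I anticipate is the uniqueness half, and specifically the claim that the $X$/$Z$ labeling of output nodes is forced by the code rather than being a choice hidden inside the diagram --- that is, proving the map from (diagrams satisfying the three rules) to (labeled CSS codes) is injective. Pinning down precisely which code-theoretic invariant each RREF matrix computes, and checking that the two RREF conditions do not interfere (the second one also partially constrains the output numbering through its self-edge convention), is the delicate point; the rewrite-rule manipulations in the existence half are routine by comparison once the readout lemma is in place.
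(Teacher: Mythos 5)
Your overall route is genuinely different from the paper's. The paper proves the bijection by (i) establishing a ``readout'' map from KL diagrams to stabilizer data (Lemmas~\ref{lemma: red nodes have Z checks}--\ref{lemma: inputs have logical ops}) and (ii) counting both sets: Lemma~\ref{lemma: number of CSS codes} enumerates CSS codes with parameters $(n,p,q)$ while Lemmas~\ref{lemma: number of ways to connect green output to red output} and \ref{lemma: input to green output connections} show, via a recursion for the number of graphs obeying the Bipartite and RREF constraints, that there are exactly as many KL diagrams; the inverse construction is then only sketched after the theorem. You dispense with the count entirely and instead give an explicit construction for existence and a linear-algebra argument for uniqueness (RREF generating matrices of fixed subspaces are unique; the colour partition is a code invariant given the qubit labeling). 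That is arguably more transparent, and it makes explicit the injectivity step that the paper only gets implicitly from the count. Your instinct that the hard step is showing the $X$/$Z$ partition of outputs is forced by the code is exactly right.

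However, your readout lemma has the Pauli types reversed, and under the paper's fixed ZX semantics (Definitions~\ref{def:def444} and \ref{def: basic rewrite rules}) this is a substantive error, not a relabeling. A $Z$ gate pushed along the free edge of an output $X$-spider $\pi$-copies through to its $Z$-spider neighbours and therefore yields the \emph{$Z$-type} stabilizer $Z_i\prod_{j\in N(i)} Z_j$ (Lemma~\ref{lemma: red nodes have Z checks}); the non-pivot output $Z$-spiders are the ones that yield $X$-type checks (Lemma~\ref{lemma: X checks for non-pivots}); and a $Z$ gate pushed through an input $X$-spider spreads to give the logical $Z$ operators (Lemma~\ref{lemma: inputs have logical ops}), while the logical $X$ operators come via the pivot $Z$-node (Proposition~\ref{prop: logical x ops}). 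Your version says output $X$-nodes give $X^{\otimes s}$, non-pivot $Z$-nodes give $Z$-parities, and an $X$ pushed onto the input spreads via $\pi$-copy --- but an $X$-$\pi$ spider \emph{merges} with an $X$-spider rather than $\pi$-copying through it, so that last step does not spread at all. The swap propagates into your existence construction: you put $M_X$ in RREF and declare its pivot qubits to be the output $X$-nodes, whereas the paper's explicit procedure (paragraph following the proof of Theorem~\ref{CSS code canonical form}) row-reduces $M_Z$ and uses its pivots. When $M_X\neq M_Z$ these disagree; e.g.\ for $n=3$ with $Z$-check $Z_1Z_2$ and $X$-check $X_1X_2X_3$, your rule makes the output $X$-node adjacent to $\{2,3\}$, which under the actual semantics encodes the $Z$-check $Z_1Z_2Z_3$, the wrong code. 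If you consistently swap $Z\leftrightarrow X$ throughout (so the readout matches Lemmas~\ref{lemma: red nodes have Z checks}--\ref{lemma: inputs have logical ops}, the construction uses $M_Z$ and reduced logical $Z$ representatives, and uniqueness identifies the row space of $A_2$ with the $Z$-stabilizer group), the argument goes through and is a clean alternative to the paper's counting proof.
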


\begin{remark}
    Note that the first part of RREF rule is equivalent to the one set in the KLS form, in which the adjacency matrix between the inputs and outputs is in RREF. Additionally, note that the second part of the RREF rule is equivalent to stipulating that no output $X$ node be connected to a lower-numbered $Z$ node. This implies that the second part of the RREF rule is equivalent to the Hadamard rule set in the KLS form, in which outputs without Hadamards cannot connect to lower-numbered output nodes or input nodes.
\end{remark}

An example of the KL form is shown in Figure \ref{fig: KL form example}. The Bipartite rule is made clear by the division of the nodes into columns, with the first and last columns of nodes forming the first group ($X$ nodes) and the middle column forming the second group ($Z$ nodes). The edges satisfy the constraint that the output $X$ nodes only connect to higher-numbered output $Z$ nodes. Lastly, the input $X$ nodes to output $Z$ nodes adjacency matrix is in RREF.

In CSS states, the part in the RREF rule pertaining to input nodes is unnecessary because of the lack of input nodes while the Bipartite and Phase-free rules still apply. As a corollary, we also find the canonical form for CSS states.
\begin{corollary}[Canonical form for CSS states]
\label{canonical of CSS state}
    Any CSS state can be expressed uniquely in ZX calculus under the following rules.
    \begin{itemize}
        \item \textit{Bipartite rule:} The nodes can be split into two groups, one with $X$ nodes and the other with $Z$ nodes. The only interior edges allowed are between nodes of different groups.
        \item \textit{Hadamard rule:} Each $X$ node can only connect to higher-numbered $Z$ nodes.
        \item \textit{Phase-free rule:} All output nodes have phase 0, and there are no local operations on any free edges.
    \end{itemize}
\end{corollary}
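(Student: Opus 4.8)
The plan is to derive Corollary~\ref{canonical of CSS state} directly from Theorem~\ref{CSS code canonical form} by specializing to the case $k=0$. Since a CSS state is by definition a CSS code with no input nodes, the existence and uniqueness of a representative satisfying the three rules of the theorem immediately gives existence and uniqueness of a representative for CSS states; what remains is to check that the three rules of the theorem collapse exactly to the three rules stated in the corollary once the input nodes are deleted. First I would handle the Bipartite rule: with $k=0$ there are no input $X$ nodes, so the first group consists only of the output $X$ nodes and the second only of the output $Z$ nodes, and the prohibition on interior edges within a group is unchanged — this is verbatim the Bipartite rule of the corollary. Next, the Phase-free rule of the theorem says all input and output nodes have phase $0$ and no free edge carries a local operation; deleting the (vacuous) reference to input nodes yields exactly the Phase-free rule of the corollary.

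The one step that needs a genuine (if short) argument is the RREF rule. The theorem's RREF rule has two clauses: one constraining the input-$X$-to-output-$Z$ adjacency matrix, which is vacuous when $k=0$ and simply drops out; and one constraining the adjacency matrix between output $X$ nodes and all output nodes, with each output $X$ node marked as self-connected, to be in RREF. I would invoke the remark following Theorem~\ref{CSS code canonical form}, which states that this second clause is equivalent to the condition that no output $X$ node connects to a lower-numbered output $Z$ node. That is precisely the Hadamard rule of the corollary. For completeness I would spell out why the equivalence holds: number the output $X$ nodes so that their ``self'' columns are the first columns of the augmented adjacency matrix; since each output $X$ node is self-connected and (by the Bipartite rule) not connected to any other output $X$ node, the block of the matrix indexed by $X$-node rows and $X$-node columns is the identity, so those columns are automatically pivot columns; the matrix is then in RREF exactly when, reading along the $Z$-node columns in increasing output index, the leading entries of successive rows move strictly rightward, which after accounting for the already-fixed identity block amounts to: output $X$ node $i$ has no edge to an output $Z$ node of index smaller than the smallest $Z$-index to which a higher-priority $X$ node attaches — and the clean combinatorial form of this, using that the $X$ rows are independent and can be reduced, is simply ``each output $X$ node connects only to higher-numbered output $Z$ nodes.''

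I would then conclude: a ZX diagram represents a given CSS state and satisfies the three corollary rules if and only if, viewed as a ($k=0$) CSS code, it satisfies the three theorem rules; by Theorem~\ref{CSS code canonical form} such a diagram exists and is unique; hence the same holds for CSS states. The main obstacle is entirely bookkeeping in the RREF-to-Hadamard translation — making sure the ordering conventions on the output nodes and the ``marked as self-connected'' convention line up so that the identity block really does sit in the pivot positions and no degenerate case (e.g., an output $X$ node with no $Z$-neighbors at all, or repeated/empty rows) breaks the RREF characterization. Everything else is a direct specialization and carries no real content beyond noting which hypotheses become vacuous.
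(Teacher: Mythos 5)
Your overall strategy—specialize Theorem~\ref{CSS code canonical form} to $k=0$ and drop the vacuous references to input nodes—is exactly the paper's route; the paper dispatches the corollary with a single sentence to that effect and defers the identification of the second RREF clause with the Hadamard rule to the remark following Theorem~\ref{CSS code canonical form}.

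Where you go further, however, the argument has a genuine flaw. You propose to ``number the output $X$ nodes so that their `self' columns are the first columns of the augmented adjacency matrix'' and then conclude those columns ``are automatically pivot columns.'' Neither step is right. The output numbering $1,\dots,n$ is part of the encoder-respecting data and the Hadamard rule is stated relative to it, so you cannot reorder the columns to move the $X$-node self-columns to the front. More importantly, having an identity block in the $X$-row/$X$-column positions does not make those columns pivot columns: the pivot of row $j$ is the leading nonzero entry of that row, and it lands at the self-column $i_j$ precisely when $X$-node $i_j$ has no edge to a lower-indexed $Z$ node---which is the Hadamard condition you are trying to derive. Your argument is therefore circular. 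Concretely, with one output $X$ node at index $2$ joined to a $Z$ node at index $1$ (and a third $Z$ node at index $3$ unconnected), the augmented row is $(1,1,0)$, which is perfectly in RREF with pivot at column $1$, yet violates the Hadamard rule. So the literal RREF condition does not by itself yield the Hadamard rule without an additional stipulation that the self-columns be the pivot columns; the paper's remark asserts that identification, but your sketch does not establish it, and the proposed ``identity block'' reasoning would not survive being written out.
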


Now, we build up to the proof of Theorem \ref{CSS code canonical form}. To do this, we show that each KL form corresponds to a distinct CSS code by counting the number of CSS codes and comparing it to the number of KL diagrams, and we later show that any KL form can be converted into the stabilizer representation of the CSS code. These two steps establish the bijection between the KL forms and CSS codes. We start by finding the number of CSS codes.

\begin{lemma}
\label{lemma: number of CSS codes}
    The number of CSS codes with $n$ physical qubits, $p$ $Z$ stabilizers, and $q$ $X$ stabilizers is
\begin{equation}\frac{\prod\limits_{i=1}^{p} (2^n - 2^{i -1})}{\prod\limits_{i =1}^p(2^p - 2^{i  -1})}\cdot \frac{\prod\limits_{i  =1}^q (2^{n-p} - 2^{i-1})}{\prod\limits_{i=1}^q (2^q - 2^{i-1})}.\label{eq: CSS states}\end{equation}
\end{lemma}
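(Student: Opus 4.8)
The plan is to count CSS codes by counting the two classical linear codes that determine them, using the standard fact that the number of $d$-dimensional subspaces of $\mathbb{F}_2^m$ (the Gaussian binomial coefficient) equals $\prod_{i=1}^{d}(2^m - 2^{i-1}) / \prod_{i=1}^{d}(2^d - 2^{i-1})$. The numerator counts ordered linearly independent $d$-tuples in $\mathbb{F}_2^m$, and the denominator counts ordered bases of a fixed $d$-dimensional subspace, so the quotient counts subspaces. A CSS code on $n$ qubits with $p$ independent $Z$-type stabilizer generators and $q$ independent $X$-type generators corresponds to choosing a $p$-dimensional subspace $C_Z \subseteq \mathbb{F}_2^n$ (the $Z$-stabilizer space) and a $q$-dimensional subspace $C_X$ of the $X$-stabilizers, subject to the commutation constraint that every $X$-generator commutes with every $Z$-generator, i.e. $C_X \subseteq C_Z^{\perp}$.

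First I would set up the correspondence precisely: a $Z$-type Pauli stabilizer on $n$ qubits is specified by its support vector in $\mathbb{F}_2^n$, and the group generated by $p$ independent such stabilizers is an $\mathbb{F}_2$-subspace of dimension $p$; likewise for $X$. So the first factor, $\prod_{i=1}^{p}(2^n - 2^{i-1}) / \prod_{i=1}^{p}(2^p - 2^{i-1})$, is exactly the number of choices of the $Z$-stabilizer subspace $C_Z$ inside $\mathbb{F}_2^n$. Next, having fixed $C_Z$, the commutation requirement (symplectic orthogonality between $X$-type and $Z$-type Paulis reduces to the ordinary dot product being zero) forces the $X$-stabilizer subspace to lie inside the orthogonal complement $C_Z^{\perp}$, which has dimension $n - p$. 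The number of $q$-dimensional subspaces of an $(n-p)$-dimensional space is the second factor, $\prod_{i=1}^{q}(2^{n-p} - 2^{i-1}) / \prod_{i=1}^{q}(2^q - 2^{i-1})$. Multiplying the two independent choices gives \eqref{eq: CSS states}.

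The main steps in order: (1) state and justify the Gaussian-binomial count of $d$-dimensional subspaces of $\mathbb{F}_2^m$ via the ordered-independent-tuples argument; (2) observe that specifying a CSS code (as a stabilizer group whose generators are partitioned into $Z$-type and $X$-type) is the same as specifying the pair of subspaces $(C_Z, C_X)$ with $\dim C_Z = p$, $\dim C_X = q$; (3) note the commutation constraint is automatic among $Z$'s and among $X$'s and between $Z$'s and $X$'s reduces to $C_X \perp C_Z$, hence $C_X \subseteq C_Z^{\perp}$, a space of dimension $n-p$; (4) apply the subspace count to each factor and multiply. I would also remark that different choices of $(C_Z, C_X)$ give genuinely different codes (the stabilizer group, hence the code, is recovered from the pair), so there is no overcounting, and that the formula is independent of the particular generating sets because we are counting subspaces, not generating tuples.

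The main obstacle, and the only place requiring care, is verifying that the commutation condition between the $X$-part and the $Z$-part is exactly ordinary orthogonality in $\mathbb{F}_2^n$ and that it is the \emph{only} constraint — in particular that any $C_X \subseteq C_Z^{\perp}$ is realizable and that no extra conditions (such as $C_Z \cap C_X$ triviality, or non-degeneracy) are being silently imposed here. One should confirm that the lemma is counting all CSS codes with the stated generator counts, allowing the possibility that some resulting code is degenerate; since $C_Z^{\perp}$ always contains $C_Z$ itself when... actually $C_Z \subseteq C_Z^\perp$ need not hold, so one must simply take $C_X$ to be an arbitrary $q$-dimensional subspace of $C_Z^{\perp}$ with no further restriction, and check $q \le n-p$ is the implicit feasibility condition. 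Everything else is the routine Gaussian-binomial bookkeeping, which I would not grind through in detail.
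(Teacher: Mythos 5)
Your proof is correct and follows essentially the same approach as the paper's: both count the number of $p$-dimensional $Z$-stabilizer subspaces of $\mathbb{F}_2^n$ via the Gaussian-binomial (ordered-independent-tuples divided by ordered-bases) argument, and then count $q$-dimensional $X$-stabilizer subspaces constrained to lie in an $(n-p)$-dimensional space, multiplying the two counts. The paper phrases the commutation constraint slightly less structurally --- it simply observes that the number of $X$-type operators commuting with all $Z$-generators is $2^n/2^p = 2^{n-p}$ --- whereas you explicitly name that space $C_Z^{\perp}$, but the underlying counting is identical.
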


\begin{proof}
    Consider choosing the $Z$ stabilizers of the CSS code's stabilizer tableau. Because each stabilizer must be linearly independent from each other, there are $\prod\limits_{i=1}^p(2^n - 2^{i-1})$ ways to choose $p$ independent stabilizers. However, different sets of $p$ generators could represent the same set of stabilizers. For a given set of stabilizers, there are $2^p - 2^{i-1}$ ways to choose the $i$th generator, so we must divide to find $$\frac{\prod\limits_{i=1}^{p} (2^n - 2^{i -1})}{\prod\limits_{i =1}^p(2^p - 2^{i  -1})}$$as the number of ways to choose the set of $Z$ stabilizers.

    Next, each $X$ stabilizer must commute with all of the $Z$ stabilizers. Since the number of $X$ stabilizers that commute or anti-commute with any single $Z$ stabilizer is equal, there are $2^n /2^p = 2^{n-p}$ $X$ stabilizers to choose from. Using a similar analysis as above, there are
    $$\frac{\prod\limits_{i  =1}^q (2^{n-p} - 2^{i-1})}{\prod\limits_{i=1}^q (2^q - 2^{i-1})}$$ways to choose the set of $X$ stabilizers. 

    Multiplying these two counts gives the number of CSS codes with the given parameters, as desired.
\end{proof}

Now, we find the number of KL diagrams with similar parameters. We will show later that the parameters chosen below make the KL forms correspond exactly to the CSS codes with $n$ outputs, $p$ $Z$ stabilizers, and $q$ $X$ stabilizers. %% proof feels incomplete. in particular, why are we considering exactly p red output nodes?

Consider a KL form of a CSS code. Let there be $p$ output $X$ nodes and $n - p  -q = k$ input nodes, so that there are $n - p$ output $Z$ nodes, of which $k$ are pivot nodes in the RREF adjacency matrix between the input nodes and output $Z$ nodes.

First, consider the output nodes. Note that there are $n$ total nodes among the $X$ and $Z$ groups. To count the number of ways to connect edges between the output $X$ and $Z$ nodes in the KL form, note that the Hadamard rule restricts the connections to those from lower-numbered output $X$ nodes to higher-numbered output $Z$ nodes.

\begin{lemma}
    \label{lemma: number of ways to connect green output to red output}
    In the KL form, the number of ways to connect the $n-p$ output $Z$ nodes and $p$ output $X$ nodes is
    $$\frac{\prod\limits_{i=1}^{p} (2^n - 2^{i -1})}{\prod\limits_{i =1}^p(2^p - 2^{i  -1})}.$$
\end{lemma}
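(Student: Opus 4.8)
The plan is to build a bijection between valid output-edge configurations of the KL form and full-rank matrices in reduced row-echelon form over $\mathbb{F}_2$, hence with $p$-dimensional subspaces of $\mathbb{F}_2^n$, whose count is precisely the claimed formula.

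First I would encode a configuration as a $p \times n$ matrix $M$ over $\mathbb{F}_2$: index the rows by the output $X$ nodes in increasing order of label, index the columns by all $n$ output nodes in increasing order of label, and set $M_{a,j}=1$ exactly when output $X$ node $a$ is joined by an edge to output node $j$, using the RREF rule's convention that each output $X$ node is ``joined to itself.'' By the Bipartite rule there are no $X$--$X$ edges, so in the column belonging to an output $X$ node the only nonzero entry is the diagonal self-loop; thus the $p$ columns indexed by the $X$ nodes form, in order, the $p\times p$ identity submatrix. By the Hadamard rule an output $X$ node connects only to higher-numbered output $Z$ nodes, so in each row the leftmost $1$ is the self-loop entry. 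Together these two facts show $M$ is automatically in reduced row-echelon form of full rank $p$, with the $X$-node columns as its pivot columns, so the RREF rule imposes no extra constraint beyond the Bipartite and Hadamard rules (consistent with the Remark following Theorem \ref{CSS code canonical form}).

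Next I would verify this assignment is a bijection onto the set of full-rank $p\times n$ RREF matrices over $\mathbb{F}_2$. It is injective since the configuration can be read back off $M$: the pivot columns record which output nodes are $X$ nodes, and the remaining entries record the edges. For surjectivity, take any full-rank $M'\in\mathbb{F}_2^{p\times n}$ in RREF with pivot columns $j_1<\cdots<j_p$, declare those to be the output $X$ nodes and the rest the output $Z$ nodes. The pivot columns of an RREF matrix are standard basis vectors, matching the self-loop convention and the absence of $X$--$X$ edges; and since row $a$ has all zeros before its pivot $j_a$, every $1$ in a non-pivot column of that row lies at an index exceeding $j_a$, i.e.\ is an edge to a higher-numbered output $Z$ node, exactly as the Hadamard rule demands. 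Hence $M'$ comes from a valid KL configuration.

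Finally, full-rank $p\times n$ RREF matrices over $\mathbb{F}_2$ biject with $p$-dimensional subspaces of $\mathbb{F}_2^n$ (the unique RREF basis of the subspace), and by the same counting argument as in Lemma \ref{lemma: number of CSS codes} — choosing $p$ linearly independent vectors in $\prod_{i=1}^{p}(2^n-2^{i-1})$ ways and dividing by the $\prod_{i=1}^{p}(2^p-2^{i-1})$ ordered bases of each subspace — this count is $\dfrac{\prod_{i=1}^{p}(2^n-2^{i-1})}{\prod_{i=1}^{p}(2^p-2^{i-1})}$. The main obstacle is really just the first paragraph: seeing that the self-loop convention is exactly what forces the $X$-node columns to be pivots, so that counting configurations collapses to counting RREF matrices; the head-on alternative would require separately proving the identity $\sum_{1\le j_1<\cdots<j_p\le n}\prod_{a=1}^{p}2^{\,n-j_a-p+a}=\dfrac{\prod_{i=1}^{p}(2^n-2^{i-1})}{\prod_{i=1}^{p}(2^p-2^{i-1})}$, which the bijective route avoids.
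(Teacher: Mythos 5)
Your argument is correct, and it takes a genuinely different route from the paper's. The paper sets up a two-parameter recursion $f(p,p',q')$ by adding output nodes one at a time (a new $Z$ node contributes a factor $2^{p-p'}$, a new $X$ node contributes nothing), derives the base cases, guesses the closed form $f(p,p',q') = 2^{(p-p')q'}\prod_{i=1}^{p'}(2^{q'+i}-1)/\prod_{i=1}^{p'}(2^i-1)$, verifies it against the recurrence, and then evaluates $f(p,p,n-p)$. You instead observe that the paper's own RREF rule -- that the $p\times n$ adjacency matrix from output $X$ nodes to all output nodes, with self-loops marked, is in RREF -- together with the bipartite structure forces this matrix to be a \emph{full-rank} RREF matrix over $\mathbb{F}_2$ with the $X$-node columns as its pivot columns, and that conversely every full-rank $p\times n$ RREF matrix decodes to a valid KL output configuration (pivot columns pick out the $X$ nodes, remaining entries give the edges). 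Counting full-rank RREF matrices is counting $p$-dimensional subspaces of $\mathbb{F}_2^n$, which is exactly the Gaussian binomial $\frac{\prod_{i=1}^{p}(2^n-2^{i-1})}{\prod_{i=1}^{p}(2^p-2^{i-1})}$. The verification details in your second and third paragraphs are right: RREF zeros out all non-pivot entries left of each pivot and all off-diagonal entries in pivot columns, which is precisely ``no $X$--$X$ edges and no edges from an $X$ node to a lower-numbered $Z$ node.'' What your route buys is conceptual transparency: the recursion's closed form no longer has to be guessed and checked, and the bijection exposes \emph{why} the Gaussian binomial appears -- the output configuration literally is a choice of $p$-dimensional $Z$-stabilizer subspace written in its canonical RREF basis, which dovetails with Lemma \ref{lemma: red nodes have Z checks}. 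The paper's recursive construction is more elementary and self-contained, but it hides this structure. One small caution on phrasing: Theorem \ref{CSS code canonical form} does not list a ``Hadamard rule'' among its three rules; what you invoke under that name is the reformulation of the second part of the RREF rule given in the Remark, so it would be cleaner to cite the Remark directly rather than import the KLS terminology.
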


\begin{proof}
In the proof of this lemma, we only consider the output nodes of the encoder.

Consider constructing ({``building up"}) the ZX diagram. Each time an output node is added, some number of possible connections is possible between the newly added output node and the nodes that have already been placed. The nodes are added in order, so each node has a higher index than all the nodes that came before it.

Let $p'$ be the number of remaining $X$ nodes that need to be added in the ZX diagram. Let $q'$ be the number of remaining $Z$ nodes that need to be added in the ZX diagram. The function $f(p,p',q')$ counts the number of ways to add nodes and edges starting from some arbitrary state that has $p'$ remaining $X$ nodes and $q'$ remaining $Z$ nodes. The total number of $X$ nodes after placing all nodes will be $p$. We find a recursive relation for $f(p,p',q')$.

When $p' = 0 $ and $q' \ne 0$, all of the remaining $Z$ nodes can be added in and connected arbitrarily to the $p$ pivots in $2^{pq'}$ ways, giving
$$f(p,0,q') = 2^{pq'}.$$When $p' \ne 0 $ and $q' = 0$, all of the remaining $X$ nodes can be added in, but they cannot connect to anything since they necessarily are higher-numbered than all of the $Z$ nodes. Therefore,
$$f(p,p',0) = 1.$$

When $p' \ne 0$ and $q' \ne 0$, either a $X$ node is added or a $Z$ node is added (note there are $2^{p-p'}$ ways to connect edges between the new $Z$ node and $p-p'$ current $X$ nodes), giving the recursive equation
$$f(p,p',q') = f(p,p'-1, q') + 2^{p-p'}f(p,p',q'-1).$$

We can check that the following function $f(p,p',q')$ satisfies this recursive equation and the base cases considered above:

$$f(p,p',q') = 2^{(p-p')q'} \frac{\prod\limits_{i = 1}^{p'}(2^{q' + i} - 1)}{\prod\limits_{i = 1}^{p'}(2^i-1)}.$$

The total number of ways to connect the $n-p$ $Z$ and $p$ $X$ nodes must be $f(p, p, n-p)$. Evaluating the above expression with these parameters gives a count of 
$$ \frac{\prod\limits_{i = 1}^{p}(2^{n-p + i} - 1)}{\prod\limits_{i = 1}^{p}(2^i-1)} = \frac{\prod\limits_{i = 1}^{p}(2^{n} - 2^{i-1})}{\prod\limits_{i = 1}^{p}(2^p-2^{i-1})},$$as desired.

\end{proof}

Now, we count the number of ways to form connections between the input nodes and output $Z$ nodes.

\begin{lemma}
    \label{lemma: input to green output connections}
    In the KL form, the number of ways to connect the $k$ input nodes and $n-p$ output $Z$ nodes is
    $$\frac{\prod\limits_{i  =1}^q (2^{n-p} - 2^{i-1})}{\prod\limits_{i=1}^q (2^q - 2^{i-1})}.$$
\end{lemma}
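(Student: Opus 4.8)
We need to count the number of ways to connect $k$ input $X$ nodes to the $n-p$ output $Z$ nodes in the KL form, subject to the RREF rule: the $k \times (n-p)$ adjacency matrix must be in reduced row-echelon form with full row rank $k$ (since the inputs are linearly independent — the encoder is non-degenerate). Recall that $k = n - p - q$, so $n - p - k = q$, and the answer we are told to prove is exactly the count of full-rank $q$-dimensional... no, let me re-read: the target is $\frac{\prod_{i=1}^q (2^{n-p}-2^{i-1})}{\prod_{i=1}^q(2^q - 2^{i-1})}$, which is the Gaussian binomial $\binom{n-p}{q}_2$, the number of $q$-dimensional subspaces of $\mathbb{F}_2^{n-p}$. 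So the plan is to set up a bijection between full-rank-$k$ RREF matrices of size $k \times (n-p)$ and $q$-dimensional subspaces of $\mathbb{F}_2^{n-p}$, where $q = n-p-k$.

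**The bijection.** Here is the approach I would take. A $k \times (n-p)$ binary matrix in RREF with $k$ nonzero (pivot) rows is uniquely determined by its row space, a $k$-dimensional subspace of $\mathbb{F}_2^{n-p}$; conversely every $k$-dimensional subspace has a unique RREF basis matrix. Hence the number of valid input-to-output-$Z$ connection patterns equals the number of $k$-dimensional subspaces of $\mathbb{F}_2^{n-p}$, which is the Gaussian binomial coefficient $\binom{n-p}{k}_2 = \frac{\prod_{i=1}^{k}(2^{n-p}-2^{i-1})}{\prod_{i=1}^{k}(2^k - 2^{i-1})}$. The remaining step is the elementary identity $\binom{n-p}{k}_2 = \binom{n-p}{(n-p)-k}_2 = \binom{n-p}{q}_2$, which matches the claimed formula after substituting $q = n-p-k$. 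I would state this symmetry either by the standard duality of subspaces (a subspace and its dual have complementary dimensions and are in bijection, though over $\mathbb{F}_2$ one should be slightly careful since the duality is via a bilinear form, not literal complementation — but the count is still symmetric) or, more safely, just by a direct algebraic manipulation of the product formula.

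**Carrying it out.** First I would recall that full-rank is forced: since the encoder is non-degenerate, the $k$ input nodes contribute $k$ independent generators, so the input-to-output-$Z$ adjacency matrix has rank exactly $k$. Second, I would invoke the standard fact that RREF gives a canonical representative for each row space, so the number of such matrices is the number of $k$-dimensional subspaces of $\mathbb{F}_2^{n-p}$. Third, I would count those subspaces: there are $\prod_{i=1}^{k}(2^{n-p} - 2^{i-1})$ ordered bases and $\prod_{i=1}^{k}(2^k - 2^{i-1})$ ordered bases of a fixed $k$-space, giving the Gaussian binomial. Fourth, I would verify the algebraic identity
\begin{equation*}
\frac{\prod_{i=1}^{k}(2^{n-p} - 2^{i-1})}{\prod_{i=1}^{k}(2^k - 2^{i-1})} = \frac{\prod_{i=1}^{q}(2^{n-p} - 2^{i-1})}{\prod_{i=1}^{q}(2^q - 2^{i-1})}
\end{equation*}
with $q = n-p-k$, for instance by checking both sides equal $\frac{(2^{n-p}-1)(2^{n-p-1}-1)\cdots(2^{n-p-k+1}-1)}{(2^k-1)(2^{k-1}-1)\cdots(2-1)}$ rewritten symmetrically in $k$ and $q$.

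**Main obstacle.** The only subtle point — and the one I would be most careful about — is the last step: confirming that RREF matrices of a fixed row-rank are counted by the Gaussian binomial and then matching the $k$-indexed product to the $q$-indexed product in the statement. It is easy to write down, but one must make sure the full-rank condition is genuinely justified by non-degeneracy rather than assumed, and that the symmetry $\binom{n-p}{k}_2 = \binom{n-p}{q}_2$ is stated correctly (it is a purely numerical identity about the product formula, not requiring any subspace-duality argument). Once those are pinned down, the lemma follows immediately, and it dovetails with Lemmas \ref{lemma: number of CSS codes} and \ref{lemma: number of ways to connect green output to red output} so that the product of the two KL-diagram counts equals the CSS-code count of Lemma \ref{lemma: number of CSS codes}.
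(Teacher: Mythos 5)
Your proof is correct but takes a genuinely different route from the paper. The paper never invokes the standard fact that full-rank RREF matrices over $\mathbb{F}_2$ are in bijection with subspaces; instead it fuses each input with its pivot into a ``super-node,'' observes that super-nodes may only connect to higher-numbered non-pivot $Z$ outputs (a direct reading of the RREF constraints), and then notes that this is exactly the same combinatorial structure as in Lemma~\ref{lemma: number of ways to connect green output to red output}. The count is therefore $f(k,k,q)$, where $f$ is the recursion already built and solved in that proof, and plugging in gives the stated product directly in the $q$-indexed form. So the paper gets the answer with no extra algebra, at the cost of setting up the super-node correspondence. Your approach substitutes a more standard and conceptually cleaner two-step argument — RREF matrices $\leftrightarrow$ $k$-dimensional subspaces of $\mathbb{F}_2^{n-p}$, hence Gaussian binomial $\binom{n-p}{k}_2$, then the numerical symmetry $\binom{n-p}{k}_2 = \binom{n-p}{q}_2$ — which avoids the recursion entirely but requires you to verify the index-flip identity. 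Both are sound; yours is shorter if one is allowed to cite the RREF-subspace bijection, while the paper's keeps the two lemmas visibly parallel so the final multiplication against Lemma~\ref{lemma: number of CSS codes} is transparent. One small thing worth tightening in your write-up: the full-rank hypothesis is better justified by the RREF rule of the KL form itself (each input is required to have a pivot among the output $Z$ nodes, which forces rank $k$) than by a separate appeal to non-degeneracy; both are true, but the former is what the paper's own structure uses.
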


\begin{proof}
    Now, we have to count the number of ways to connect the input nodes and output $Z$ nodes to form an RREF adjacency matrix between them. Imagine each input node and its corresponding pivot node (in the RREF matrix) as a single \textit{super-node}. Super-nodes may not connect with each other, since pivots may not connect with other inputs, by the RREF rule. Also, the pivot node is the lowest-numbered node among the $Z$ outputs that the input node connects to, so the super-node can only connect with non-pivot nodes that are higher-numbered.

    Therefore, we have $k$ super-nodes and $q = n-p-k$ single output $Z$ nodes, and connections are restricted to those between super-nodes and higher-numbered single output $Z$ nodes. This is equivalent to $f(k, k, q)$ from the proof of Lemma \ref{lemma: number of ways to connect green output to red output}, so the number of ways to connect the input nodes and output $Z$ nodes must be
$$ \frac{\prod\limits_{i = 1}^{k}(2^{q + i} - 1)}{\prod\limits_{i = 1}^{k}(2^i-1)} = \frac{\prod\limits_{i  =1}^q (2^{n-p} - 2^{i-1})}{\prod\limits_{i=1}^q (2^q - 2^{i-1})},$$giving the desired result.

\end{proof}

From Lemmas \ref{lemma: number of ways to connect green output to red output} and \ref{lemma: input to green output connections}, we find the total number of KL forms by multiplying the two expressions found, giving 
$$\frac{\prod\limits_{i=1}^{p} (2^n - 2^{i -1})}{\prod\limits_{i =1}^p(2^p - 2^{i  -1})}\cdot \frac{\prod\limits_{i  =1}^q (2^{n-p} - 2^{i-1})}{\prod\limits_{i=1}^q (2^q - 2^{i-1})}.$$

Since this matches with Lemma \ref{lemma: number of CSS codes}, the number of KL forms is indeed equal to the number of CSS codes, under certain parameters, so each KL form can correspond to a distinct CSS code. Now, we need to show that restricting the CSS codes to $n$ physical qubits, $p$ $Z$ stabilizers, and $q$ $X$ stabilizers is equivalent to restricting the KL forms to $n$ output nodes, $n-p$ output $Z$ nodes, and $p$ output $X$ nodes.

Starting from the KL form of a QECC, we can determine the $Z$ stabilizers, $X$ stabilizers, and logical $Z$ operators of the code.

\begin{lemma}
\label{lemma: red nodes have Z checks}
Each of the output $X$ nodes corresponds to its own linearly independent $Z$ check.
\end{lemma}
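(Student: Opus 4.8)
## Proof Proposal for Lemma \ref{lemma: red nodes have Z checks}

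The plan is to read off the stabilizer directly from the local structure of the KL diagram around a fixed output $X$ node, using the tensor semantics of the ZX nodes. Recall that a Pauli operator $P$ on the $n$ output qubits is a stabilizer of the code exactly when attaching $P$ to the $n$ output legs of the diagram leaves the diagram (as a tensor/state on the inputs and outputs) unchanged up to scalar. For a $Z$-type check, the natural tool is the $\pi$-copy rule (Rule 4 of Definition \ref{def: basic rewrite rules}): a $Z$ node carrying phase $\pi$ attached to one leg of an $X$ node copies through that $X$ node onto all of its other legs, flipping nothing else since the $X$ node here is phase-free. So the strategy is: for a fixed output $X$ node $v$, show that the Pauli operator which acts as $Z$ on output $v$ and on every output $Z$ node adjacent to $v$ (and as $I$ elsewhere) is a stabilizer, and that as $v$ ranges over the $p$ output $X$ nodes these checks are linearly independent.

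First I would set up notation: let $v$ be an output $X$ node with free output leg numbered, say, $j$, and let $S \subseteq \{1,\dots,n\}$ be the set of output $Z$ nodes adjacent to $v$ in the diagram (by the Bipartite rule $v$ has no other neighbors, and by the Phase-free rule $v$ has phase $0$ and no local operations on its free edge, though one must track the Hadamards sitting on the interior edges $v$--$w$ for $w\in S$). Define the Pauli operator $g_v = Z_j \cdot \prod_{w \in S} Z_{i(w)}$, where $i(w)$ is the output index of node $w$. The claim is that $g_v$ is a stabilizer. To verify this, attach a $Z(\pi)$ node to output leg $j$ and $Z(\pi)$ nodes to the output legs of all $w \in S$; I would then push the $Z(\pi)$ on leg $j$ through the $X$ node $v$ via the $\pi$-copy rule. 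It copies onto each interior edge $v$--$w$. Each such interior edge carries a Hadamard (KLS/Bipartite convention), so the copied $Z(\pi)$ meets a Hadamard; using the color-change rule (or the identity $HZ(\pi)H = X(\pi)$) together with the state-copy/$\pi$-copy behavior at the $Z$ node $w$, the $\pi$ phase that arrives at $w$ along the interior edge gets absorbed precisely by the $Z(\pi)$ we placed on $w$'s free output leg — i.e.\ the two $\pi$'s cancel at $w$ — leaving the diagram unchanged. Carefully done, this shows attaching $g_v$ returns the original diagram, so $g_v$ is a stabilizer, and it is of $Z$-type (no $X$ or $Y$ components), hence a $Z$ check.

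Finally, linear independence: the checks $\{g_v\}_{v}$ are indexed by the $p$ output $X$ nodes, and each $g_v$ contains the factor $Z_j$ where $j$ is the free leg of $v$ itself, an index that appears in no other $g_{v'}$ as an "$X$-node-own" leg. More precisely, order the output $X$ nodes and note that $g_v$ acts nontrivially on output $v$, while by the Hadamard/RREF rule the only $Z$ nodes adjacent to $v$ are higher-numbered; so in the $\mathbb{F}_2$ matrix whose rows are the $Z$-supports of the $g_v$'s, restricting to the columns indexed by output $X$ nodes gives (after suitable ordering) an identity block, forcing the rows to be independent. This yields $p$ linearly independent $Z$ checks, one per output $X$ node, as claimed.

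I expect the main obstacle to be the bookkeeping in the push-through step: correctly handling the Hadamards on the interior edges and confirming that the $\pi$ phases arriving at each output $Z$ node $w$ cancel exactly against the placed $Z(\pi)$ on $w$'s free leg, with no leftover phases on $w$ or stray edges generated. This is a routine but careful application of the $\pi$-copy, color-change, and state-copy rules; the cleanest write-up may instead argue at the level of the stabilizer tableau — translating "$X$ node $v$ with neighbors $S$" directly into the row $Z_v \prod_{w\in S} Z_w$ via the standard ZX-to-tableau dictionary used in \cite{KLS, backens2014zx} — and then cite completeness to avoid diagram chasing altogether.
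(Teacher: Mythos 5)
Your overall strategy is the same one the paper uses: fix an output $X$ node $v$, push a $Z(\pi)$ through $v$ via the $\pi$-copy rule, read off $g_v = Z_v\prod_{w\in N(v)} Z_w$ as a stabilizer, and get linear independence from the fact that $Z_v$ lives on a distinct output for each choice of $v$. That part matches, and the linear-independence argument is exactly the paper's.

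There is, however, a concrete technical error in your push-through step. You assert that in the KL form ``each such interior edge carries a Hadamard (KLS/Bipartite convention),'' and then argue the $\pi$ arriving at $w$ cancels against a $Z(\pi)$ on $w$'s free leg. That is not the KL form. The KLS form has all $Z$ nodes and Hadamards on every interior edge; the KL form has alternating $X$/$Z$ colors on the two sides of the bipartition and \emph{no} Hadamards on the interior edges (you can see this in the proofs of Lemma \ref{lemma: X checks for non-pivots} and Proposition \ref{prop: logical x ops}, where an $X$ gate is slid directly along an interior edge into a node of the same color by merge/unmerge, which would be impossible through a Hadamard). If you actually carry out your argument with Hadamards present, the $Z(\pi)$ copied onto an interior edge of $v$ turns into an $X(\pi)$ after the Hadamard; arriving at the $Z$ node $w$, it would $\pi$-copy onto $w$'s other edges including the free output leg, and would cancel only against an $X(\pi)$ there — not a $Z(\pi)$. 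The resulting operator would be $Z_v \prod_{w\in N(v)} X_w$, which has $X$ components and is not a $Z$ check, contradicting the lemma you are trying to prove. The fix is simply to drop the erroneous Hadamards: with no Hadamard on the interior edge, the copied $Z(\pi)$ merges directly with the $Z$ node $w$, giving it a $\pi$ phase which is cancelled by the $Z(\pi)$ placed on $w$'s free leg, so $g_v$ is indeed a pure $Z$ operator. With that one correction your argument is exactly the paper's proof.
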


\begin{proof}
Consider node $i$, which is one of the output $X$ nodes. Then, we can determine a $Z$ stabilizer that includes the operator $Z_i$ by sliding a $Z$ gate through node $i$. By the $\pi$-copy rule from Definition \ref{def: basic rewrite rules},
%% refer to figure with pi-copy rule
this $Z$ gate splits into $Z$ gates on all the other incident edges of node $i$. Each of these $Z$ gates travels down an incident edge and combines with the output $Z$ node at the other end of the edge. Ultimately, this results in a phase of $\pi$ on all the neighbors of node $i$.

% since this is equivalent to placing Z's on all these neighbors....

Since this is equivalent to placing $Z$'s on all the neighbors of node $i$, we have made a $Z$ stabilizer, which is the product of $Z_i$ and the $Z$ gates on the neighboring output $Z$ nodes.

The $Z$ checks of the output $X$ nodes are linearly independent from each other because each $Z$ check contains a $Z$ operator on a distinct output $X$ node.
\end{proof}

\begin{lemma}
\label{lemma: X checks for non-pivots}
    Each of the non-pivot output $Z$ nodes corresponds to its own linearly independent $X$ check.
\end{lemma}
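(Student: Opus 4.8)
The plan is to mirror the argument of Lemma~\ref{lemma: red nodes have Z checks}, but with the roles of $X$ and $Z$ exchanged, working from a non-pivot output $Z$ node instead of an output $X$ node. Concretely, fix a non-pivot output $Z$ node, call it node $j$. I would push an $X$ gate through its free edge into the diagram and track where it goes using the $\pi$-copy rule (with colors interchanged): an $X$ $\pi$ gate slides through an adjacent $Z$ node and copies onto all of that node's other incident edges. Since every internal edge in the KL form carries a Hadamard, I must be careful about color bookkeeping — I would either first use the color-change rule to view the picture in a convenient basis, or equivalently note that an $X$ on an edge with a Hadamard is a $Z$ on the other side. The net effect should be that the $X$ on node $j$ produces $X$-type operators on all neighbors of $j$, which in the KL form are precisely some subset of the input $X$ nodes and output $X$ nodes; combined with $X_j$ itself, this yields an $X$-type stabilizer supported on node $j$ and on nodes from the $X$-group only. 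Because these are genuine stabilizers (the operation is being absorbed into the encoder without changing it), we get a valid $X$ check for each such $j$.

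Next I would argue linear independence. The key observation is that node $j$ is a non-pivot output $Z$ node, and by the Bipartite rule the only nodes in the same group as $j$ are the other output $Z$ nodes, none of which are neighbors of $j$ (no $Z$–$Z$ internal edges). Hence the $X$ check constructed from $j$ contains the operator $X_j$ but no $X$ operator on any other output $Z$ node. Running over all $q = n-p-k$ non-pivot output $Z$ nodes, each resulting $X$ check is supported on a distinct such node (in the sense of containing $X_j$ for its own $j$ and no $X$ on any other non-pivot $Z$ node), so the checks are linearly independent; this also matches the count of $q$ $X$ stabilizers. I would also remark that the pivot output $Z$ nodes do \emph{not} give independent new $X$ checks of this form — pushing an $X$ through a pivot node also drags an $X$ onto its paired input node, so that would be a logical operator rather than a stabilizer — which is exactly why the statement is restricted to non-pivot nodes.

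The main obstacle I anticipate is the color/Hadamard bookkeeping: the KL form has all internal edges Hadamarded and the two groups are $X$-nodes (inputs plus some outputs) versus $Z$-nodes, so the naive ``push a Pauli through'' picture from Lemma~\ref{lemma: red nodes have Z checks} does not transcribe verbatim — I need to confirm that an $X$ pushed from a non-pivot $Z$ node really terminates on $X$-group nodes and does not, say, escape out a free output edge or land on an input in a way that spoils the stabilizer property. Dealing with this cleanly will likely require explicitly invoking the $\pi$-copy rule together with the Hadamard-edge convention (an $X_\pi$ on one end of a blue edge equals a $Z_\pi$ on the other end, hence copies correctly through the $X$-node there), and checking that the resulting operator commutes with all the $Z$ checks from Lemma~\ref{lemma: red nodes have Z checks} — which it should, since both are built by the same sliding moves and the code is CSS so $X$- and $Z$-type checks automatically commute. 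Once the sliding computation is pinned down, independence is immediate from the Bipartite rule as above.
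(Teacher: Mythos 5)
Your proposal correctly sets up the $\pi$-copy computation and correctly identifies the obstacle, but it stops short of resolving it, and that unresolved step is exactly where the real content of the lemma lives. You observe that pushing an $X$ through the non-pivot $Z$ node $j$ deposits $X$-type operators on all of $j$'s neighbors, which include input $X$ nodes. You then flag, as an ``anticipated obstacle,'' that the $X$ might ``land on an input in a way that spoils the stabilizer property'' — and indeed it does: an $X$ gate sitting on an input's free edge is a logical operation, not a Pauli on the physical qubits, so at that point you do not yet have a stabilizer. The paper's proof handles this with a second round of sliding that your proposal omits: for each neighbor of $j$ that is an input $X$ node, the $X_\pi$ is un-merged from the input and pushed along the input–pivot edge to the input's pivot (an output $Z$ node); $\pi$-copying there lands an $X$ on the pivot's free edge and propagates further $X$'s to the pivot's output $X$ neighbors. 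Only after this cleanup is the operator genuinely supported on output qubits.

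This also means your description of the resulting check is wrong as stated. You claim the $X$ check is ``supported on node $j$ and on nodes from the $X$-group only,'' and hence contains ``no $X$ operator on any other output $Z$ node.'' In fact the check generically does carry $X$ gates on pivot output $Z$ nodes (the pivots of whatever inputs neighbor $j$). The linear-independence conclusion survives — each check still has $X_j$ on a unique \emph{non-pivot} output $Z$ node and no $X$ on any other non-pivot — but the reason you give for it is based on an incorrect characterization of the support. To fix the proof you need to (i) actually carry out the input-to-pivot redirection step rather than defer it, and (ii) weaken the independence claim from ``no $X$ on any other output $Z$ node'' to ``no $X$ on any other non-pivot output $Z$ node.'' Your side remark about why pivot $Z$ nodes do not yield $X$ checks (they drag $X$ onto the paired input, giving a logical operator) is a correct and useful intuition, consistent with the paper's treatment of logical $X$ operators in Proposition~\ref{prop: logical x ops}.
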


\begin{proof}
    Consider node $i$, which is a non-pivot output $Z$ node. We can determine an $X$ stabilizer by sliding an $X$ gate through this node. By the $\pi$-copy rule, this $X$ gate splits into $X$ gates on all the other incident edges of node $i$. This results in a phase of $\pi$ on all the neighbors of node $i$. For each of the neighbors that are input nodes, an $X$ node of phase $\pi$ can be un-merged (see Definition \ref{def: basic rewrite rules} for the merging/un-merging rule) from the input $X$ node and slid towards the input's corresponding pivot node. By the $\pi$-copy rule, all other incident edges, including the free output edge, of the pivot node get a $X$ $\pi$ node. Then, the output $X$ neighbors of the pivot node receive additional phases of $\pi$.

    Since this is equivalent to placing $X$'s on all the pivots of input nodes connected to $i$ and placing $X$'s on all the output $X$ nodes that end up with phase $\pi$, we have made an $X$ stabilizer, which is the product of the $X$ gates just described and $X_i$.

    The $X$ checks of the non-pivot output $Z$ nodes are linearly independent from each other because each $X$ check contains an $X$ operator on a distinct non-pivot output $Z$ node.
    
\end{proof}

The \textit{logical operator} of a code maps an element of the codespace onto another element of the codespace. We denote the logical $Z$ operators as $Z_{L}$ and the logical $X$ operators as $X_L$. The $Z$ and $X$ stabilizers can be used to determine all the logical operators of a CSS code. Analogously, the $Z$ stabilizers and logical $Z$ operators of a CSS code can be used to determine all the $X$ stabilizers.

\begin{lemma}
\label{lemma: inputs have logical ops}
The adjacencies of the inputs determine the logical $Z$ operators.
\end{lemma}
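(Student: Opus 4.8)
The plan is to read off, for each input node, an explicit $Z$-type operator from that node's neighborhood, show it implements the logical $Z$ of the corresponding logical qubit, and conclude that the $k$ such operators form a complete independent set. Recall that in encoder-respecting form (Definition \ref{encoder-respecting form}) the free edge of the $i$-th input node is the $i$-th logical wire, so a logical $Z$ operator for logical qubit $i$ is obtained by attaching a $Z$ ($\pi$) node to that free edge and rewriting the diagram until the node has been pushed entirely onto the output free edges. By the Bipartite and Phase-free rules of Theorem \ref{CSS code canonical form}, input node $i$ is a phase-$0$ $X$ node whose only interior edges go to output $Z$ nodes, which is exactly the structure that makes this push-through clean.

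The first step is to push the $Z$ $\pi$ node into the input $X$ node and apply the $\pi$-copy rule (Definition \ref{def: basic rewrite rules}): the $Z$ $\pi$ node copies onto every other incident edge of node $i$, i.e., onto the edge toward each output $Z$ neighbor, while the phase-$0$ $X$ node is left unchanged. The second step is local cleanup: each copied $Z$ $\pi$ node sits on an edge incident to an output $Z$ node, so I merge it into that node and then un-merge it back out along the node's free output edge, turning it into a $Z$ gate on the corresponding physical qubit. The net effect is that logical $Z$ on qubit $i$ is realized as $\prod_{j} Z_j^{A_{ij}}$, where $A$ is the adjacency matrix between input nodes and output $Z$ nodes, i.e. precisely the ``adjacencies of the inputs''; this establishes the claimed formula.

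It then remains to confirm that these operators form a complete, independent set of logical $Z$ operators. Independence follows from the RREF rule: $A$ is in reduced row-echelon form, so its $k$ rows are linearly independent, hence so are the operators $\prod_{j} Z_j^{A_{ij}}$, each input's pivot output $Z$ node being hit by exactly one of them. That none of them lies in the stabilizer group follows from a support argument using Lemma \ref{lemma: red nodes have Z checks} and Lemma \ref{lemma: X checks for non-pivots}: the $Z$ stabilizers each carry a $Z$ on a distinct output $X$ qubit and the $X$ stabilizers carry $X$-type content, whereas $\prod_{j} Z_j^{A_{ij}}$ is purely $Z$-type and supported only on output $Z$ qubits, so it cannot be a product of stabilizers. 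Commutation with the stabilizers is automatic against the $Z$ stabilizers, and against the $X$ stabilizers it is inherited from the rewrite equivalence above, since pushing $Z$ on input wire $i$ through the encoder is by construction a logical operation acting as $Z$ on logical qubit $i$; as the code has exactly $k$ logical qubits, the $k$ operators are a full generating set.

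The main obstacle I anticipate is the bookkeeping in the push-through step: ensuring the merge/un-merge moves route every generated $Z$ $\pi$ onto the intended free output edge without leaving stray phases on interior nodes, and — if one wants a self-contained verification instead of appealing to encoder semantics — checking directly that each $\prod_{j} Z_j^{A_{ij}}$ commutes with every $X$ stabilizer of Lemma \ref{lemma: X checks for non-pivots}, which reduces to showing the relevant overlap parities vanish using the pivot/RREF structure of the two adjacency matrices.
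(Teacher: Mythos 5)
Your proof is correct and follows essentially the same approach as the paper: slide a $Z$ gate through input node $i$, invoke the $\pi$-copy rule to deposit $Z$ gates on the neighboring output $Z$ nodes, identify the resulting operator as $\prod_j Z_j^{A_{ij}}$, and deduce linear independence from the fact that each input has a distinct pivot in the RREF adjacency matrix. The extra checks you add at the end (that the operators are not products of stabilizers and that they commute with the $X$ stabilizers) are not part of the paper's proof of this lemma — the paper defers that bookkeeping to a later remark and to the dimension count in the main theorem — but they are sound and, if anything, make the argument more self-contained.
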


\begin{proof}
Consider node $i$, which is one of the input nodes. Similar to Lemma \ref{lemma: red nodes have Z checks}, we can determine a logical $Z$ operator by sliding a $Z$ gate through node $i$.

Because all of the input nodes are $X$ nodes, the $Z$ gate will split into $Z$ gates onto all the incident edges of node $i$. Ultimately, all the neighbors of node $i$ will have a phase of $\pi$ due to $Z_{L,i}$. 
%% introduce and clarify notation of Z logical operator
% explain to reader the expression of codes as Z checks and logical operators

Since this is equivalent to placing $Z$'s on all these neighboring nodes of node $i$, we have made a logical $Z$ operator, which is the product of the $Z$ gates on the neighboring output $Z$ nodes.

Note that, in the adjacency matrix between the input nodes and output $Z$ nodes, each input node has a corresponding pivot. Therefore, the logical $Z$ operators determined by the input nodes will be linearly independent from each other because each has a $Z$ gate on a distinct element (i.e. the pivot node) in the set of output $Z$ nodes.
\end{proof}

We are now ready to prove our main result.

\begin{proof}[Proof of Theorem 3.2]

For a QECC with $n$ physical qubits, the number of stabilizers and logical operators adds up to $n$. Considering the KL form of a CSS code with $p$ red output nodes and $n-k$ input nodes, we find, from Lemmas \ref{lemma: red nodes have Z checks}, \ref{lemma: X checks for non-pivots}, and \ref{lemma: inputs have logical ops}, that there are $p$ $Z$ stabilizers, $q$ $X$ stabilizers, and $k = n-p-q$ logical $Z$ operators. Since these three numbers add up to $n$, all stabilizers and logical operators are accounted for by these three lemmas. This means that the KL forms with $p$ output $X$ nodes, $n-p$ output $Z$ nodes, and $k$ input nodes correspond exactly to the CSS codes with $n$ physical qubits, $p$ $Z$ stabilizers, and $q$ $X$ stabilizers.

From Lemmas \ref{lemma: red nodes have Z checks} and \ref{lemma: X checks for non-pivots}, we also see there is a clear way to convert from a KL form into a representation of the CSS code entirely in terms of its $Z$ stabilizers and $X$ stabilizers.

Then, because KL forms can be converted into a stabilizer representation and the number of KL forms is equal to the number of CSS codes of analogous parameters, it follows that there is a bijection between CSS codes with $n$ physical qubits, $p$ $Z$ stabilizers, and $q$ $X$ stabilizers and KL forms with $n$ output nodes, of which $p$ are red output nodes and $n-p$ are green output nodes, finishing the proof for Theorem \ref{CSS code canonical form}.
\end{proof}

This construction allows us to prove several propositions.
First, we complete correspondence between the KL form and the stabilizer tableau.
Lemmas \ref{lemma: red nodes have Z checks}, \ref{lemma: X checks for non-pivots}, and \ref{lemma: inputs have logical ops} explain how to construct both
types of stabilizers and $Z$ logical operators.
Although this is enough to determine the $X$ logical
operators, we can also use the following.

\begin{proposition}\label{prop: logical x ops}
   The adjacencies of the pivots determine the logical $X$ operators.
\end{proposition}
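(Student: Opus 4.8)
The plan is to mirror the structure of Lemma~\ref{lemma: inputs have logical ops}, but now pushing an $X$ gate through a pivot output $Z$ node rather than a $Z$ gate through an input node. Concretely, let node $i$ be one of the pivot output $Z$ nodes, with corresponding input node $j$ (the input whose RREF row has its leading $1$ in column $i$). I would slide an $X$ gate through node $i$; since $i$ is a $Z$ node, the $\pi$-copy rule spreads a phase of $\pi$ onto every neighbor of $i$ along its incident edges, including the edge to the input node $j$ and any edges to output $X$ nodes. The phase-$\pi$ $X$-node that lands on the input $X$ node $j$ can be absorbed there; the remaining phase-$\pi$ $X$-nodes on the output $X$ neighbors of $i$ get absorbed into those nodes. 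The claim is that the resulting operator is a logical $X$ operator $X_{L,j}$ supported on the physical qubit $i$, on the output $X$ neighbors of $i$, and (after commuting through) possibly on further qubits — and that these operators, one per pivot, are linearly independent because each contains an $X$ on a distinct pivot qubit $i$, which appears in no $X$-stabilizer (the $X$ stabilizers from Lemma~\ref{lemma: X checks for non-pivots} are indexed by \emph{non-pivot} $Z$ nodes) and in no other pivot's logical operator.

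The key steps, in order: (1) fix a pivot output $Z$ node $i$ and its input $j$; (2) apply the $\pi$-copy rule to an $X$ gate inserted on the free edge of $i$, producing phase-$\pi$ $X$-nodes on all neighbors of $i$; (3) absorb these into the neighboring $X$ nodes (input and output $X$ nodes), observing that this is the ZX-diagrammatic statement that we have composed the encoder with a Pauli operator without changing the diagram up to the inserted $X$ on wire $i$, hence the operator we read off is a genuine symmetry of the codespace, i.e. either a stabilizer or a logical operator; (4) argue it is a \emph{logical} operator, not a stabilizer, by checking it anticommutes with the $Z$ logical operator $Z_{L,j}$ from Lemma~\ref{lemma: inputs have logical ops} — indeed $Z_{L,j}$ has support including the pivot $i$ (its defining property) while our candidate $X_{L,j}$ has an $X$ on qubit $i$, and one verifies the overlap on the rest of the support is even, so the commutator is exactly the single anticommuting pair at qubit $i$; (5) conclude linear independence of the $k$ operators so produced from the fact that the pivot qubits are distinct and each such qubit is acted on by $X$ in exactly one of them, and carries no $X$ in any $X$-stabilizer.

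I would also note the counting sanity check: there are exactly $k$ pivots, matching the number of logical $X$ operators needed, so together with the $q$ $X$-stabilizers (Lemma~\ref{lemma: X checks for non-pivots}) they span the full space of $X$-type operators commuting with all $Z$-stabilizers, modulo $Z$-stabilizers — consistent with the CSS structure established in the proof of Theorem~\ref{CSS code canonical form}.

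The main obstacle I anticipate is step~(4), the bookkeeping that shows the constructed operator anticommutes with the matching $Z_{L,j}$ and commutes with all $Z$-stabilizers and with the other $Z_{L,j'}$ — equivalently, that the symplectic inner products work out. This reduces to a parity computation over the RREF adjacency matrices: the support of $X_{L,j}$ on output $Z$ nodes is the $i$-th row of the relevant adjacency data, the support of each $Z$-type operator on those same nodes is governed by the $\pi$-copy propagation in Lemmas~\ref{lemma: red nodes have Z checks} and~\ref{lemma: inputs have logical ops}, and the RREF property (pivot columns are standard basis vectors) is exactly what forces all the unwanted overlaps to vanish and isolates the single anticommuting pair. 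Making this parity argument clean — rather than a case analysis over node types — is the part that needs care.
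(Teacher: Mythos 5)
Your core computation matches the paper's: both proofs hinge on $\pi$-copying through the pivot $Z$ node. The only real difference is direction. The paper inserts the $X$ gate on input $j$'s free edge, merges it with the input $X$ node, un-merges it onto the input--pivot edge, and then $\pi$-copies through the pivot so that $X$ gates land on the pivot's free edge and on the output $X$ neighbors. Running the rewrite from the input side makes your steps (3) and (4) unnecessary: an $X$ gate on input $j$'s free edge \emph{is} the logical $X$ applied before encoding, so the ZX rewrite directly yields the corresponding physical Pauli --- there is nothing to verify about whether the result is a logical operator versus a stabilizer, and no anticommutator bookkeeping is needed. Your ``possibly on further qubits'' hedge is also unwarranted: by the bipartite rule and the RREF constraint, the pivot's only neighbors besides input $j$ are output $X$ nodes, and the $\pi$-copied $X$ gates merge into those nodes and un-merge along their free output edges, so the support is exactly the pivot $i$ together with its output $X$ neighbors, and nothing more.

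One factual slip in your independence argument: pivot qubits \emph{can} appear in $X$-stabilizers. In the proof of Lemma~\ref{lemma: X checks for non-pivots}, when a non-pivot $Z$ node is adjacent to some input $j$, the propagation continues through $j$ and deposits an $X$ on $j$'s pivot (and on that pivot's output $X$ neighbors). So ``appears in no $X$-stabilizer'' is false. The paper's simpler argument --- each $X_{L,j}$ carries an $X$ on a distinct pivot --- already gives linear independence of the logical $X$'s among themselves, which is all the proposition claims. If you want the stronger statement of independence modulo the stabilizer group, the clean invariant is non-pivot $Z$ support: every $X_{L,j}$ has none, while every non-trivial product of $X$-stabilizers acts on at least one non-pivot $Z$ node.
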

\begin{proof}
    Consider node $i$, which is one of the input
nodes. We can determine a logical $X$ operator by sliding a $X$ gate through node $i$.
Since all inputs nodes are $X$ nodes and the $X$ gate is an $X$ node with a phase of $\pi$, we can merge and unmerge it with the input node to move it along the edge connecting the input to its pivot, an output $Z$ node.
Then, we use the $\pi$-copy rule to turn the $X$ gate
into $X$ gates on each other edge connected to the pivot.
Note that this means that an $X$ gate will be placed on the pivot's free edge as well as each other internal edge connected to the pivot.
Since the pivot node cannot be connected to any inputs other than $i$ and the graph is bipartite, this means that these $X$ gates can be moved towards the output $X$ nodes and then merged and unmerged along each of those outputs' free edges.

Note that, in the adjacency matrix between the
input nodes and output $Z$ nodes, each input node
has a corresponding pivot. Therefore, the logical $X$
operators determined by the input nodes will be linearly independent from each other because each has
an $X$ gate on a distinct element (i.e. the pivot node).
\end{proof}

% new proposition
% the non-pivot green nodes corresponds to its own linearly independent X check
% each has its own unique non-pivot green node
%this is done -Andrey

%% adding together number of Z stabilizers, X stabilizers, and Z logical operators gives n, so these do correspond done

%% another thing to prove: X and Z checks commute after pulling them from diagram (overlap in even number of places)

\begin{remark}
    The stabilizers of a CSS code need to commute with each other and with each logical operation, while the logical $Z$ and $X$ operations on a single input should anti-commute.
    From \ref{lemma: inputs have logical ops} and \ref{prop: logical x ops} we see that logical $Z$ and $X$ operations can only overlap on pivot nodes, and since each only has a local operation on the pivot node corresponding to the input of the logical operator, the only anti-commuting operations will be logical $Z$ and $X$ operations on the same input.

    For stabilizer commutation, we see from \ref{lemma: red nodes have Z checks} and \ref{lemma: X checks for non-pivots} that a $Z$ and $X$ stabilizer can only intersect at an $X$ output node or at its $Z$ output neighbours, as those are the elements of a $Z$ stabilizer.
    Each of the $Z$ output nodes present in the $Z$ stabilizer will also place a term on the neighbouring $X$ output node.
    This will make the parity of total overlapping nodes even, making sure the stabilizers commute.

    The same argument as above shows why $Z$ stabilizers commute with logical $X$ operations. Since $X$ stabilizers only have gates on $Z$ output nodes at the corresponding non-pivot node and on several pivot nodes, a logical $Z$ operation will either not intersect it or intersect the stabilizer exactly twice, at a pivot and a non-pivot connected to the input of the logical $Z$.
    Either way, this proves that the stabilizers commute with the logical operations and each other.
\end{remark}

We may consider CSS codes in terms of only its $Z$ stabilizers and logical $Z$ operators, which is equivalent to the usual representation of the codes in terms of their $Z$ stabilizers and $X$ stabilizers. From Lemmas \ref{lemma: red nodes have Z checks} and \ref{lemma: inputs have logical ops}, the $Z$ stabilizers and logical $Z$ operators can be found directly from the KL form by looking at the connections to output $Z$ nodes and input nodes, respectively. 

%Note that, if we write the generators of the $Z$-l

The above construction allows us to easily transform a KL diagram into a stabilizer tableau.
The reverse is also easy to accomplish by applying the following procedure.
If we start with a stabilizer code, we first compute its $Z$ stabilizers and logical $Z$ operations.
We then row-reduce the list of $Z$ stabilizers, identifying the pivot columns.
We create $n$ output nodes, with an $X$ node for each pivot of the row-reduced $Z$ stabilizers and a $Z$ node for each non-pivot.
Note that these are not the same as the pivot and non-pivot nodes in the context of the adjacencies from the input nodes.
Each output node is connected to a free edge and the connections between the $Z$ and $X$ output nodes are made in accordance with the operations in the row-reduced list of $Z$ stabilizers.
The list of logical $Z$ operations is then transformed by the stabilizer operations until none of the logical operations have any terms on nodes which are pivots of the stabilizers.
The remaining logical operations will only have terms on the $Z$ output nodes, at which point they can be row-reduced and connected to input $X$ nodes, using the row-reduced matrix as an adjacency matrix.
This completes the construction of a KL form from a stabilizer tableau.

In some cases, it is possible to further reduce the number of nodes in a diagram in its canonical KLS form. Some of these simplifications are also relevant to the KL form.

\begin{definition}
    The \textit{reduced KLS form} can be found by completing the following steps:
    \begin{enumerate}
        \item We first remove any output nodes of phase zero and degree 2 using the identity removal rule.
        \item If an output node now has more than one free output edge, is only connected to an input node, and the input node is only connected to this output node, the Hadamards on the output node's free edges can be manipulated so that the topmost free edge of the node has no Hadamard.
        \item Any input nodes of phase zero and degree 2 may be removed.
    \end{enumerate}
\end{definition}

The second step to the reduced KLS form can be done by using the Hadamard-sliding rule (see the definition \ref{definition: derived rewrite rules}) on the output node and a connected input node. Unitary operations that appear on the input node. The output node's free edges all get an additional Hadamard gate due to the rewrite rule, causing the toggling of Hadamards on all the free edges. Out of the two possible Hadamard configurations, one configuration results in the topmost free edge having no Hadamard.

These reduced forms allow for further (although limited) reduction of the number of nodes in the KLS form of a Clifford code.

%Then, starting from the logical $Z$ operators, 

\section{Toric and Surface codes}
\label{section:surfacetoric}

As an example of the CSS codes explored in the previous section, we now consider toric codes, a specific class of surface codes with periodic boundary conditions, as introduced in \cite{kitaev2003fault} and certain surface codes, as explained in \cite{kissinger2022phase}. We first construct a symmetrical form for the toric code starting from Kissinger's \textit{ZX normal form}, a representation of CSS codes in ZX calculus that clearly shows stabilizers using internal nodes representing quantum measurements and shows logical operators using connections from input nodes to output nodes. The symmetrical form reduces the number of nodes, and the KL canonical form can be quickly derived starting from this new form. We show a similar symmetrical form for some square surface codes with an odd number of output nodes.

\begin{figure}[h]

\begin{center}
\includegraphics[scale=0.33]{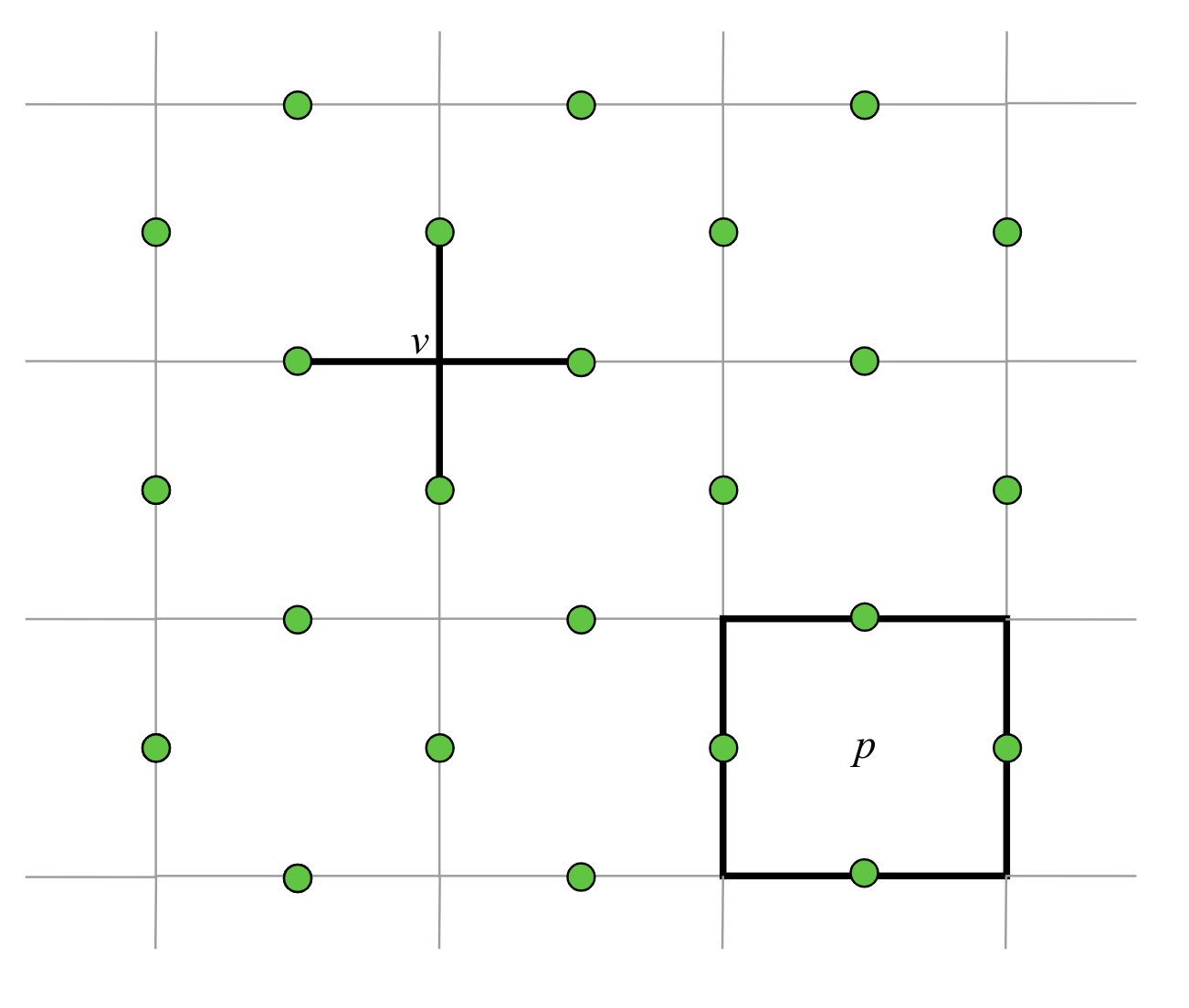}
\end{center}

    \caption{\justifying A section of the torus after placing it onto a 2-dimensional plane. The stabilizers corresponding to the vertices (\textit{v} is an example) have $X$ gates on the nodes immediately surrounding the vertex. The stabilizers corresponding to the plaquettes ($p$ is an example) have $Z$ gates on the nodes immediately surrounding the plaquette. All nodes have a default green color.}

    \label{generic toric code diagram}

\end{figure}

\begin{figure}
    \centering
    \includegraphics[scale=0.6]{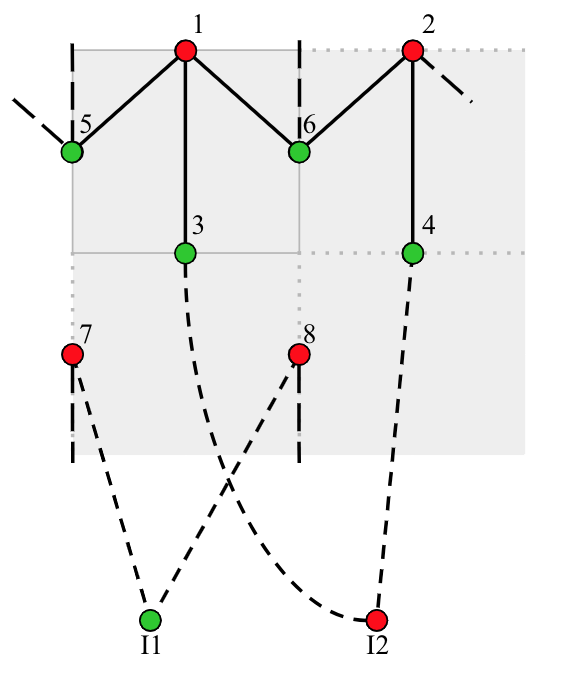}
    
    \caption{\justifying The 2-by-2 toric code in ZX calculus. Longer-dashed edges represent edges that wrap around the torus. For example, the vertical dashed edge coming from node 5 meets node 7 and the dashed edge from node 2 meets node 5. The shorter-dashed edges are input-output edges. The free input and output edges are not shown.}
    \label{final 2 by 2}
    
\end{figure}
\begin{figure}
    \centering
    \includegraphics[scale=0.55]{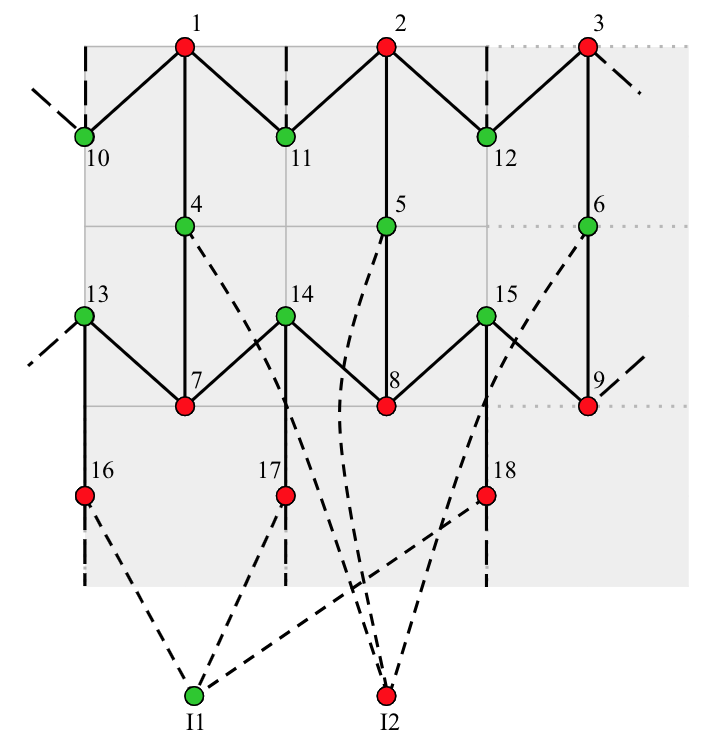}
    \caption{\justifying The 3-by-3 toric code in ZX calculus. Longer-dashed edges represent edges that wrap around the torus. For example, the vertical dashed edge coming from node 10 meets node 16 and the dashed edge from node 3 meets node 10. The shorter-dashed edges are input-output edges. The free input and output edges are not shown.}
    \label{final 3 by 3}
\end{figure}

\begin{figure*}
\subfloat[{\justifying The ZX normal form of the 4-by-4 toric code. The output nodes, which are all the $Z$ nodes, are shown with free edges protruding from them. The $X$ nodes in the toric grid are internal nodes.}]{\includegraphics[scale=0.30]{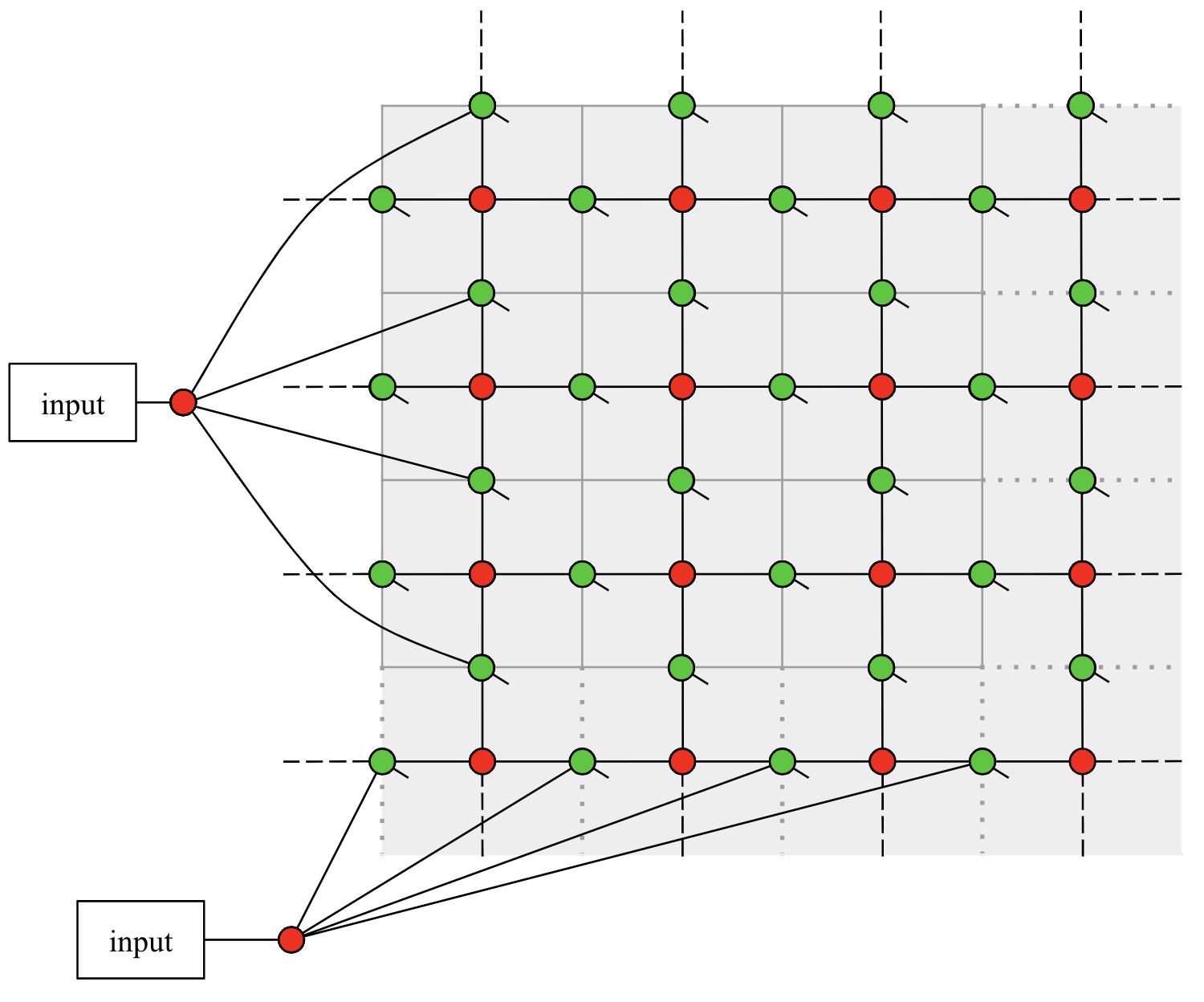}}
\hfill
\subfloat[{\justifying A symmetrical form of the 4-by-4 toric code. There is a symmetry between the $X$ and $Z$ nodes. Though this is not in KL form, it is easily converted into the KL form.}]{\includegraphics[scale=0.36]{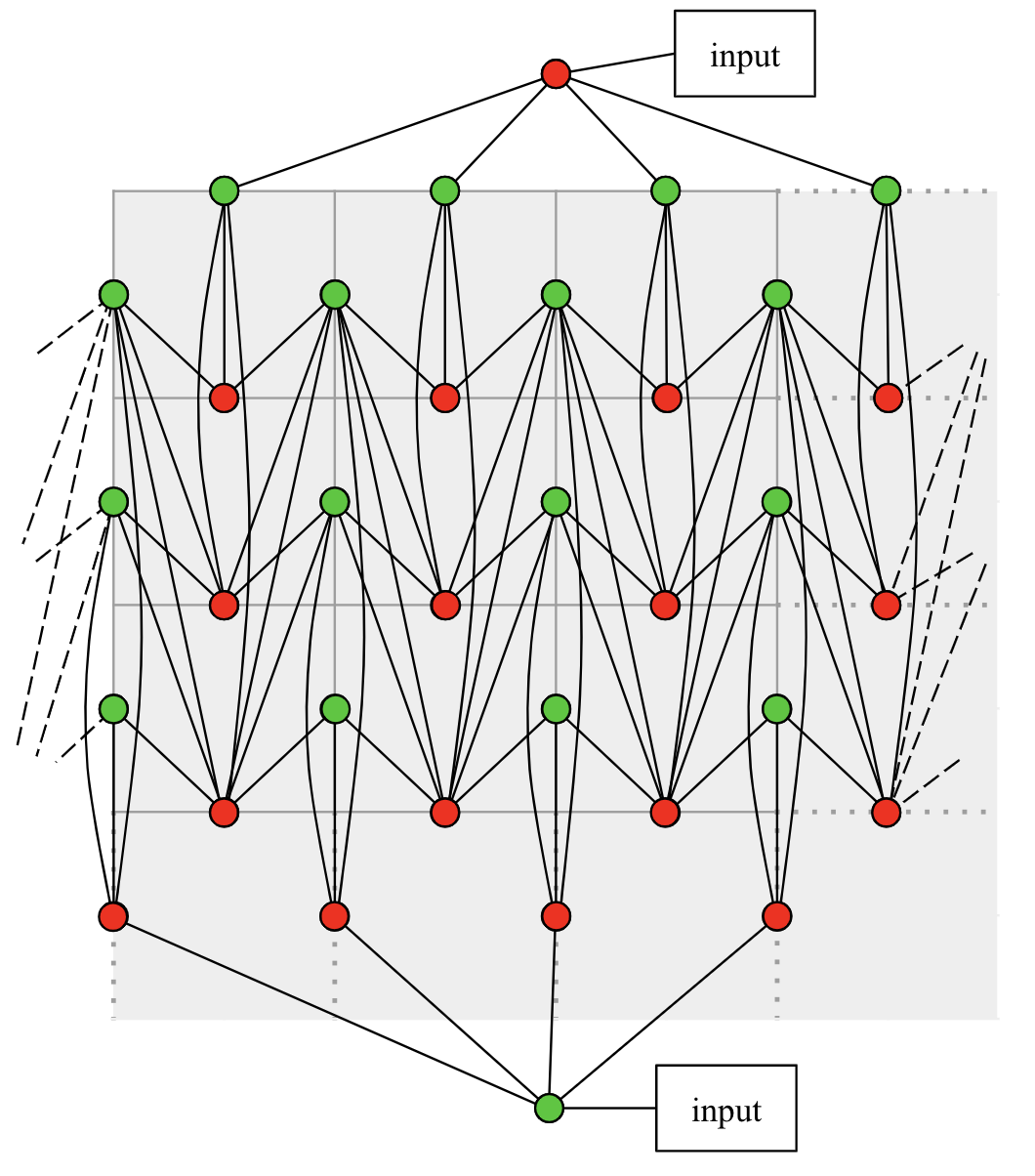}}
%as derived by us

\caption{\justifying The 4-by-4 toric code, shown in two equivalent ZX diagrams. Note that, while the ZX normal form is local in both the vertical and horizontal directions, it has $4^2 = 16$ more nodes than the symmetrical form in (b).}
% trade-off... locality, sets of stabilizers
\label{4-by-4}
\end{figure*}

We begin with a definition of toric codes.

\begin{definition}
\label{defn: toric code}
    A \textit{toric code} is a quantum error-correcting code that can be represented on a three-dimensional torus $\mathcal{T}$. For an $m \times n$ toric code, $\mathcal{T}$ is wrapped by $m-1$ circles parallel to the plane of the major circle and $n-1$ circles perpendicular to the plane of the major circle. 

    A node is placed at the midpoint of each of the $2mn$ edges on $\mathcal{T}$. The stabilizers are defined as follows.
    
    The four nodes surrounding each of the $mn$ four-sided faces form a $Z$ check (stabilizer with only $Z$'s and $I$'s) consisting of $Z$'s on these four nodes and $I$'s on all other nodes.

    The four nodes surrounding each of the $mn$ intersections form an $X$ check (stabilizer with only $X$'s and $I$'s) consisting of $X$'s on these four nodes and $I$'s on all other nodes.

    An illustration of these stabilizers are shown in Figure \ref{generic toric code diagram}.
\label{def:thing}
\end{definition}

We now determine the structure of the symmetrical forms of the 2-by-2 and 3-by-3 toric code's ZX diagram. We do this by using the stabilizers to deduce the output edges with Hadamards and internal edges between output nodes.

We present the ZX calculus form of the 2-by-2 toric code in \Cref{final 2 by 2}. This has an arrow-like structure among the output-output edges, as seen by the group of nodes 1, 3, 5, and 6, as well as nodes 2, 4, 5, and 6. 

Also, the resulting 3-by-3 toric code is shown in \Cref{final 3 by 3}, with its full derivation given in Appendix \ref{appendix sec: construct 3-by-3 toric}. Note that, by wrapping this pattern around a torus, it would be horizontally periodic. The edges among vertices 1, 4, 10, and 11 form an upward-arrow-like structure. Similarly, nodes 2, 5, 11, and 12 form this structure and, on a torus, nodes 3, 6, 10, and 12 do so as well. By the simplicity of this diagram, it is relatively easy to read off the stabilizers by placing gates on the free output edges and seeing how the $\pi$-copy rule affects the diagram.

%The KLS form specifies labeling the vertices, which would bias the resulting canonical form into a strict ordering of vertices. On the other hand, the symmetric forms shown in \Cref{final 2 by 2} and \Cref{final 3 by 3} ignore a strict labeling of vertices; the numbers inside the nodes may be removed at will. Instead, these highly symmetrized versions of the KLS form clearly show the structures formed by the nodes and edges to satisfy the stabilizers.

\begin{figure*}
    \centering
    \includegraphics[scale=0.4]{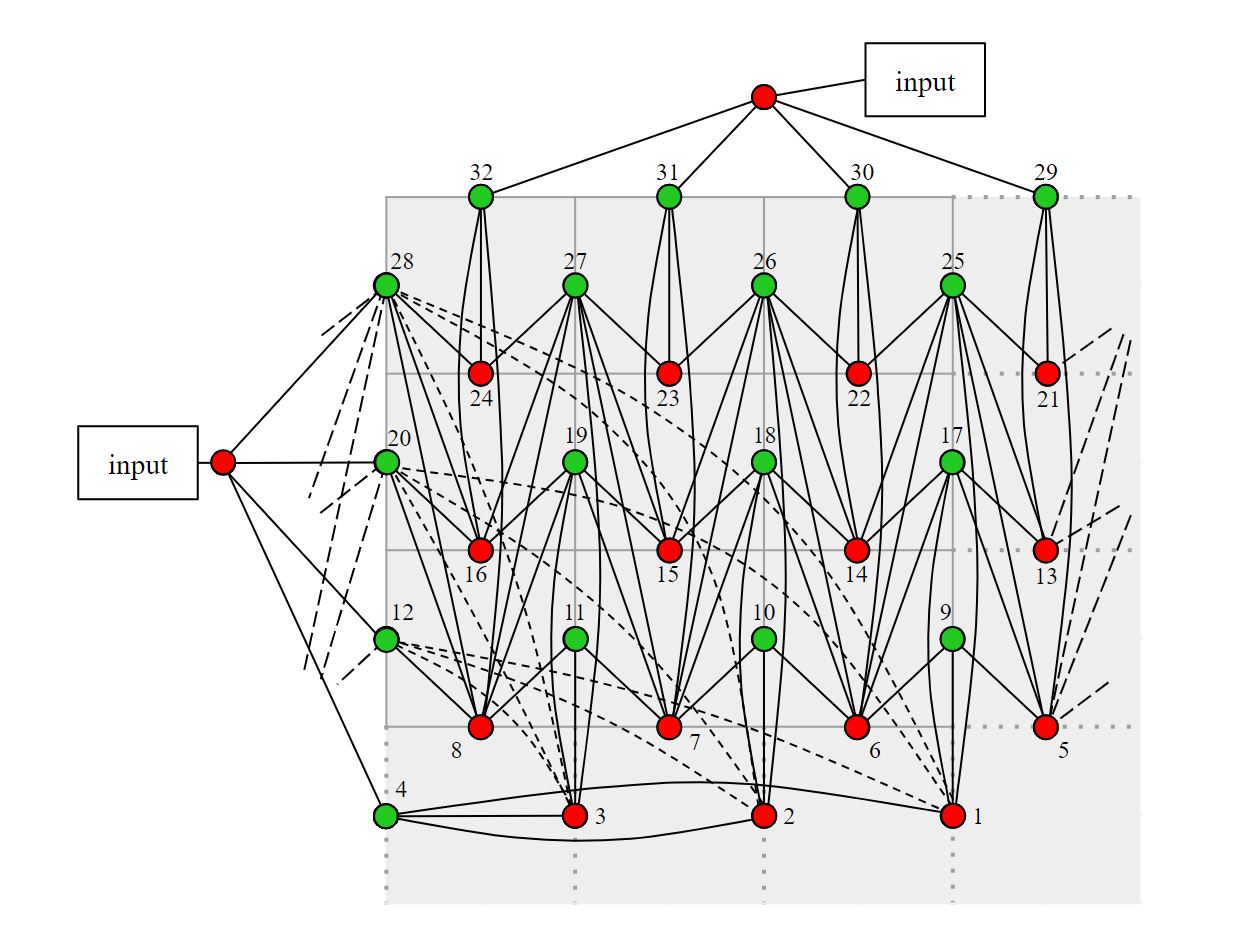}

    \caption{\justifying The 4-by-4 toric code in KL form. Long-dashed edges on the left and right edges of the grid wrap around the torus. The curved edges between nodes 1, 2, 3 and nodes 12, 20, 28 are dashed for clarity. The free output edges are not shown.}
    \label{fig: 4-by-4 toric code KL form}
\end{figure*}

For larger toric codes and the surface codes, we use the ZX calculus software \textit{Quantomatic} \cite{quantomatic} to simplify the known ZX normal form \cite{kissinger2022phase} of a code into its canonical form. In our algorithm, the main focus is on performing the bialgebra rule on internal nodes, so that, after running the first part of the algorithm, all internal nodes will be removed from the diagram, leaving only input and output nodes. Then, the Hadamard sliding rule, from Definition \ref{definition: derived rewrite rules}, will provide the operation that can repeatedly moves Hadamards until the encoder diagram is symmetric.

The procedure we follow may be written as the following algorithm.
\begin{enumerate}
\item Use basic simplifications, by merging nodes, applying the state-copy rule, applying the Hopf rule, removing scalars, removing loops, or combining two Hadamards into the identity (see Definition \ref{def: basic rewrite rules} and \ref{definition: derived rewrite rules}).
\item Apply one iteration of the bialgebra rule (see \ref{def: basic rewrite rules}) that removes an internal node. Then, apply step 1 again.
\item Apply step 2 until all internal nodes are removed.
\item Apply the Hadamard-sliding rule (see Definition \ref{definition: derived rewrite rules}) until the colors of the nodes are (mostly) alternating. (Note: In the toric code, it turns out that it is impossible for the colors to alternate every row, but the main section of nodes have alternating colors every row.)
\end{enumerate}

The reason step 2 works is that internal nodes are absorbed into neighboring nodes in the bialgebra rule (see Definition \ref{def: basic rewrite rules}).

%In the diagrams for the 4-by-4 toric code (\Cref{4-by-4}) and the 5-by-5 surface code (\Cref{surface 5-by-5}), instead of depicting all nodes as $Z$ nodes and drawing blue edges to indicate Hadamards on edges, we will instead use the standard $Z$ and $X$ nodes in ZX calculus.

To this end, we use the above algorithm to derive the general ZX diagram for the $m$-by-$n$ toric code.

First, we present our symmetrical form of the 4-by-4 toric code, which was derived from the ZX normal form. These are both shown in \Cref{4-by-4}. As can be seen in the diagrams, the number of output nodes is reduced by a factor of 2, and the diagram in \Cref{4-by-4}(b) retains a high degree of symmetry. When moving horizontally, it can be seen that there is are periodic patterns of nodes, with one column having 3 $Z$ nodes and 1 $X$ node and the next having 3 $X$ nodes and 1 $Z$ node. Also, the edges between the columns of nodes are local in one direction, as their length does not scale with the horizontal dimension of the toric code.

Using the algorithm on larger dimension $m$-by-$n$ toric codes shows that they have the same general structure as that of \Cref{4-by-4}(b). To construct it geometrically, first place an input $X$ node at the top of the diagram and an input $Z$ node at the bottom of the diagram. Then, on the unfolded toric grid, the first and second layers (out of $2n$ layers) of $m$ nodes each are designated as $Z$ nodes. Then, the rows alternate as rows of $X$ and $Z$ nodes until the very bottom two layers of the output nodes, which are rows of $X$ nodes. This is reflected in the 4-by-4 example.

\begin{figure}
    \centering
    \includegraphics[scale=0.4]{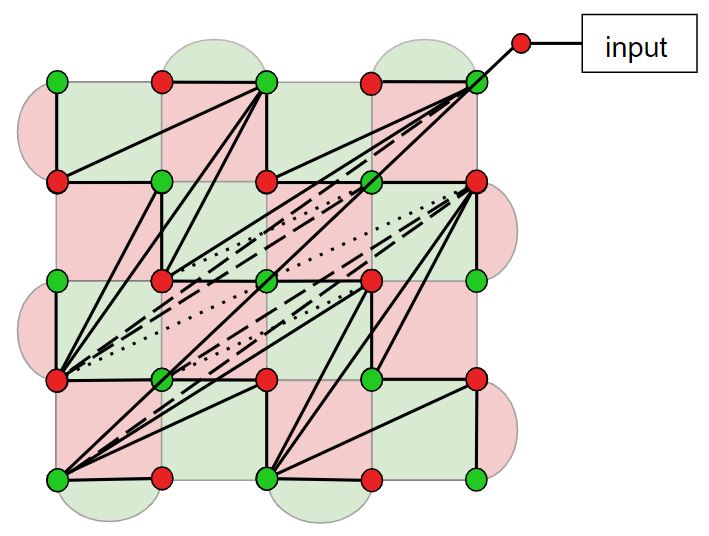}
    \caption{\justifying A 5-by-5 surface code in its ZX canonical form. Note its resemblance with the toric code diagram when tilted by $45^\circ$. The red input node connects to all 5 $Z$ nodes along the bottom-left to top-right diagonal. Some edges are dotted or dashed for clarity. The free output edges are not shown.}

    %% dashed/curved edges for diiagonals
    %aperiodic
    \label{surface 5-by-5}
    
\end{figure}

To draw out the edges, each of the top layer's $Z$ nodes has one edge to each of the $X$ nodes within its column. Furthermore, the second layer's $Z$ nodes have edges connecting them to each of the $X$ nodes in the neighboring columns, as well as one edge connecting it to the bottommost $X$ node in the same column. The next layer of $Z$ nodes (the fourth layer of $m$ nodes from the top) have edges to all the $X$ nodes in the neighboring columns that are in rows strictly below it, as well as one edge to the bottommost $X$ node in the same column. This pattern follows for the other $Z$ node layers. 

Since there is a symmetry between the $Z$ and $X$ nodes, we see the same arrow-shaped patterns extending upwards from layers of $X$ nodes.

We can convert this symmetrical form of the toric code into its KL form by changing the input $Z$ node into an input $X$ node and keeping the stabilizers the same. This gives Figure \ref{fig: 4-by-4 toric code KL form} as the KL form for the toric code, with the numbering reflecting the Hadamard rule.

We can also extend our results to the rotated surface codes, shown in Eq. (12) from \cite{kissinger2022phase}. This is a surface code defined such that each face shaded in green represents a $Z$ check while each face shaded in red represents an $X$ check. A $Z$ check consists of $Z$ gates on each of the nodes on the perimeter of the green face, and $X$ checks are similarly defined.

%how is code defined?

In \Cref{surface 5-by-5}, we show the result of simplifying the 5-by-5 surface code. If it is rotated by $45^\circ$ counterclockwise, it closely resembles the patterns of edges and colors seen in the general toric code. The neighboring diagonals (from bottom-left to top-right) of nodes of different colors connect in arrow-shaped patterns, just as in the toric code.

In general, the $(2k+1)$-by-$(2k+1)$ surface code can be made to have a similar structure as shown in \Cref{surface 5-by-5}.

\section{Prime Codes}\label{sec: prime codes}

When expressed in the ZX calculus, QECC's could have multiple ``separate" connected components. The tensor product of operations of the connected components is the operation of the entire ZX diagram. For example, any connected component with no free edges is a scalar. In this section, we consider codes with respect to their connected components.

We note the following statement about ZX diagrams with multiple connected components. 

\begin{proposition}
If a ZX diagram has multiple connected components that share no connections between each other, then these components are not entangled. Equivalently, if the components are entangled, then they share some connections.
\end{proposition}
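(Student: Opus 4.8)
The plan is to argue directly from the compositional semantics of ZX diagrams rather than from any sequence of rewrite rules. Recall that the standard interpretation sends a ZX diagram $D$ to a linear map $[D]$, a tensor whose open indices are exactly the free input and output edges, and that this interpretation is monoidal: the interpretation of a disjoint union of diagrams is the tensor product of their interpretations, each factor acting only on the free edges it carries. A ZX diagram whose underlying graph is the union of connected components $C_1,\dots,C_r$ with no edges running between distinct components is, by definition, the disjoint union $C_1\sqcup\cdots\sqcup C_r$, so $[D]=[C_1]\otimes\cdots\otimes[C_r]$.

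First I would fix the meaning of ``entangled'' in this setting. Given a partition of the components into a set $S$ and its complement $S^{c}$, the corresponding subsystems --- the bundles of free edges carried by $S$ and by $S^{c}$ --- are entangled exactly when $[D]$, viewed as a vector across that bipartition of the free edges (bending inputs to outputs via the Choi-Jamiolkowski isomorphism recalled in the introduction if one wants to speak of a genuine state), fails to factor as a tensor product. With this definition the proposition is immediate: regrouping the tensor factors gives $[D]=\bigl(\bigotimes_{i\in S}[C_i]\bigr)\otimes\bigl(\bigotimes_{j\in S^{c}}[C_j]\bigr)$, which is by construction a product across the cut separating $S$ from $S^{c}$, hence not entangled. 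The equivalent contrapositive --- that entangled components must be joined by some edge --- follows by contraposition, taking $S$ to consist of the components making up one of the two entangled subsystems.

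The one point I would treat with care is the degenerate case of a connected component with no free edges. Such a component contributes only a scalar to $[D]$; I would note that multiplying a map by a nonzero scalar changes neither its factorization properties nor any entanglement across a cut, and that a zero scalar collapses $[D]$ to the zero map, which is vacuously a product. Beyond this bookkeeping there is essentially no obstacle: the content of the statement is precisely the identification of ``no shared wires'' with ``tensor product of interpretations,'' together with the definition of a product state, so the argument is short once the semantics is invoked. The only mild subtlety worth flagging explicitly --- and the step I would phrase most carefully --- is that ``not entangled'' is a statement relative to the bipartition induced by the components, so the proof must quantify over all ways of splitting $C_1,\dots,C_r$ into two groups, which the tensor-product factorization handles uniformly.
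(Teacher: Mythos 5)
The paper states this proposition without proof; it appears as an unproved observation that motivates the definition of prime code diagrams in Section~5, and no \verb|proof| environment follows it. Your argument is correct and is the natural one: the ZX interpretation functor is monoidal, so a diagram whose underlying graph splits into connected components $C_1,\dots,C_r$ with no inter-component edges interprets as $[C_1]\otimes\cdots\otimes[C_r]$, which factors as a tensor product across every bipartition of the components and is therefore unentangled across each such cut by definition. Your handling of the scalar components (those with no free edges) and the explicit quantification over all bipartitions of $\{C_1,\dots,C_r\}$ are both correct refinements; they go beyond what the paper bothers to say but are the right points to flag if one wanted to make the statement fully rigorous. There is nothing to compare against on the paper's side, so the only thing worth noting is that you have supplied a proof where the authors gave none, and it is the standard semantic argument rather than anything rewrite-rule-based.
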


We introduce the notion of \textit{prime} codes, defined as follows.

\begin{definition}
    A \textit{prime code diagram} is a ZX diagram in KLS form that cannot be expressed as a disconnected graph after a sequence of rewrite rules. 
\end{definition}

We show the following result about prime codes, which we give the name the Fundamental Theorem of Clifford Codes, or FTCC (alluding to the Fundamental Theorem of Arithmetic).

%tongue-in-cheek, jokingly, allusion to Fund. Thm of Arithmetic, Fund. Thm of QECC Arithmetic

\begin{theorem}[Fundamental Theorem of Clifford Codes]

    Consider a Clifford code with an encoder-respecting form satisfying the constraint that the input-output adjacency matrix is full rank. Then, there exists a unique decomposition of the code into a product of prime codes (up to a permutation of input nodes).
\end{theorem}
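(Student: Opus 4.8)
The plan is to establish existence and uniqueness of the prime decomposition separately, treating the encoder essentially as the union of its connected components once the diagram is brought to a normal form. For \emph{existence}, I would start with a KLS canonical form of the code (guaranteed by the KLS canonical form theorem), and look at its connected components. Because the input--output adjacency matrix is full rank, after permuting input nodes we can assume each connected component carries a well-defined, nonzero number of input nodes whose restricted adjacency block is itself full rank — this is where the full-rank hypothesis does real work, since it prevents an input from ``straddling'' components in an essential way and guarantees each component is again a legitimate encoder-respecting code. If no sequence of rewrite rules can further disconnect a given component, it is prime by definition; otherwise we apply such a sequence, pass back to KLS form, and recurse. Termination follows because each splitting strictly increases the number of connected components, which is bounded by $n$ (the number of output nodes), so the process halts and yields a product of prime codes.

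For \emph{uniqueness}, the key is that the prime factors are intrinsic to the code and not to the particular diagram. I would argue that two prime decompositions must induce the same partition of the output qubits: each prime factor corresponds to a subset of the $n$ physical qubits, and by the proposition that disconnected components are unentangled, the code splits as a tensor product of subcodes on those qubit subsets. Since a tensor-product factorization of a state/encoder into ``indecomposable'' pieces is unique (two factorizations refine to a common one, but primeness forbids further refinement, so they coincide), the partitions of output qubits agree. Once the output partition is fixed, the input nodes attached to each block are determined up to permutation precisely because the relevant adjacency block is full rank, so the input-node assignment is pinned down as well, again up to the allowed permutation of inputs. Finally, within each block the KLS canonical form is unique, so the prime factors themselves are identical.

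The main obstacle I anticipate is the interaction between rewrite rules and connectedness: ``cannot be disconnected by \emph{any} sequence of rewrite rules'' is a statement about the whole orbit of the diagram under ZX rewriting, not about one representative, so I would need a clean invariant — most naturally the tensor-product factorization of the encoding map itself (equivalently, the partition of stabilizer generators and logical operators into blocks with disjoint support that can be achieved after a change of symplectic basis respecting the CSS/Clifford structure). Establishing that ``diagram-disconnectable'' coincides with ``the encoding map factors as a tensor product'' in both directions is the crux: one direction is the unentanglement proposition, and the other requires showing that whenever the map factors, one can actually realize the split by rewrite rules while staying in encoder-respecting form. I would handle this by taking the KLS forms of the two tensor factors, placing them side by side, and checking that this combined diagram is rewrite-equivalent to the KLS form of the product code — a finite computation using merging, the bialgebra rule, and Hadamard sliding. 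The permutation-of-inputs caveat is exactly what absorbs the only genuine ambiguity, namely which input label sits in which block.
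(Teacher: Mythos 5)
Your outline shares the paper's skeleton (pass to KLS form, look at connected components, recurse/induct) but diverges on the two technical cores, and in both places your version leaves a real gap.

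\textbf{Existence.} You propose ``if no sequence of rewrite rules can further disconnect a component, it is prime; otherwise apply such a sequence, go back to KLS form, and recurse,'' with termination from the count of components. But the termination criterion is exactly the definition of primeness — to know you are done, you must already know that no rewrite sequence disconnects the component, which is a universally-quantified statement over the whole rewrite orbit, not something you can check by failing to find a sequence. The paper avoids this by proving directly (Lemma 5.5) that \emph{every} connected component of a KLS form is already prime, so the recursion never recurses at all. That lemma is exactly the content missing from your existence argument, and it is proved by a syntactic analysis: only the state-copy and Hopf rules could ever change the connectivity of the diagram, and neither is reachable from a KLS form, plus a separate argument that row operations on a connected, full-rank RREF block cannot make it block-diagonal (this is where the full-rank hypothesis actually gets used, slightly differently than the ``no straddling'' intuition you gave).

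\textbf{Uniqueness.} Your route — pass to the tensor-product factorization of the encoding map, invoke uniqueness of factorization into indecomposables via common refinement, then pin down the input assignment — is genuinely different from the paper's and would, if completed, give a cleaner semantic picture. The paper instead proves a commutation lemma (Lemma 5.6): taking KLS forms commutes with decomposition into disjoint subdiagrams, up to input permutation; and then inducts on the number of prime factors (Theorem 5.7). The difference matters because the paper never has to establish that ``diagram-disconnectable'' coincides with ``the map tensor-factors.'' You correctly flag that this equivalence is the crux of your version, but the resolution you sketch (place the KLS forms of the two putative tensor factors side by side and check rewrite-equivalence to the KLS form of the product) only handles the easy direction (if the map factors, the side-by-side diagram represents it); the hard direction — that a nontrivial tensor factorization forces the KLS diagram to be disconnected, rather than merely rewrite-equivalent to some disconnected diagram outside KLS form — is precisely Lemma 5.5 again, which you have not supplied. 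You also implicitly rely on uniqueness of indecomposable tensor factorization for Clifford isometries (the ``two factorizations refine to a common one'' step), which is plausible but is itself a nontrivial claim that needs proof or citation; the paper's induction sidesteps it entirely by working at the level of diagrams and the uniqueness of the KLS form.

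\textbf{Summary.} Your uniqueness strategy is a legitimately different and arguably more conceptual route, but both halves of your proposal bottom out in the same unproved fact — that connected KLS components cannot be disconnected by any rewrite sequence — and supplying that fact is essentially the entire technical content of the paper's proof.
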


We prove FTCC later in this section. To begin, we first prove the following lemma considering the properties of connected components of KLS forms.

\begin{lemma}
\label{lemma: KLS components are all prime}
    The connected components of a KLS form are all prime.
\end{lemma}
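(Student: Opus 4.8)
The plan is to show that each connected component of a KLS-form diagram, viewed as a KLS form of a sub-code in its own right, cannot be further split by any sequence of rewrite rules. The key realization is that a KLS form is already a fully-reduced canonical representative: two diagrams have the same KLS form if and only if they represent the same encoder (same stabilizers). So if some connected component $C$ of a KLS diagram could be transformed by rewrite rules into a disconnected graph, that disconnected graph would represent the same encoder as $C$; we would then want to derive a contradiction with the canonicity and connectedness of $C$.

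\textbf{Approach.} First I would make precise that a connected component of a KLS form is itself a KLS form: restricting the four rules (Edge, Hadamard, RREF, Clifford) to the nodes and edges of one component preserves all of them, since they are ``local'' conditions on adjacency (the RREF and Hadamard rules only constrain edges that are present, and deleting the other components leaves a sub-diagram whose input-output adjacency submatrix is still in RREF after possibly renumbering). So each component is the KLS canonical form of some well-defined sub-encoder. Next, suppose for contradiction that component $C$ is not prime: there is a sequence of rewrite rules turning $C$ into a diagram $D$ with at least two connected components $D_1, D_2$ sharing no edges. By soundness of the rewrite rules, $D$ and $C$ represent the same encoder. Now run the KLS-canonicalization procedure (guaranteed to exist and terminate by the KLS theorem quoted in the excerpt) on $D$: since the canonicalization is performed componentwise (rewrite rules never create edges between disconnected components), canonicalizing $D$ yields a disconnected KLS form, namely the disjoint union of the canonical forms of $D_1$ and $D_2$. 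But by uniqueness of the KLS canonical form, this disconnected diagram must equal $C$, which is connected — contradiction. Hence $C$ is prime.

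\textbf{Main obstacle.} The subtle point, and the step I expect to need the most care, is the claim that the KLS canonicalization procedure acts componentwise — equivalently, that no sequence of rewrite rules can merge two genuinely disconnected components into one connected diagram representing the same encoder, \emph{and} conversely that if the \emph{final} canonical form is connected then every intermediate simplification could have been organized to respect components. The cleanest way around this is to argue at the level of encoders rather than diagrams: the encoder represented by $D$ factors as a tensor product of the encoders of $D_1$ and $D_2$ (disconnected components are non-entangled, by the Proposition preceding the definition of prime codes), and the KLS canonical form of a tensor product of encoders on disjoint qubit sets is the disjoint union of their KLS canonical forms (this follows from uniqueness together with the fact that the disjoint union of KLS forms is again a valid KLS form). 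Then $C$, being the KLS form of that tensor product, would be disconnected, contradicting that $C$ is a connected component. I would also need to handle the bookkeeping of output numbering: the components of a KLS form inherit a numbering, and ``disjoint union of KLS forms'' requires choosing a compatible interleaving of the output labels, but since the Hadamard/RREF rules are preserved under any order-isomorphic relabeling within each component, this causes no difficulty.
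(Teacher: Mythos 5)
Your proof is correct in outline but takes a genuinely different route from the paper's. The paper argues directly at the level of diagrams and rewrite rules: after noting (as you do) that each component $\mathcal{D}$ is itself a valid KLS form, it observes that the only basic rewrite rules capable of disconnecting a graph are the state-copy and Hopf rules, argues that no sequence of other rules can ever create the preconditions for either of these starting from a KLS form, and handles the row operations on the adjacency matrix separately via uniqueness of RREF. Your approach instead lifts the argument to the level of encoders: if $\mathcal{D}$ could be disconnected by rewrites, the encoder it represents would factor as a tensor product, the KLS form of a tensor product is the disjoint union of the factors' KLS forms, and uniqueness of the KLS form then forces $\mathcal{D}$ itself to be disconnected, a contradiction. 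The subtle sub-claim you correctly flag — that the KLS form of a tensor product on disjoint output sets is the disjoint union of the factors' KLS forms, with a compatible interleaving of labels — is in fact exactly the content of the paper's Lemma~\ref{lemma: k(A) otimes K(B) sim K(C)}, which the paper proves immediately \emph{after} this lemma and independently of it, so your ordering can be made non-circular by proving that lemma first. The trade-off: your route is more conceptual and sidesteps the somewhat delicate case analysis of which rewrite rules can create internal leaf nodes or multi-edges, but it front-loads the full strength of KLS uniqueness and the tensor-product factorization (which does require carefully tracking the KLS conversion procedure through the ZX-HK form and the RREF reduction, as the paper does in Lemma~\ref{lemma: k(A) otimes K(B) sim K(C)}); the paper's route is more elementary and self-contained for this lemma, at the cost of a case analysis that one must check exhaustively.
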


\begin{proof}
    Consider a connected component $\mathcal{D}$ in the KLS form of the code $\mathcal{C}$. Because the adjacency matrix between all inputs and outputs of $\mathcal{C}$ is full-rank and in RREF, the adjacency matrix $N$ between the inputs and outputs of $\mathcal{D}$ has pivot nodes for each of its input nodes. Also, $N$ is full rank, so $N$ must be in RREF. Furthermore, note that any row operations on $N$ preserve the input-pivot connections.

    Because the component $\mathcal{D}$ is in RREF (satisfying the RREF rule) and it is a subgraph of a KLS diagram (thus satisfying the Edge, Hadamard, and Clifford rules of KLS \cite{KLS}), $\mathcal{D}$ must also be in KLS form.

    Because the ZX calculus is complete for stabilizer quantum mechanics \cite{backens2014zx}, any two equivalent diagrams can be made equal through a sequence of basic rewrite rules (see Definition \ref{def: basic rewrite rules}). The basic rewrite rules of ZX calculus that can affect whether two nodes are in the same connected component
    % same part of connected component
    %% source for why basic rewrite rules only are needed -backens probably
    are the state copy and Hopf rules (see Definition \ref{def: basic rewrite rules}). The state copy rule requires an internal node (a node without free edges) connected to exactly one node of the opposite color. None of the other rules are able to produce such an internal node, so the state copy rule cannot be applied. For example, the bialgebra rule can never cause an internal node to have only one edge. The Hopf rule requires two nodes sharing two edges. Because none of the other basic rewrite rules cause two nodes to share two edges, the Hopf rule cannot be applied either. 

    We now show no sequence of row operations on the adjacency matrix $N$ can turn the initial connected graph of $\mathcal{D}$ into disconnected components. For the sake of contradiction, suppose $\mathcal{D}$ is split into multiple connected components due to row operations. Then, row operations can turn each connected component's input-output adjacency matrix into RREF, so the adjacency matrix for $\mathcal{D}$ is disconnected and (after appropriate input permutations) in RREF. Since $N$ was initially connected and in RREF, and the RREF is unique, we reach a contradiction. Thus, no sequence of row operations on $N$ can turn $\mathcal{D}$ into disconnected components.

    Therefore, all the nodes in $\mathcal{D}$ are always in the same connected component. Since $\mathcal{D}$ is in KLS form and it cannot become disconnected, $\mathcal{D}$ is prime.

    Since $\mathcal{D}$ was chosen arbitrarily, this implies that the connected components of a KLS form are all prime.
\end{proof}

The encoders considered below will always be in encoder-respecting form. We now go through the process of converting an arbitrary encoder into its KLS form to show that the end result is the same had we converted two disconnected components into KLS (possibly having to rearrange the input nodes). The \textit{ZX-HK form} is an intermediate form of the code that satisfies the Edge and Hadamard rules (of KLS forms), but not necessarily the RREF or Clifford rules \cite{KLS}.

\begin{lemma}
\label{lemma: k(A) otimes K(B) sim K(C)}
    If $\mathcal{X}$ is an arbitrary Clifford encoder in ZX calculus, let $\text{KLS}(\mathcal{X})$ denote its KLS form in ZX calculus. If an encoder $\mathcal{C}$ can be decomposed into components $\mathcal{A}$ and $\mathcal{B}$, where there are no edges between the components, then $\text{KLS}(\mathcal{C})$ can be decomposed into components $\text{KLS}(\mathcal{A})$ and $\text{KLS}(\mathcal{B})$, again with no edges between the components.
\end{lemma}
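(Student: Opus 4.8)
The plan is to track the KLS-conversion algorithm step by step and argue that, when the input encoder $\mathcal{C}$ is a disjoint union of $\mathcal{A}$ and $\mathcal{B}$ (no edges between them), every rewrite operation in the algorithm stays ``block-respecting''—it never creates an edge between a node originally in $\mathcal{A}$ and one originally in $\mathcal{B}$. Since the algorithm applied to $\mathcal{C}$ then restricts to exactly the algorithm applied to $\mathcal{A}$ and, in parallel, to $\mathcal{B}$, the output $\mathrm{KLS}(\mathcal{C})$ is the disjoint union $\mathrm{KLS}(\mathcal{A}) \sqcup \mathrm{KLS}(\mathcal{B})$.

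First I would recall the three stages of the conversion as described in \cite{KLS}: (i) convert the encoder-respecting form into ZX-HK form (satisfying the Edge and Hadamard rules), (ii) row-reduce the input-output adjacency matrix to RREF, and (iii) clean up the pivot and input nodes to satisfy the Clifford rule. For stage (i), I would observe that the operations used (color changes, local complementations about output nodes, identity removals, Hadamard cancellations) are all \emph{local}: a local complementation about a vertex $v$ only toggles edges within $N(v)$, and if $v \in \mathcal{A}$ then $N(v) \subseteq \mathcal{A}$ since there are no cross edges; similarly color changes and identity removals act on a single node and its incident edges. So stage (i) preserves the block decomposition, and the ZX-HK form of $\mathcal{C}$ is the disjoint union of those of $\mathcal{A}$ and $\mathcal{B}$. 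For stage (ii), the key point—already used in Lemma~\ref{lemma: KLS components are all prime}—is that because the global input-output adjacency matrix is full rank (block-diagonal with full-rank blocks), row operations can always be chosen within each block; the RREF of a block-diagonal matrix is the block-diagonal matrix of the RREFs, up to a permutation of the input rows (hence the ``rearrange the input nodes'' caveat). Row operations correspond to merging/unmerging input nodes along shared output neighbors, which again never crosses between blocks. For stage (iii), the Clifford cleanup removes local operations and pivot-pivot/input-input edges via local complementations and mergers around pivot and input nodes, all of which are again supported within a single block.

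The step I expect to be the main obstacle is stage (ii): one must be careful that the choice of pivots and the sequence of row operations producing the RREF really can be taken block-respecting. The subtlety is that a priori Gaussian elimination on the global matrix could combine a row from $\mathcal{A}$ with a row from $\mathcal{B}$, and one needs that this is never \emph{forced}. The resolution is exactly the uniqueness of RREF together with full-rankness of each block, which I would invoke as in the previous lemma: since $\mathcal{A}$ and $\mathcal{B}$ each already have full-rank input-output matrices, within each block the rows are independent and a pivot exists for every input, so elimination proceeds entirely inside the block; the resulting block-diagonal matrix is in RREF and, by uniqueness, equals the RREF the global algorithm must reach (after reindexing inputs so that the pivot columns appear in increasing order). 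Once this is established, the conclusion follows: $\mathrm{KLS}(\mathcal{C})$, $\mathrm{KLS}(\mathcal{A})$, and $\mathrm{KLS}(\mathcal{B})$ all satisfy the four KLS rules, no cross edges were ever created, so $\mathrm{KLS}(\mathcal{C})$ decomposes as claimed, with no edges between the two components.
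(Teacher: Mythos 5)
Your proposal is correct and follows essentially the same route as the paper: track the three stages of the KLS conversion (ZX-HK via local complementation, row-reduction to RREF, and Clifford cleanup) and show each stage is supported within a single block, with the stage-(ii) subtlety resolved exactly as in the paper by combining full-rankness of each block with uniqueness of RREF, at the cost of an input permutation. The only cosmetic difference is that you phrase the stage-(ii) argument in terms of block-diagonal matrices rather than the paper's "order the inputs to match the order of the pivots," but these are the same observation.
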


\begin{proof}
    The conversion from an arbitrary encoder-respecting form to the KLS form involves only local complementation and and row operations on the input-output adjacency matrix \cite{KLS}.

    Transforming a code into its ZX-HK form relies only on local complementation (see section III.C from \cite{hu2022improved}). Because local complementation only toggles edges of the neighbors of a node, it cannot affect whether two nodes are in the same connected component. Thus, if the ZX-HK form of the encoders are $\mathcal{A'},\mathcal{B'},$ and $\mathcal{C'}$, respectively, we find that $\mathcal{C'}$ can be decomposed into components $\mathcal{A'}$ and $\mathcal{B'}$, with no edges between the components, since local complementation happens entirely within the components.

    The next step is to transform the ZX-HK form's adjacency matrix between the input and output nodes into RREF. When transforming $\mathcal{C'}$ into RREF, note that we can first transform the separate components $\mathcal{A'}$ and $\mathcal{B'}$ into RREF. Each input node of $\mathcal{C'}$ has a corresponding pivot (output) node. Then, there exists a permutation of the inputs that will result in $\mathcal{C'}$ being in RREF, if we order the inputs to match the order of the pivots. Permuting can be done by row operations. Since the RREF of the adjacency matrix is unique, this is the desired RREF for $\mathcal{C'}$. Note that the resulting diagram $\mathcal{C''}$ can still be decomposed into components $\mathcal{A'}$ and $\mathcal{B'}$, with no edges between the components. This is because the order of the inputs within each component is unchanged.

    The next step in converting into KLS form involves local complementing at the inputs of pivot nodes. Then, if there are edges between pivot nodes in the resulting form, we perform local edge complementations on the corresponding input nodes, followed by a permutation of the same input nodes. The former does not affect whether two nodes are in the same connected component. The latter also does not alter this property because local complementation does not affect whether two nodes are in the same connected component and the input-output adjacency matrix is preserved in RREF after the permutation. This completes the transformation into the KLS form.

    Thus, we have shown that $\text{KLS}(\mathcal{C})$ can be decomposed into components $\text{KLS}(\mathcal{A})$ and $\text{KLS}(\mathcal{B})$, which share no edges.
\end{proof}

Note that the output nodes of $\mathcal{A}$ and $\mathcal{B}$ remain the same after transforming them into KLS form. Then, if we order the inputs of $\text{KLS}(\mathcal{A})$ and $\text{KLS}(\mathcal{B})$ so that they match with the ordering of the pivot output nodes, the resulting diagram is the same as $\text{KLS}(\mathcal{C})$. 

Now, we show why equivalent Clifford encoders share the same set of prime components.

\begin{theorem} 
Given two Clifford encoders with the same codespace, each entirely made up of prime components, they have equivalent ZX diagrams up to a permutation of the inputs.

\label{thm: two equivalent encoders entirely made of primes}
\end{theorem}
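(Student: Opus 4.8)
The plan is to route everything through the KLS canonical form. Since $\mathcal{C}_1$ and $\mathcal{C}_2$ have the same codespace they have the same stabilizer group (the stabilizer group is exactly the group of Pauli operators acting as the identity on the codespace), and by the uniqueness of the KLS canonical form for a fixed set of stabilizers they share a single KLS form, which I will call $\mathcal{K}$. The strategy is then: (i) show that the prime decomposition of each $\mathcal{C}_i$ is carried, under the passage to KLS form, onto the connected-component decomposition of $\mathcal{K}$; (ii) deduce that the two prime decompositions agree as multisets of code diagrams with matching output supports; and (iii) assemble the per-component input relabellings into a single input permutation taking $\mathcal{C}_1$ to $\mathcal{C}_2$.

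For (i), write $\mathcal{C}_1 = P_1 \sqcup \cdots \sqcup P_m$ with each $P_i$ prime, and likewise $\mathcal{C}_2 = Q_1 \sqcup \cdots \sqcup Q_{m'}$. Applying Lemma~\ref{lemma: k(A) otimes K(B) sim K(C)} inductively, $\mathcal{K} = \text{KLS}(\mathcal{C}_1)$ decomposes, with no edges between the pieces, into $\text{KLS}(P_1),\ldots,\text{KLS}(P_m)$; moreover, as recorded in that lemma's proof, the passage to KLS form never relabels output nodes and uses only input permutations and local complementations, neither of which alters which nodes lie in which connected component. Since each $P_i$ is already in KLS form, uniqueness of the KLS form gives $\text{KLS}(P_i) = P_i$, and $P_i$ is connected by the definition of a prime code diagram; hence $P_1,\ldots,P_m$ are exactly the connected components of $\mathcal{K}$, up to the bookkeeping of the global input numbering. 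The same argument for $\mathcal{C}_2$ shows $Q_1,\ldots,Q_{m'}$ are also exactly the connected components of $\mathcal{K}$.

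Step (ii) is now immediate: $\mathcal{K}$ has a well-defined connected-component decomposition, so $m = m'$, and since the transformations fix the output labels, the partition of the $n$ output qubits induced by $\{P_i\}$ coincides with that induced by $\{Q_j\}$ --- both equal the partition of the outputs of $\mathcal{K}$ into its connected components. Matching each $P_i$ to the unique $Q_j$ with the same output support, $P_i$ and $Q_j$ are, up to the labelling of their input nodes, one and the same connected component of $\mathcal{K}$; in particular they have the same number of inputs. For (iii), each matched pair furnishes a bijection between the input nodes of $P_i$ and those of $Q_j$; since the $P_i$ partition the inputs of $\mathcal{C}_1$ and the $Q_j$ partition those of $\mathcal{C}_2$ into blocks of equal sizes, these bijections assemble into a single permutation $\sigma$ of the $k$ input labels such that relabelling the inputs of $\mathcal{C}_1$ by $\sigma$ produces $\mathcal{C}_2$ as a ZX diagram, which is the claim.

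The step I expect to be the main obstacle is the label bookkeeping underlying (i)--(ii): one must argue carefully that converting to KLS form never relabels outputs and never merges or splits connected components, so that the output supports of the prime factors are genuinely canonical --- forced by the common stabilizer group --- and hence no output permutation is needed, only an input permutation. A secondary, routine point is that Lemma~\ref{lemma: k(A) otimes K(B) sim K(C)} is stated for a two-way split and must be iterated.
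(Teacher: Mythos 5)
Your proposal follows essentially the same route as the paper: reduce to the shared KLS canonical form via the common stabilizer group, iterate Lemma~\ref{lemma: k(A) otimes K(B) sim K(C)} to carry each prime decomposition onto the connected-component decomposition of that form, use that primes are already in KLS form, and assemble the per-component input identifications into a global input permutation. The paper packages this as an induction on the number of prime components (reducing the two-encoder claim to the statement that a single encoder made of primes equals its KLS form up to input permutation), whereas you match components directly by output support, but the key lemmas and the substance of the argument are the same.
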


\begin{proof}

First, we note that two encoders with the same codespace must have the same KLS form. Therefore, the statement is equivalent to showing that, given a Clifford encoder $\mathcal{C}$ entirely made up of prime components, its ZX diagram is equivalent to the KLS form up to a permutation of inputs.

We now induct on the number of prime components in $\mathcal{C}$. If $\mathcal{C}$ has exactly one prime component, then its entire diagram must be in KLS form. Since the KLS form for $\mathcal{C}$ is unique, this means $\mathcal{C}$ and its KLS form are identical, and thus equivalent up to a permutation of inputs.

Now, suppose all encoders with $i \leq n$ prime components are equivalent to their KLS form up to a permutation of inputs. We want to show that this implies any encoder $\mathcal{C}$ with $n+1$ prime components is equivalent to its KLS form up to a permutation of inputs.

Denote the connected component of $\mathcal{C}$ that has output node 1 as $\mathcal{A}$. Then, the remaining part of $\mathcal{C}$ that shares no edges with $\mathcal{A}$ can be denoted as $\mathcal{B}$. Note that $\mathcal{A}$ is already in KLS form, since $\mathcal{A}$ must be prime. By the inductive hypothesis, we have that $\mathcal{B}$ is equivalent to $\text{KLS}(\mathcal{B})$ up to input permutations, so, after an appropriate sequence of rewrite rules, $\mathcal{C}$ can be decomposed into $\mathcal{A}$ and $\text{KLS}(\mathcal{B})$. 

 By Lemma \ref{lemma: k(A) otimes K(B) sim K(C)}, we have that $\text{KLS}(\mathcal{C})$ can be decomposed into $\text{KLS}(\mathcal{A})$, which is just $\mathcal{A}$, and $\text{KLS}(\mathcal{B})$. Evidently, this means that $\mathcal{C}$ is equivalent to $\text{KLS}(\mathcal{C})$ up to input permutations.

\end{proof}

Now, we can prove the Fundamental Theorem of Clifford Codes.

\begin{proof}[Proof of Theorem 5.3 (FTCC)]
    We begin with a Clifford code with an encoder-respecting form satisfying the constraint that the input-output adjacency matrix is full rank. Then, after converting it into its KLS form, Lemma \ref{lemma: KLS components are all prime} gives that all its connected components are prime. Therefore, we have constructed a decomposition of the code into primes.

    Now, we show why this construction is unique. For the sake of contradiction, suppose there is another different decomposition of the code into primes. Then, by Theorem \ref{thm: two equivalent encoders entirely made of primes}, these two ZX diagrams are equivalent up to a permutation of inputs. Note that we can rearrange the inputs at will using row operations without changing the structure of any of the primes, so this second decomposition must be the same as the one constructed above.

    Thus, the decomposition exists and it is unique.
\end{proof}

\section{Another Definition of Equivalence}\label{sec 4}

Previous works have examined the equivalence classes of graphs under local complementation \cite{adcock2020mapping, bahramgiri2007enumerating, bouchet1993recognizing}. In Clifford codes, the presence of designated input and output vertices makes the definition of equivalence more exotic.

In the following sections, we consider only the ZX diagrams for Clifford codes that have no local operations on the free output edges.

\begin{figure*}

\begin{minipage}{.2\textwidth}

Transforming the adjacency matrix:
$$\begin{pmatrix}
     0 & 0 & 1 & 0 & 1 & 0 & 1 \\
    0 & 0 & 0 & 1 & 0 & 0 & 0 \\
    1 & 0 & 0 & 0 & 0 &0  & 0 \\
    0 & 1 & 0 & 0  & 0 & 0 & 1\\
    1 & 0 & 0 & 0 & 0 & 0 & 0\\
    0 & 0 & 0 & 0 & 0 & 0 & 0\\
    1 & 0 & 0 & 1 &0 & 0 & 0\\
    
\end{pmatrix}$$
  
\end{minipage}% This must go next to `\end{minipage}`
\begin{minipage}{.3\textwidth}
\vspace{0.8 cm}
  \includegraphics[scale=0.20]{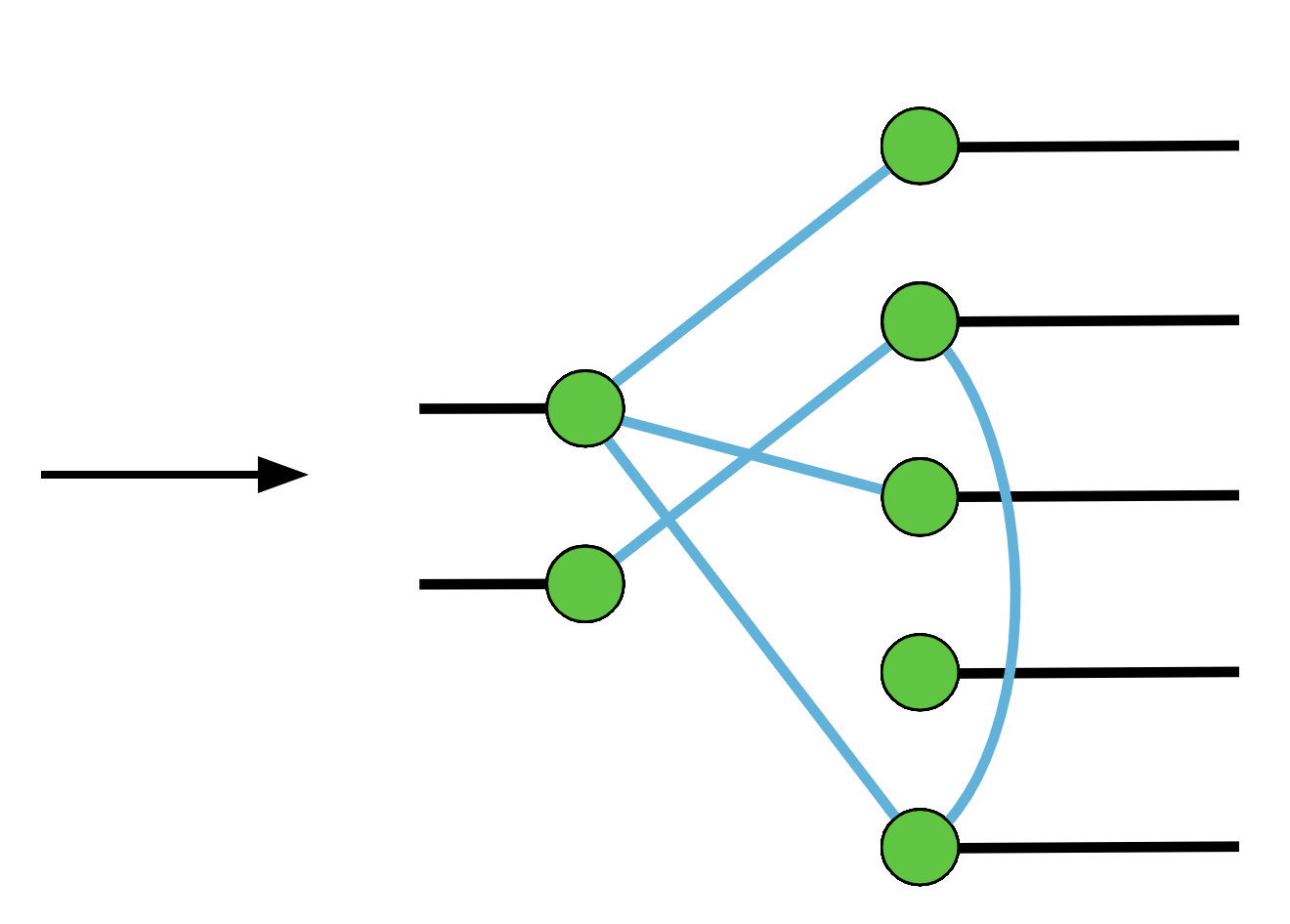}
\end{minipage}

\caption{\justifying   In the adjacency matrix, the first row corresponds to the first input, the second row corresponds to the second input, and so on. After the inputs, the following row corresponds to the first output, and the other outputs follow. In the ZX diagram on the right, all edges shown are internal edges. For clarity, the Hadamard gates are not shown.}

\label{transforming example}
    
\end{figure*}

\begin{definition}
\label{defn: locally equivalent}
    Two ZX diagrams are \textit{locally equivalent} if and only if one can be converted to the other through a sequence of local complementations and local operations on the free output edges.
\end{definition}

We now provide another definition of equivalence that turns the focus to the topological structure of the code, letting us ignore specific orderings of output nodes and the local operations attached on the free output edges.

\begin{definition}
\label{defn:equivalent clifford codes}
    Two Clifford codes $\mathcal{C}_1$ and $\mathcal{C}_2$ are \textit{equivalent} if and only if the ZX diagrams are locally equivalent or locally equivalent after some permutation of the output nodes and/or applications of any unitary operators on the inputs.
\end{definition}

%% what to include:
%% why taking input edges away is allowed
%% why we do RREF
%% why we took away all the local Clifford operations on the vertices
%% why local complementation is good

We now list five different operations which keep encoder graphs equivalent.

\begin{conjecture}
\label{operations}
The ZX diagrams for two Clifford codes $\mathcal{C}_1$ and $\mathcal{C}_2$ are \textit{equivalent} if and only if the diagram for one of the codes can be reached from the other after a sequence of operations consisting only of the following:
\begin{enumerate}[noitemsep]
\item Local complementing about any vertex of the graph.
\item Permuting the output vertices.
\item Permuting the input vertices.
\item Performing row operations on the adjacency matrix of input to output edges.
\item Removing an input-input edge.
\item Applying local operations on the output edges.
\end{enumerate}
\end{conjecture}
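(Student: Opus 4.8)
The plan is to prove the two directions separately. For the ``if'' direction, I would verify that each of the six listed operations preserves the equivalence class of Definition 6.3. Operations 1 and 6 are immediate, since they are exactly the moves allowed in the definition of local equivalence (Definition 6.2). Operation 2 is a permutation of output nodes, explicitly allowed in Definition 6.3. For operations 3, 4, and 5, I would argue that each corresponds to a unitary operation on the input qubits: permuting input vertices is a relabelling of logical qubits; a row operation on the input-to-output adjacency matrix (adding one input row to another) corresponds to a CNOT between logical qubits, which is realized by merging/unmerging $X$ nodes in the ZX diagram, hence does not change the encoder as a map up to input unitary; and removing an input-input edge corresponds to a $CZ$ between logical qubits, again a unitary on the inputs. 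Since Definition 6.3 permits arbitrary unitaries on the inputs, all three are equivalence-preserving. Composing, any sequence of the six operations preserves equivalence.

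For the ``only if'' direction — the substantive content — I would take two equivalent codes $\mathcal{C}_1$ and $\mathcal{C}_2$ and show one diagram is reachable from the other by the listed operations. The strategy is to route everything through a canonical form: by the KLS canonical form theorem, two encoders with the same codespace have identical KLS forms, and the conversion of an arbitrary encoder-respecting diagram into KLS form uses only local complementation and row operations on the input-output adjacency matrix (as recalled in the proof of Lemma 5.6), i.e.\ only operations of type 1 and 4 (together with the input permutation of type 3 needed to order inputs against pivots). So it suffices to reduce the general equivalence of Definition 6.3 to equality of KLS forms. Unwinding that definition, $\mathcal{C}_1 \sim \mathcal{C}_2$ means that after a permutation $\pi$ of outputs, a unitary $U$ on inputs, and a sequence of local complementations and output-edge local operations, the two diagrams agree. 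The output permutation is operation 2; the local complementations and output local operations are operations 1 and 6; and the input unitary $U$, being an arbitrary Clifford (in the Clifford-code setting) on the inputs, must be decomposed into the generators available as operations 3, 4, and 5. This last decomposition is the crux: I would use the standard fact that the Clifford group on $k$ qubits is generated by CNOT, CZ (equivalently $S$), Hadamard, and Pauli gates, and then check that each generator, acting on the input side of an encoder-respecting diagram, can be implemented by the listed moves — CNOT by a row operation (4), CZ by adding/removing an input-input edge (5), qubit permutations by (3), and Hadamard/Pauli on an input by local complementations plus the freedom that, since there are no output local operations and the diagram is full-rank, a Hadamard on an input node can be pushed onto its pivot and absorbed via local complementation.

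The main obstacle I anticipate is precisely this last point: showing that a Hadamard (and more generally an arbitrary single-qubit Clifford) on an \emph{input} node can be realized using only the six listed operations, none of which is literally ``apply a local operation to an input.'' The resolution should come from the structure of the encoder-respecting/KLS form — each input node has a designated pivot output node, and one can use local complementation about the input node or its pivot, combined with row operations, to transfer the would-be input local Clifford into a change of the graph that is already captured by operations 1, 3, 4, 5. Making this transfer precise, and checking it interacts correctly with the full-rank hypothesis on the input-output adjacency matrix, is where the real work lies; the fact that the statement is labelled a Conjecture rather than a Theorem suggests the authors themselves have not fully closed this gap, so I would expect the proof sketch to hinge on exactly which input operations are genuinely necessary and whether the six operations truly generate them all.
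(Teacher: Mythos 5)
You have correctly identified the fundamental situation here: the paper does \emph{not} prove this statement, which is precisely why it is labelled a Conjecture and not a Theorem. The paper's own discussion following Conjecture~\ref{operations} only addresses the easy ``if'' direction --- checking, operation by operation, that each of the six moves preserves equivalence. It makes no attempt at the ``only if'' direction, i.e.\ that these six operations generate all of Definition~\ref{defn:equivalent clifford codes}. Your assessment that ``the authors themselves have not fully closed this gap'' is exactly right, and your instinct to separate the two directions is appropriate.

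For the ``if'' direction, your argument is correct but takes a slightly different route from the paper's. You interpret each operation as a concrete gate on the logical qubits (CNOT for a row operation, $CZ$ for an input-input edge, permutation for relabelling) and appeal to the ``arbitrary unitary on inputs'' clause of Definition~\ref{defn:equivalent clifford codes}. The paper instead justifies operation 4 by arguing directly that row operations preserve the stabilizer group (parity-of-overlaps argument) and only says in passing that operation 5 ``takes away a unitary operation.'' Both are sound; your gate-level reading is cleaner and more uniform, while the paper's stabilizer argument is more self-contained since it avoids invoking the circuit picture.

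For the ``only if'' direction, your outline --- route both diagrams to a canonical form, then decompose the residual input Clifford into the generators realizable by operations 3, 4, and 5 --- is a sensible plan, and you have correctly put your finger on the crux: none of the six operations is ``apply a local Clifford to an input node,'' so one must show that an input Hadamard (and more generally an arbitrary single-qubit Clifford on an input) can be simulated by the listed moves, presumably via local complementation about the input or its pivot and absorption into the graph. This is genuinely nontrivial, and since the paper gives no argument for it, your proposal neither agrees nor disagrees with the paper's ``proof'' --- there is no proof in the paper to compare against. What you have written is a reasonable roadmap for what a proof would need, together with an honest admission of where it is unfinished, which is the correct thing to say about a statement that the source itself declines to prove.
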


All of the operations in Conjecture~\ref{operations} are reversible, so, if code $\mathcal{C}_1$ can be made equivalent to $\mathcal{C}_2$, the reverse is also true.

Operation 1, local complementation as in Definition~\ref{localcomplementation}, is included to account for equivalence of encoder graphs based on their entanglement \cite{adcock2020mapping}.

Two encoder diagrams should also be equivalent if the information they produce can be ordered differently to become the same. In this way, operations 2 and 3 reflect this, since connections among the vertices of the graph remain the same and these operations only change the order in which the information is inputted or outputted.

Operation 4 consists of adding rows of the adjacency matrix between input and output nodes in modulo 2. Considering some input node, any stabilizer of the code must have $X$ or $Y$ gates on an even number of the input node's neighboring output nodes. When another row of the input-output adjacency matrix is added to the row for this input node, any stabilizer still has $X$ and $Y$ gates on an even number of the input node's neighboring output nodes, since we effectively add two even numbers together modulo 2. Row operations thus preserve the stabilizers of the code, giving a possible operation between equivalent encoders.

Operation 5 takes away a unitary operation from the input vertices, which is allowed by Definition \ref{defn:equivalent clifford codes}. Lastly, operation 6 preserves equivalence since all local operations can be removed by multiplying by their corresponding conjugate, which is allowed by Definition \ref{defn: locally equivalent}.

%% change this part

Note that this definition of equivalence does not allow two encoders to be in the same equivalence class if they only differ by an extra output (which is not connected to anything else). That is, if the two encoders differ by a quantum state, this definition of equivalence marks them as different. Therefore, this implies we focus on section Y of \Cref{venn-diagram}, instead of section X.

As an example of these extra outputs/states, see \Cref{transforming example}. The output vertex labeled 4 is not connected to any input or output. It does not provide any more encoding of information from the inputs than if it was not present. For this reason, section X of \Cref{venn-diagram} is more useful for practical purposes.

There are some simplifications that can be made on the set of encoders we consider by using the above operations. This is so that we consider only encoder graphs that could possibly be non-equivalent.

Operations 3 and 4 of Conjecture~\ref{operations} allow the input-to-output portion of the encoder diagram to be expressed in RREF. All encoder diagrams considered from here on are expressed in RREF, as in the RREF rule from \cite{KLS}. Note that the inputs have corresponding pivot output nodes in the RREF.

Continuing from the RREF of the encoder graph, operation 2 from Conjecture~\ref{operations} can be used to move the pivot nodes to have lower-numbered indices than all the other outputs. In this way, the first output node can be made into the pivot node corresponding to the first input node, the second output node can be made into the pivot node corresponding to the second input node, and so on. Thus, these $k$ pivot nodes are fixed among the top of the output nodes. For brevity, the other $n-k$ non-pivot output nodes are called \textit{free} output nodes.

In Conjecture~\ref{operations}, no operation was included that affected local Clifford gates at the nodes. Therefore, this definition of equivalence neglects the presence of phase changing gates at vertices of the encoder's graph. This is because local Clifford gates change the qubits using a unitary operation but does not contribute to changes in entanglement of the qubits in any way. Therefore, for our purposes, we remove all local Clifford gates present at the nodes of the ZX diagram for the encoder.

A further simplification, carried over from \cite{KLS}, is that graphs with pivot-pivot edges are omitted, since they can always be transformed into a graph without pivot-pivot edges using a sequence of local complementations.

We can group these simplifications into the following result.

\begin{claim}
\label{claim: simplifications}
    To find the distinct encoders in an equivalence class of the set of Clifford codes, it is sufficient to find the distinct encoders in the equivalence class satisfying the following constraints:
    \begin{itemize}
        \item The input-output adjacency matrix is in RREF.
        \item The pivot nodes are the lowest-numbered output nodes, and they are ordered to match the order of the input nodes.
        \item Local Clifford gates on the free output edges are all removed, leaving only output $Z$ nodes with phase 0.
        \item Diagrams with pivot-pivot edges are not included.
    \end{itemize}

\end{claim}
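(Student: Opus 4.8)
The plan is to prove the claim by exhibiting, for an arbitrary (full-rank) Clifford code presented in encoder-respecting form, a sequence of equivalence-preserving moves that carries it to an encoder meeting all four constraints simultaneously. Since each such move is reversible, this shows every equivalence class contains a constrained representative, which is exactly the content of ``sufficient'': no equivalence class is lost when one restricts the search to constrained encoders. The proof is essentially a consolidation of the observations made in the paragraphs preceding the claim, and the only real content is choosing the order of the normalizations so that no step undoes an earlier one.

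I would organize the argument into four stages. \emph{(i) RREF.} Permutations of the input nodes and $\mathbb{F}_2$ row operations on the input-to-output adjacency matrix are both special cases of applying a unitary to the inputs, hence licensed by Definition \ref{defn:equivalent clifford codes}; since the encoder is full-rank these $k$ rows are independent, so Gaussian elimination with these moves puts the matrix in RREF and gives each input a pivot output node. \emph{(ii) Pivot placement.} A permutation of the output nodes (operation 2 of Conjecture \ref{operations}) that lists the pivot columns first, in the order of their inputs, and then the remaining columns in their current order, brings the matrix to the block shape $[\,I_k \mid M\,]$; this preserves RREF and realizes the second constraint. \emph{(iii) Removing pivot-pivot edges.} Following the step of the KLS normalization already invoked in the proof of Lemma \ref{lemma: k(A) otimes K(B) sim K(C)}, a sequence of local (edge) complementations about input nodes, each followed by a permutation of those inputs, deletes all pivot-pivot edges. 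A local complementation about an input node toggles only output-output edges (in encoder-respecting form an input node has no input neighbours), so it leaves every entry of the input-output matrix fixed, and the accompanying input permutations can be chosen to restore $[\,I_k \mid M\,]$ by uniqueness of the RREF. Any input-input edges that appear along the way are deleted as part of the input unitary (operation 5). \emph{(iv) Clearing local operations.} Finally, operation 6 removes every local Clifford gate that has accumulated on the free output edges, and any residual local operation on an input is absorbed into the input unitary; because this touches only free edges, the RREF, the pivot placement, and the absence of pivot-pivot edges all survive, leaving the output nodes as phase-$0$ $Z$ nodes.

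The step I expect to be the main obstacle is stage (iii): local complementation is the only available move that changes the output-output edge set, and one application about a single input node can both destroy some pivot-pivot edges and create new ones between different pivot pairs, so one must argue that the process terminates. Rather than reprove this, I would cite the corresponding portion of the KLS normalization, which is known to terminate in a diagram satisfying the Clifford rule (in particular, with no pivot-pivot and no input-input edges). The remaining points are routine: that the input permutations accompanying stage (iii) really can re-establish the block form (they permute the rows and the pivot columns of an RREF matrix in lockstep, and the RREF is unique), and that the local Cliffords deposited on free output edges during stage (iii) are harmless — which is precisely why stage (iv) is done last.
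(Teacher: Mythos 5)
Your proposal is correct and follows essentially the same approach the paper takes: the paper itself offers no separate proof of this claim, instead justifying each bullet in the prose immediately preceding it (operations 3 and 4 give RREF, operation 2 places pivots first, local Cliffords are stripped because they do not affect entanglement, and pivot-pivot edges are removed via the KLS local-complementation sequence). Your contribution is to consolidate those observations into an ordered normalization and to observe that the order matters; that is a fair and correct reorganization of the paper's argument. One small imprecision worth noting in stage (iii): in RREF each input is adjacent to exactly one pivot, so a local complementation about an input node cannot toggle a pivot-pivot edge — the KLS procedure you cite in fact also complements at pivot (output) nodes and/or pivots along input-pivot edges, which is what disturbs the input-output matrix and necessitates the compensating input permutations; since you explicitly defer to KLS for termination and correctness of that stage, this does not undermine the proof, but the sentence claiming the input-output matrix is "left fixed" and yet needs restoring by permutation is internally inconsistent as written.
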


As an example of the simplifications on the diagrams, as well as how the adjacency matrices transform into encoders, see \Cref{transforming example}.

\begin{figure*}[t!]
    \centering
    \begin{subfigure}[t]{0.5\textwidth}
        \centering
         \bgroup
\def\arraystretch{1.5}

\setlength{\tabcolsep}{0.5em} 
    \begin{tabular}{c|c|c|c|c}
        $n = 1$ & $n = 2$ & $n = 3$ &  $n = 4$ & $n = 5$ \\
        \hline
        1 & 1 & 2 & 6 & 17 \\
    \end{tabular}

    \egroup
    \vspace{0.04 cm}
        \caption{\justifying Number of equivalence classes for $[n,1]$ codes.}
    \end{subfigure}%
    ~ 
    \begin{subfigure}[t]{0.5\textwidth}
        \centering
        \bgroup
\def\arraystretch{1.5}

\setlength{\tabcolsep}{0.5em} 
    \begin{tabular}{c|c}
        Rep: & \includegraphics[scale=0.09]{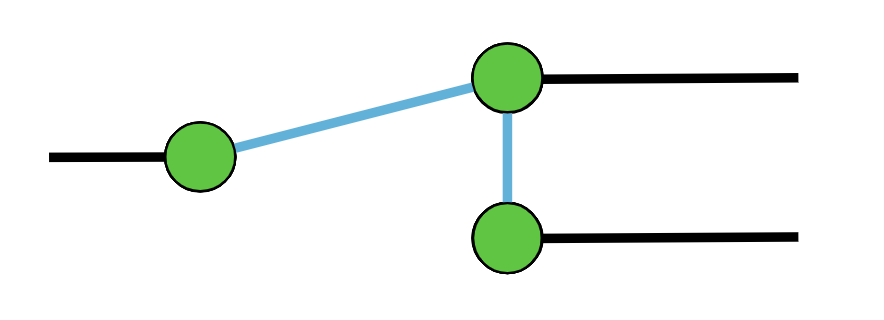} \\
        \hline
         Size: & 3\\
    \end{tabular}
\egroup
        \caption{\justifying $[2,1]$ codes equivalence class reps. and sizes.}
    \end{subfigure}
    %\caption{\justifying Caption place holder}

\begin{center}

\bgroup
\def\arraystretch{1.5}

\setlength{\tabcolsep}{0.5em}

\begin{tabular}{c|c|c}

Rep: & \includegraphics[scale=0.08]{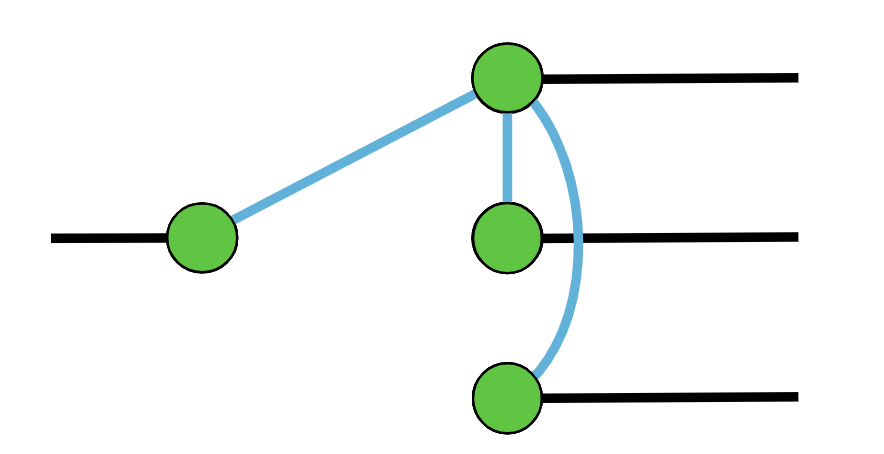} & \includegraphics[scale=0.08]{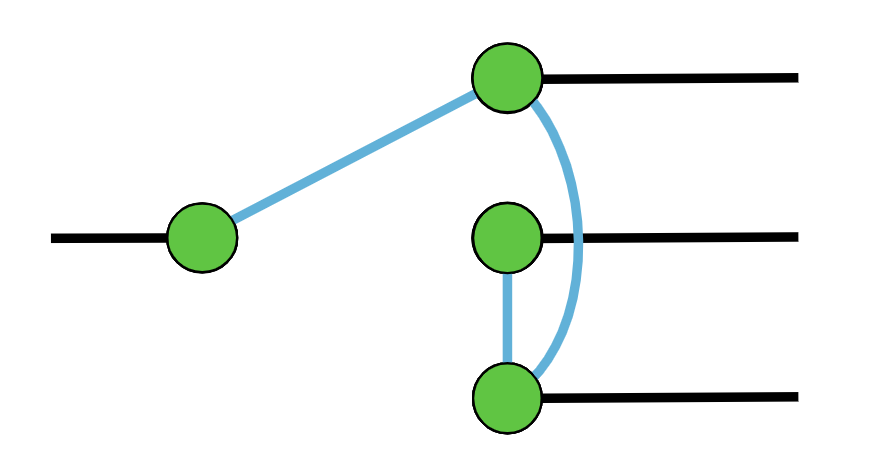}\\
\hline
Size: & 3&21 \\
\end{tabular}
\egroup

\bigskip

{\small (c) $[3,1]$ codes equivalence classes showing the size of the class underneath a representative.}

\bigskip

\bgroup
\def\arraystretch{1.5}

\setlength{\tabcolsep}{0.5em}

\begin{tabular}{c|c|c|c|c|c|c}

Rep: & \includegraphics[scale=0.08]{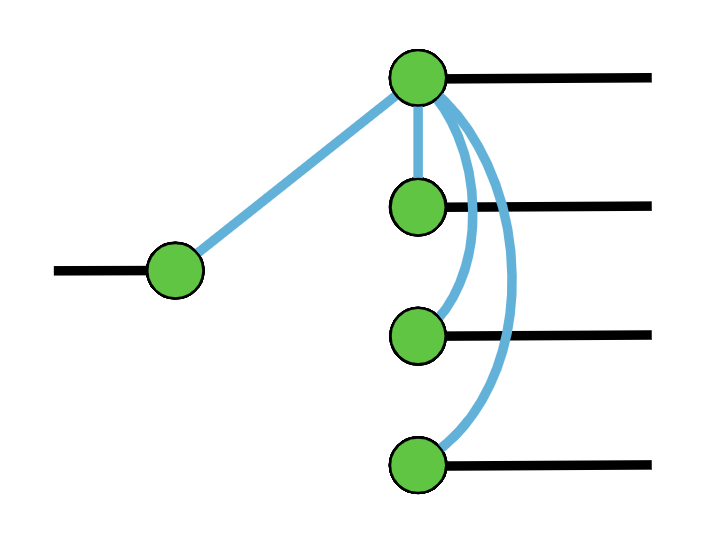}& \includegraphics[scale=0.08]{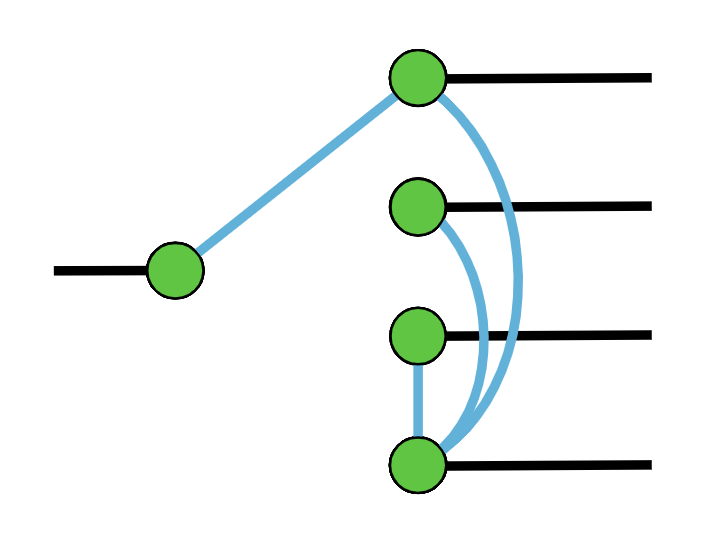} &  \includegraphics[scale=0.08]{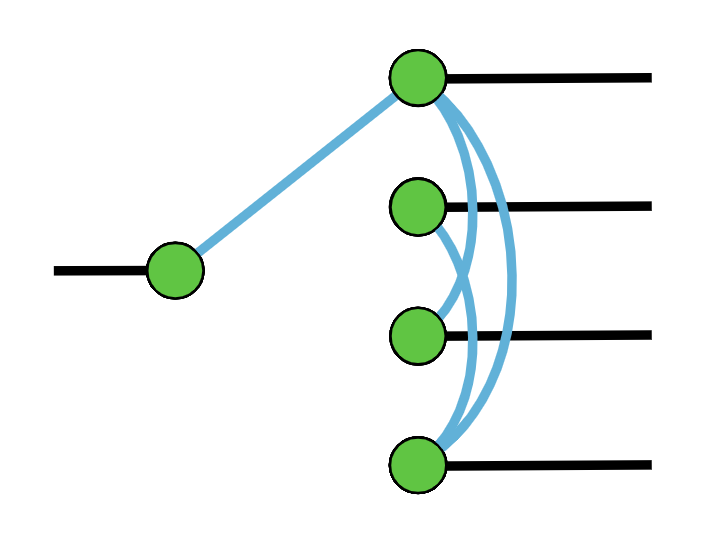}& \includegraphics[scale=0.08]{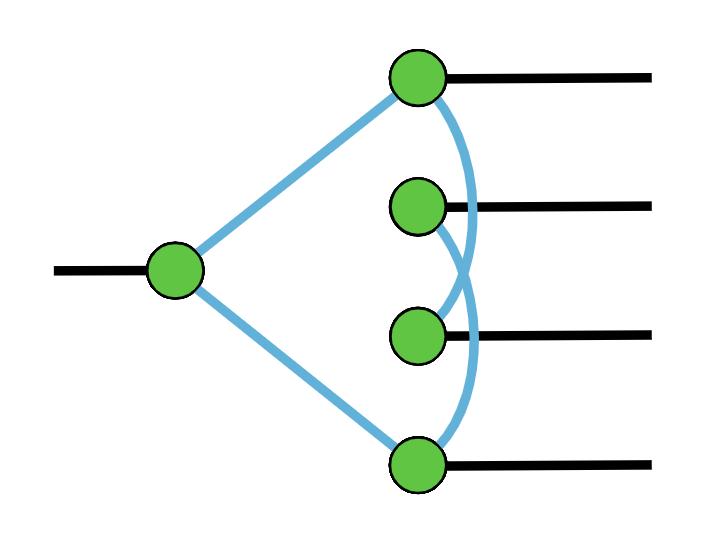} &  \includegraphics[scale=0.08]{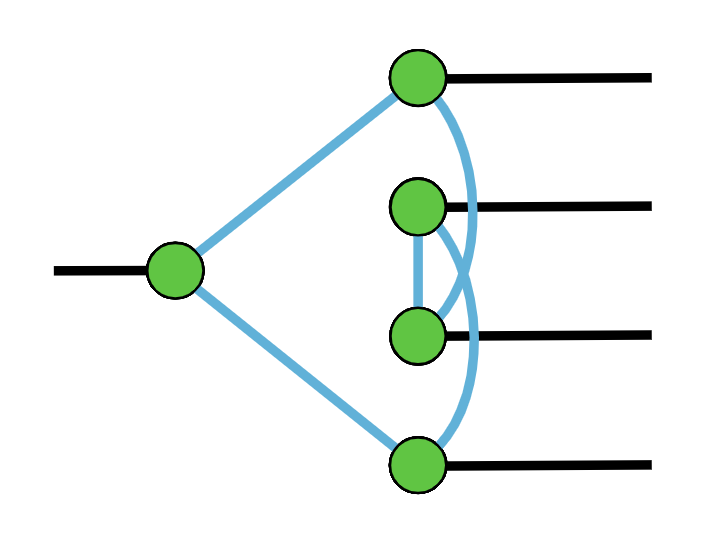}& \includegraphics[scale=0.08]{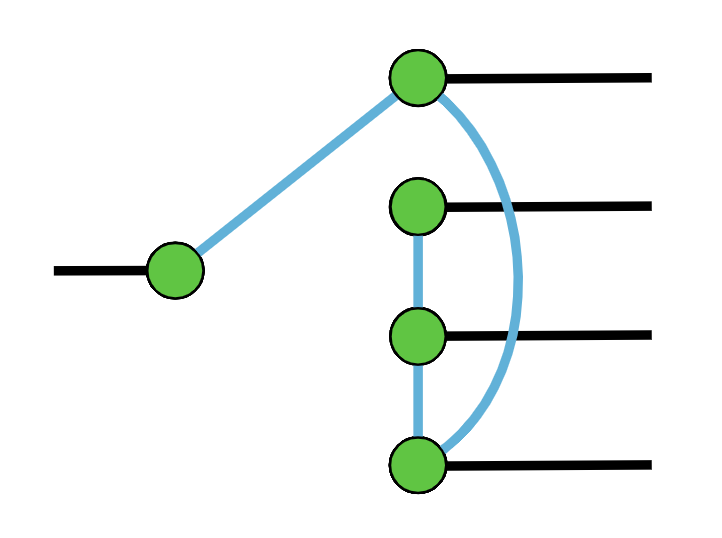}\\
\hline
Size: & 3&30 & 45 & 54 & 84 & 198 \\
\end{tabular}
\egroup

\bigskip

{\small (d) $[4,1]$ codes equivalence classes showing the size of the class underneath a representative.}

\bigskip    

\bigskip

\bgroup
\def\arraystretch{1.5}

\setlength{\tabcolsep}{0.5em}

\begin{tabular}{c|c|c|c|c|c}

Rep: & \includegraphics[scale=0.08]{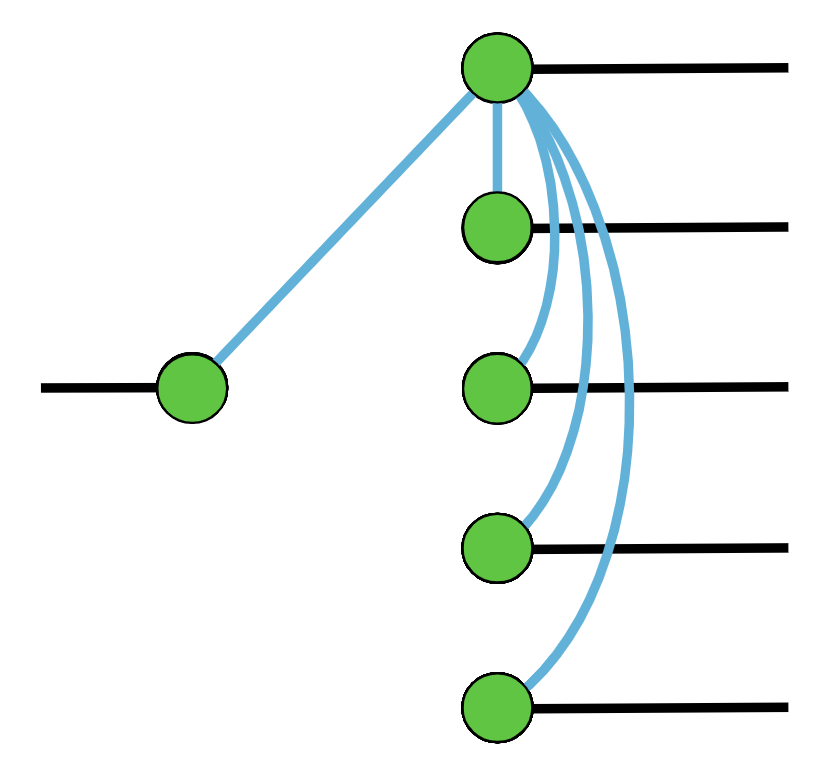}& \includegraphics[scale=0.08]{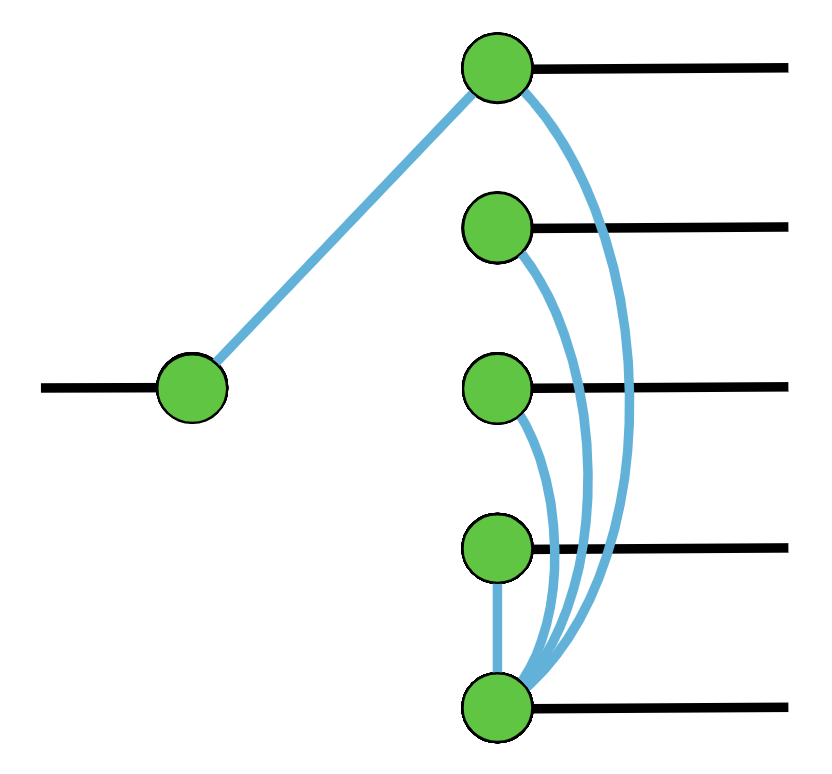} &  \includegraphics[scale=0.08]{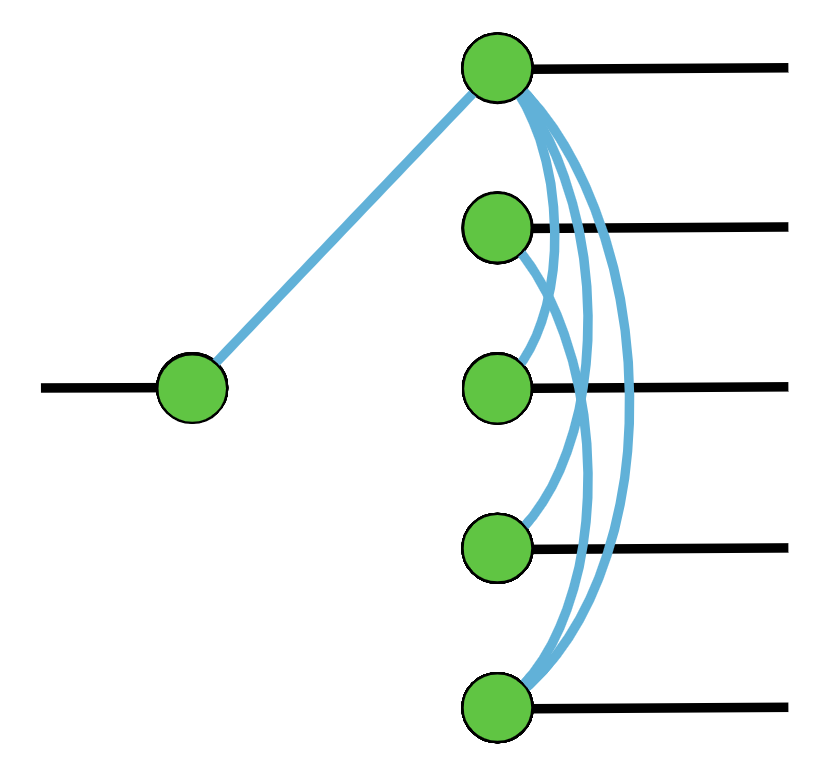}& \includegraphics[scale=0.08]{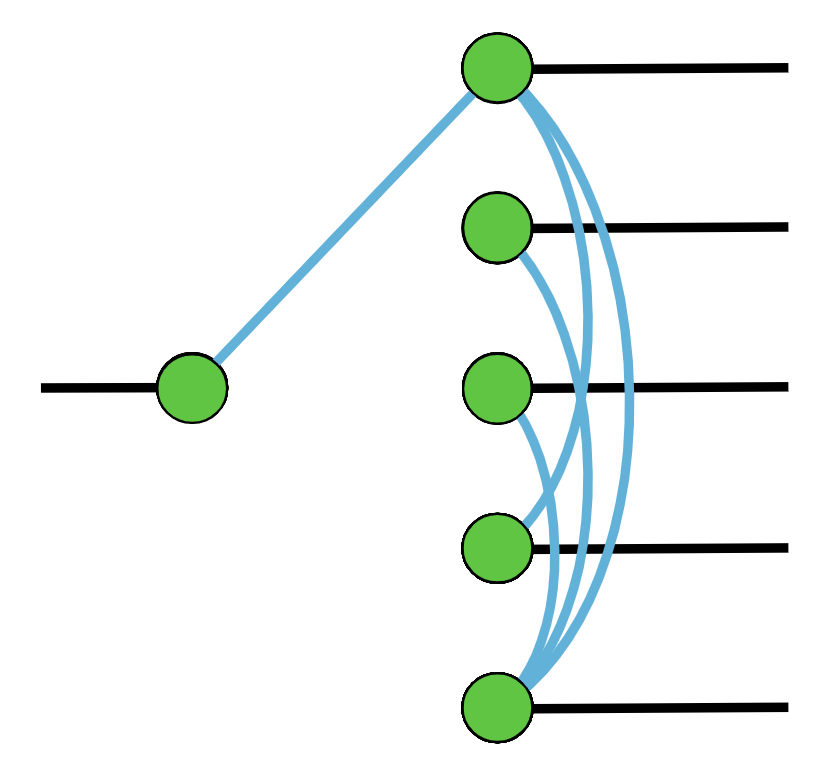} &  \includegraphics[scale=0.08]{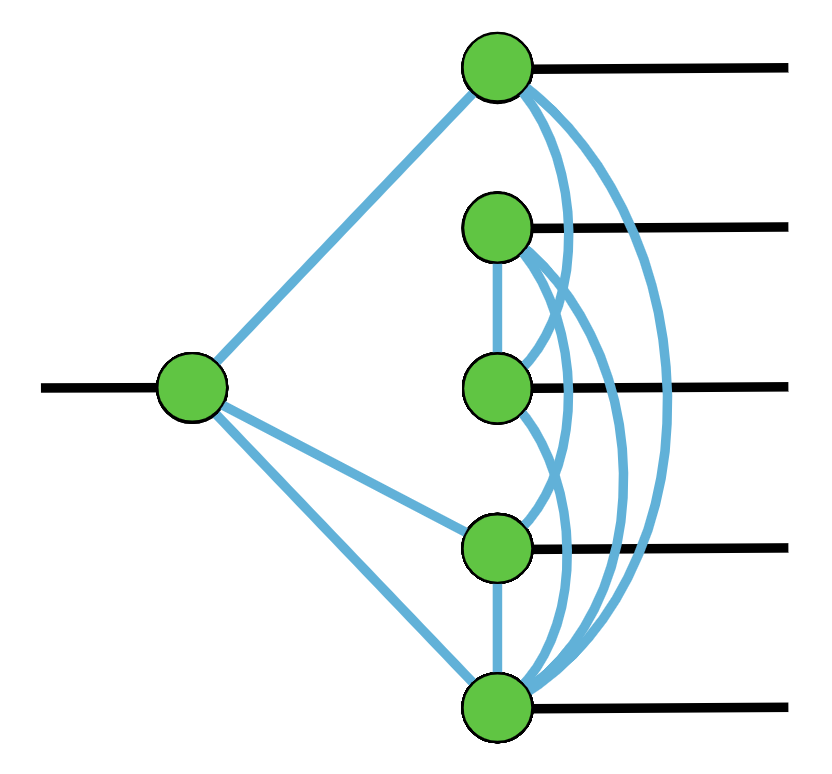}\\
\hline
Size: & 3&39 & 78 & 84 & 84 \\

\hhline{=|=|=|=|=|=}

\includegraphics[scale=0.08]{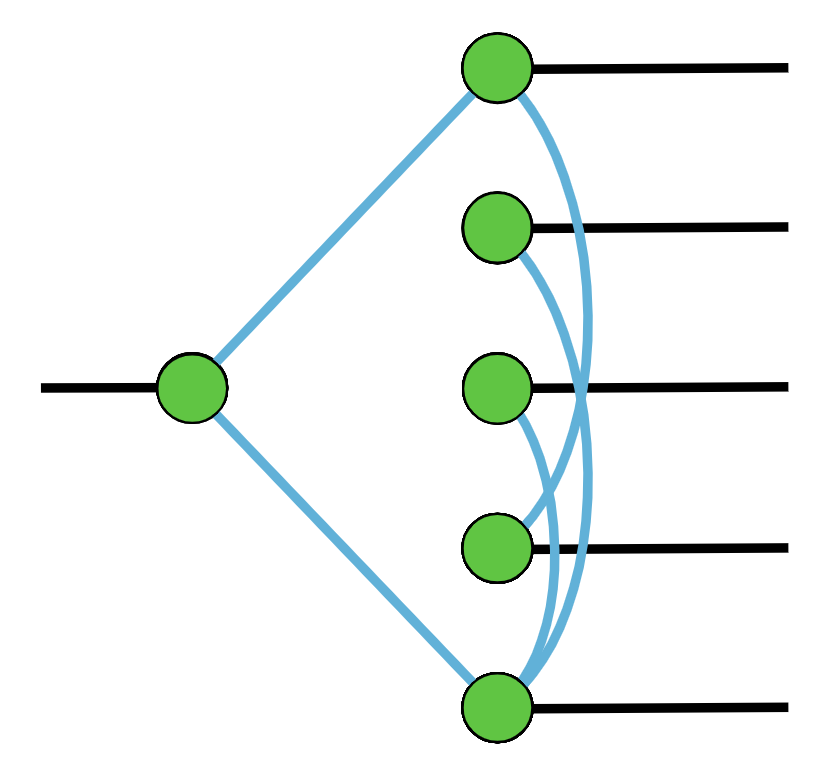}& \includegraphics[scale=0.08]{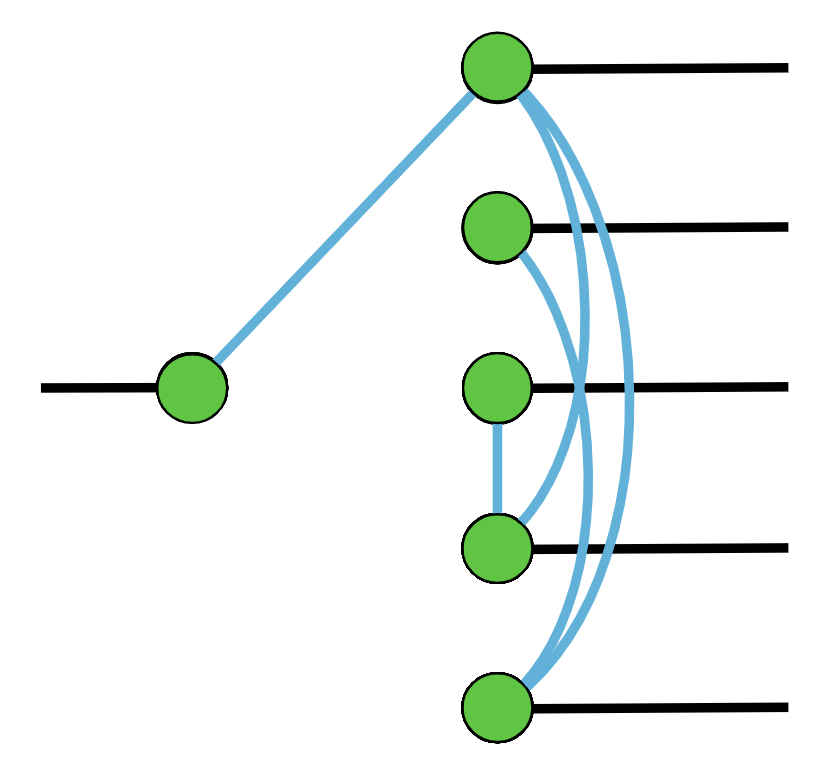} &  \includegraphics[scale=0.08]{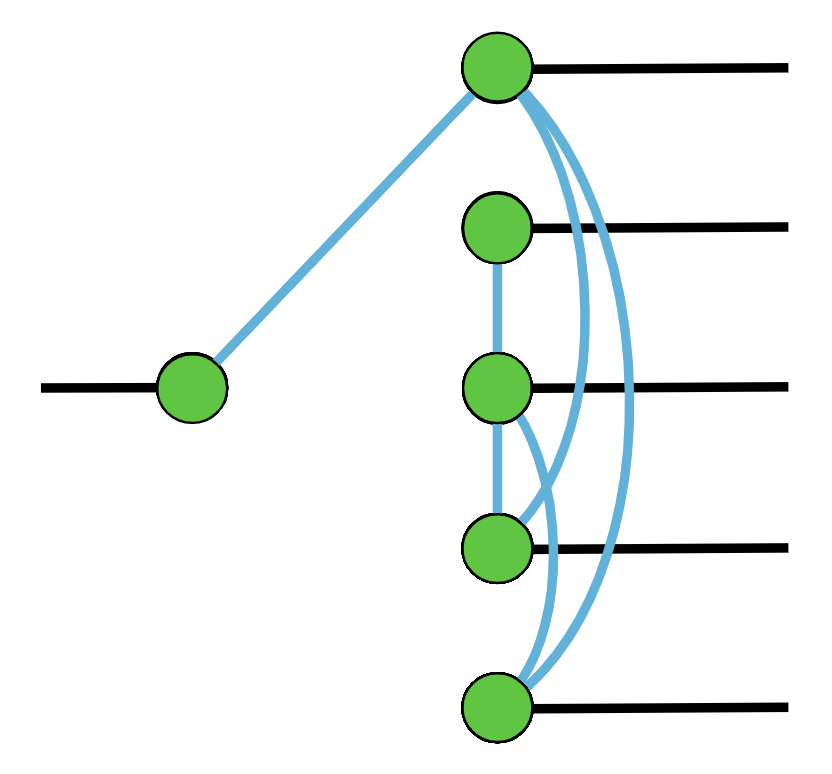}& \includegraphics[scale=0.08]{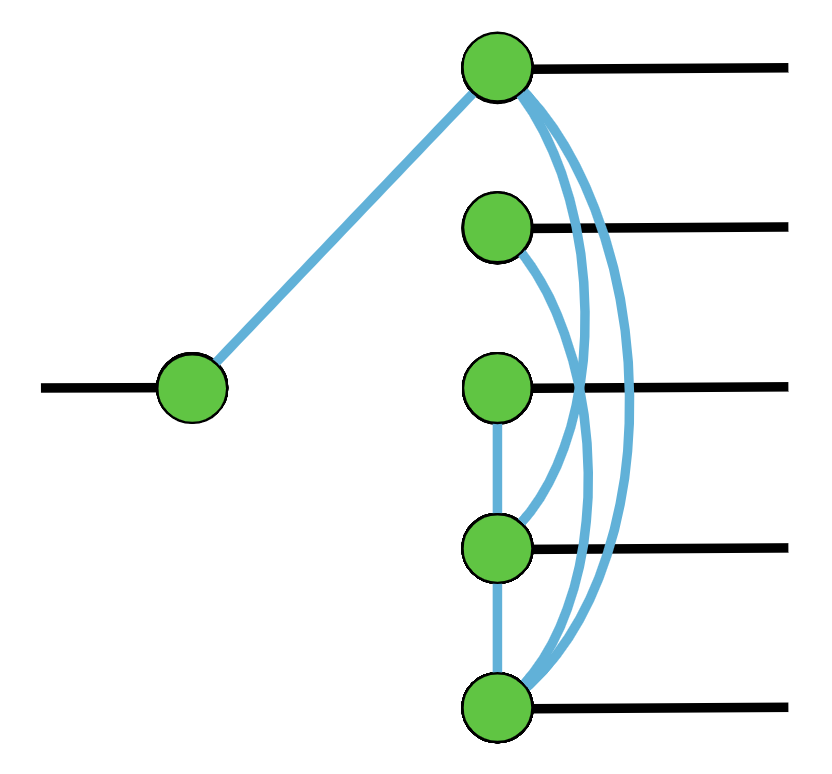} &  \includegraphics[scale=0.08]{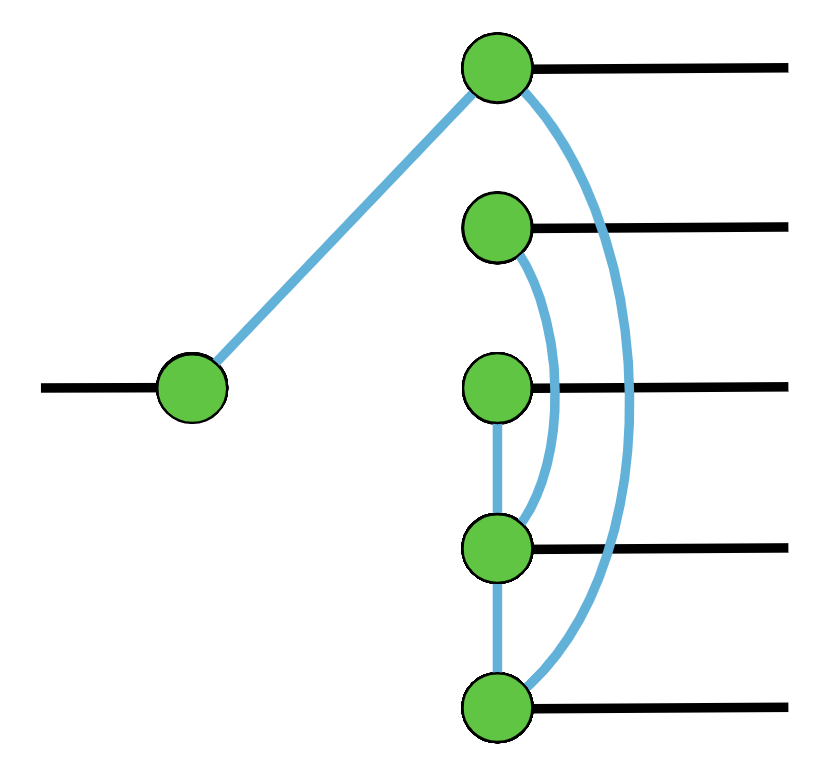}& \includegraphics[scale=0.08]{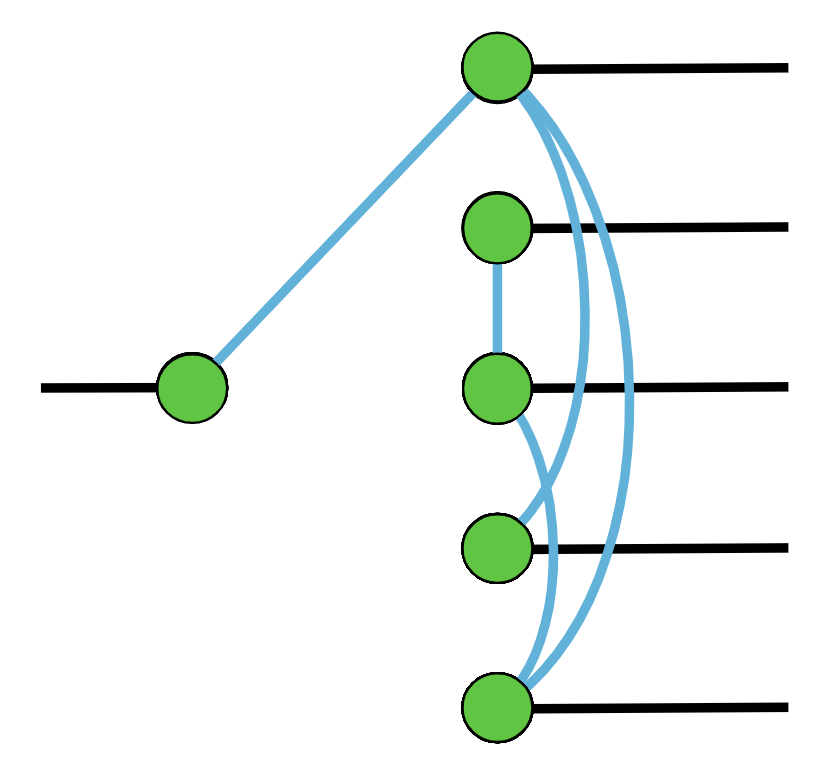}\\
\hline
204 & 297&306 & 315 & 360 & 540 \\

\hhline{=|=|=|=|=|=}

\includegraphics[scale=0.08]{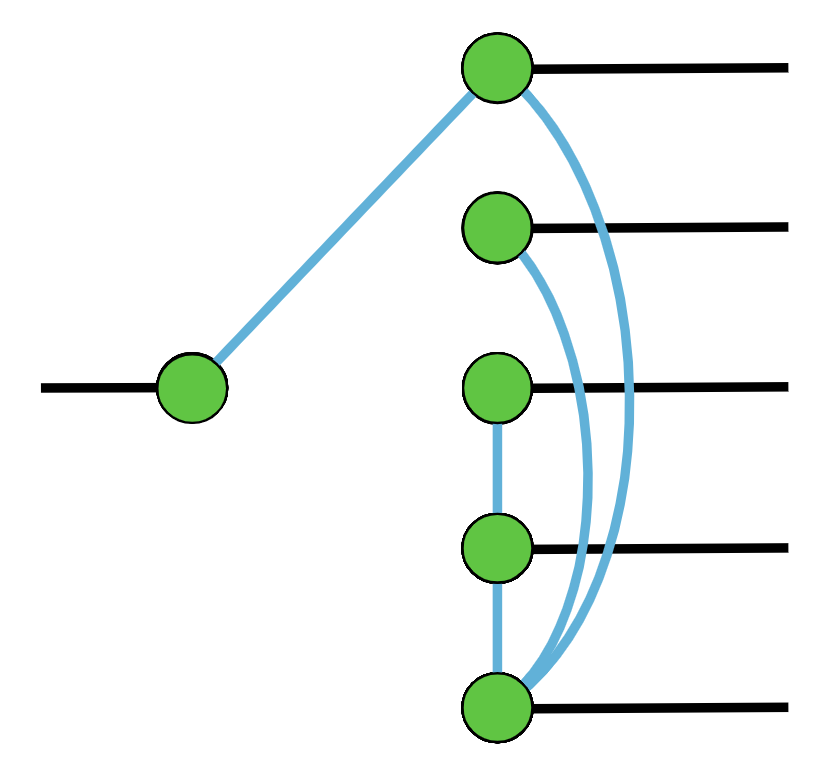}& \includegraphics[scale=0.08]{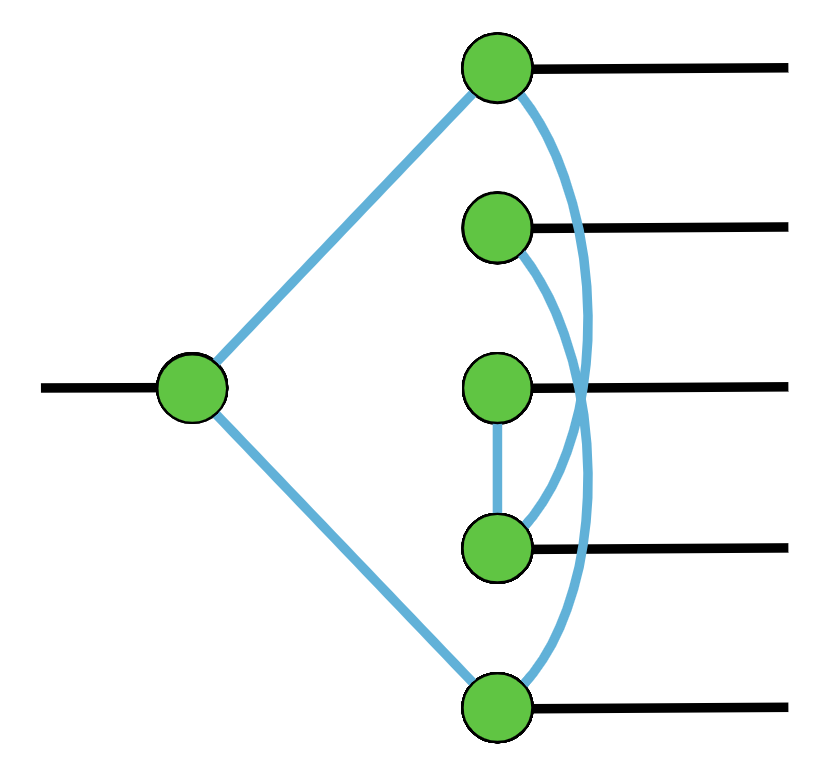} &  \includegraphics[scale=0.08]{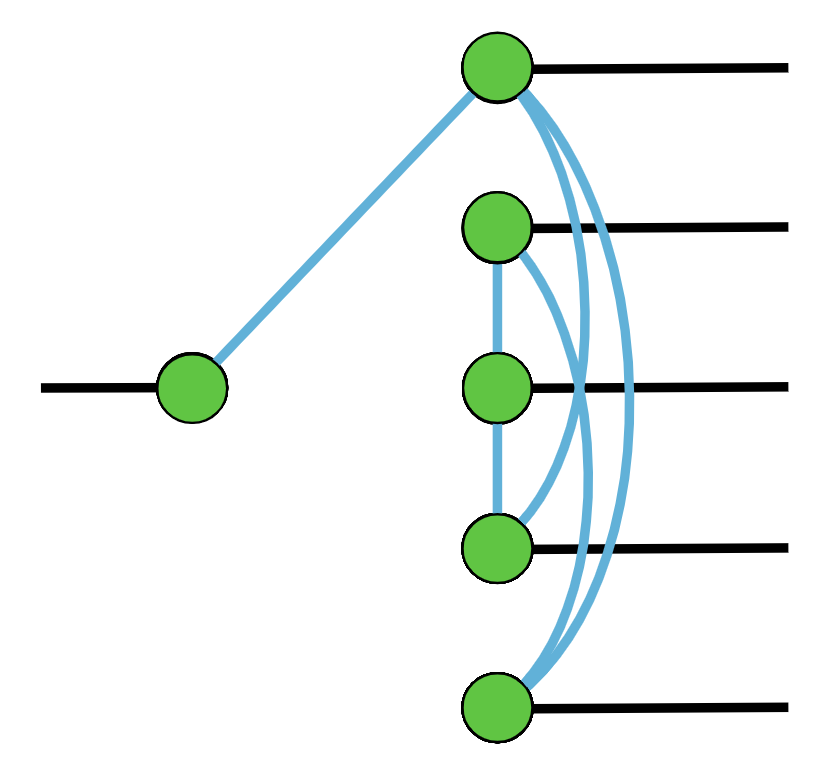}& \includegraphics[scale=0.08]{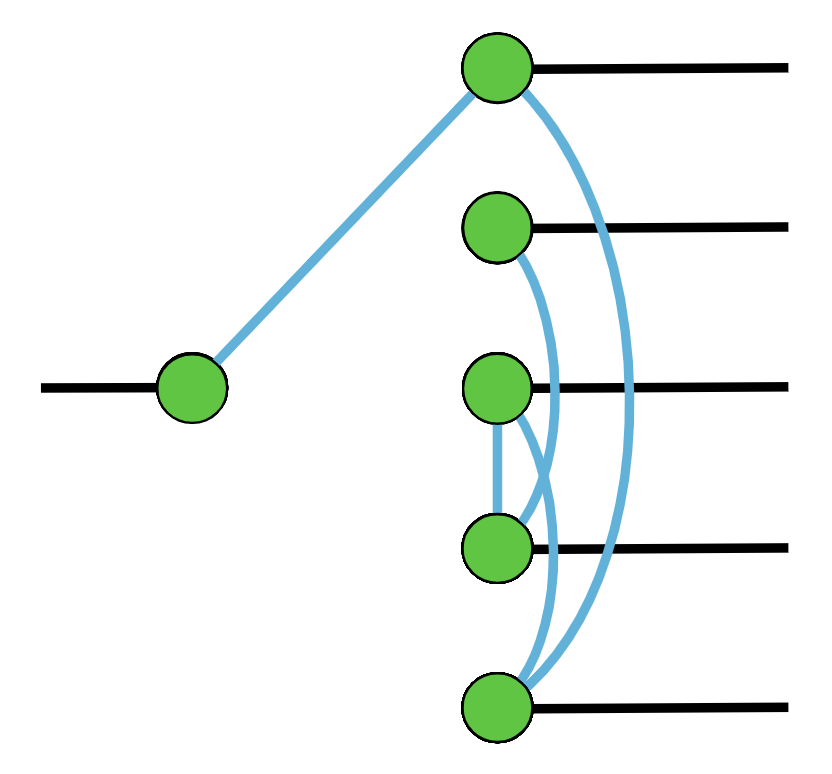} &  \includegraphics[scale=0.08]{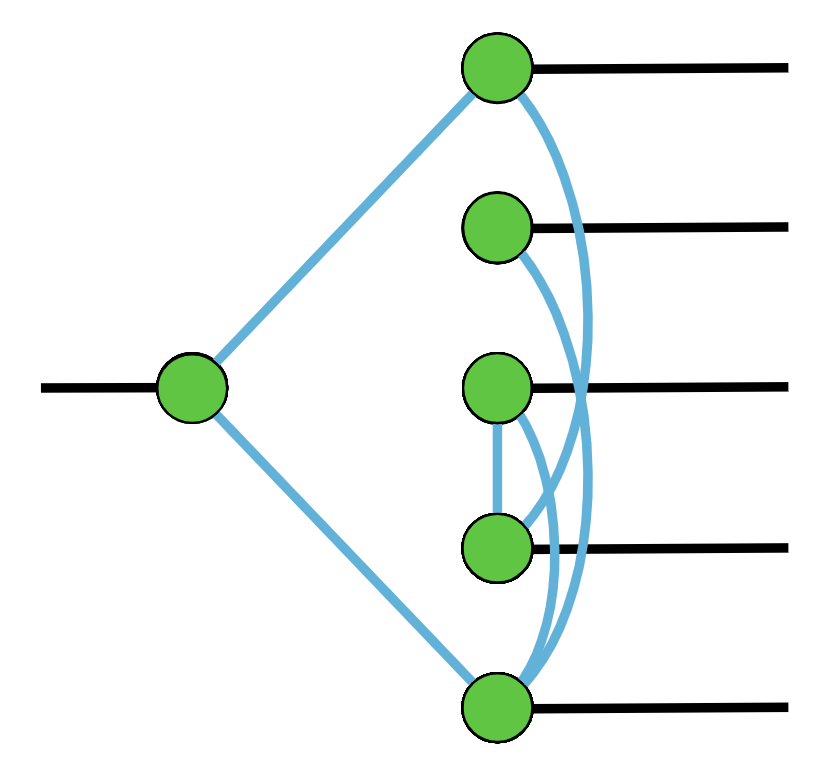}& \includegraphics[scale=0.08]{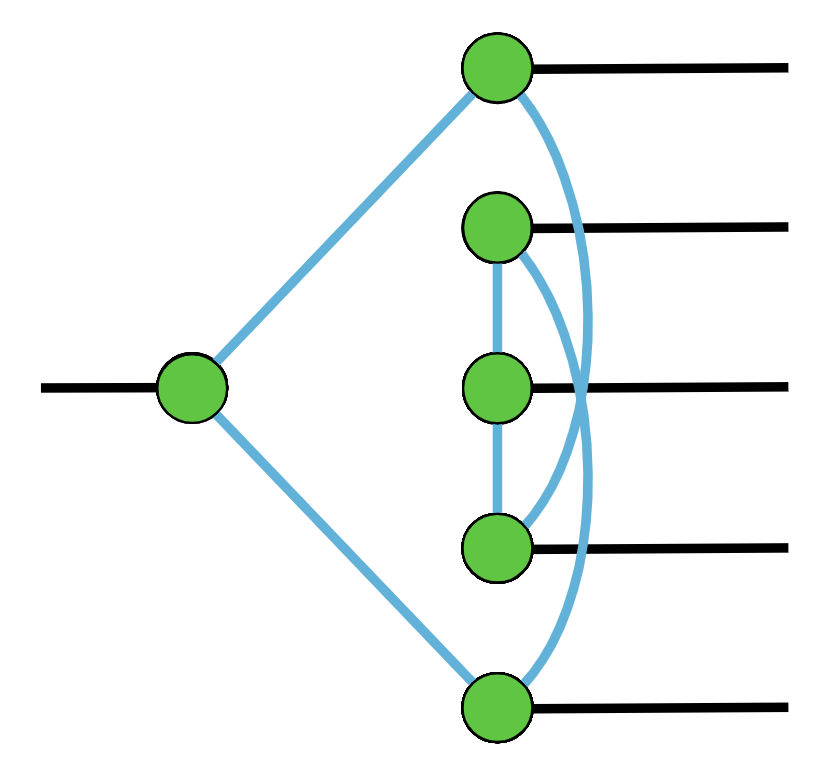}\\
\hline
558 & 1332&1404 & 2376 & 3024 & 3276 \\
\end{tabular}
\egroup

\bigskip

{\small (e) $[5,1]$ codes equivalence classes showing the size of the class underneath a representative.}

\bigskip

\end{center}
    
    \caption{\justifying (a) shows the number of equivalence classes for $[n,1]$ encoder graphs. (b-e) show an element of the equivalence classes to denote the representative of the class and gives the size of the class. We only consider classes in which every graph is prime.}

    \label{[n,1] codes big tables}
\end{figure*}

\begin{figure*}[t!]
    \centering
    \begin{subfigure}[t]{0.5\textwidth}
        \centering
         \bgroup
\def\arraystretch{1.5}

\setlength{\tabcolsep}{0.5em} 
    \begin{tabular}{c|c|c|c}
        $n = 2$ & $n = 3$ & $n = 4$ &  $n = 5$ \\
        \hline
        0 & 1 & 4 & 18 \\
    \end{tabular}

    \egroup
    \vspace{0.23 cm}
        \caption{\justifying Number of equivalence classes for $[n,2]$ codes.}
    \end{subfigure}%
    ~ 
    \begin{subfigure}[t]{0.5\textwidth}
        \centering
        \bgroup
\def\arraystretch{1.5}

\setlength{\tabcolsep}{0.5em} 
    \begin{tabular}{c|c}
        Rep: &\includegraphics[scale=0.08]{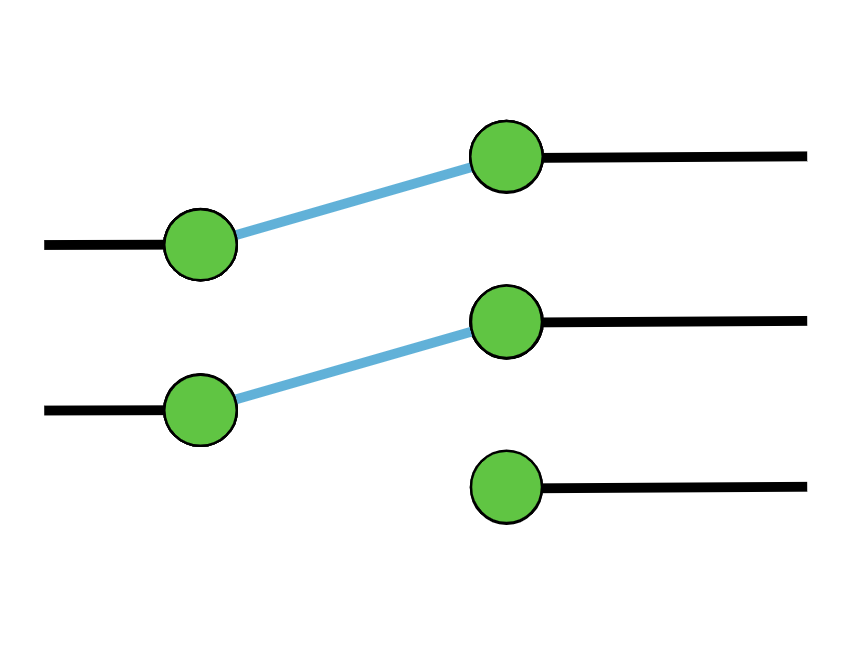} \\
        \hline
         Size: & 9\\
    \end{tabular}
\egroup
        \caption{\justifying $[3,2]$ codes equivalence class reps. and sizes.}
    \end{subfigure}
    %\caption{\justifying Caption place holder}

\begin{center}

\bgroup
\def\arraystretch{1.5}

\setlength{\tabcolsep}{0.5em}

\begin{tabular}{c|c|c|c|c}

Rep: & \includegraphics[scale=0.08]{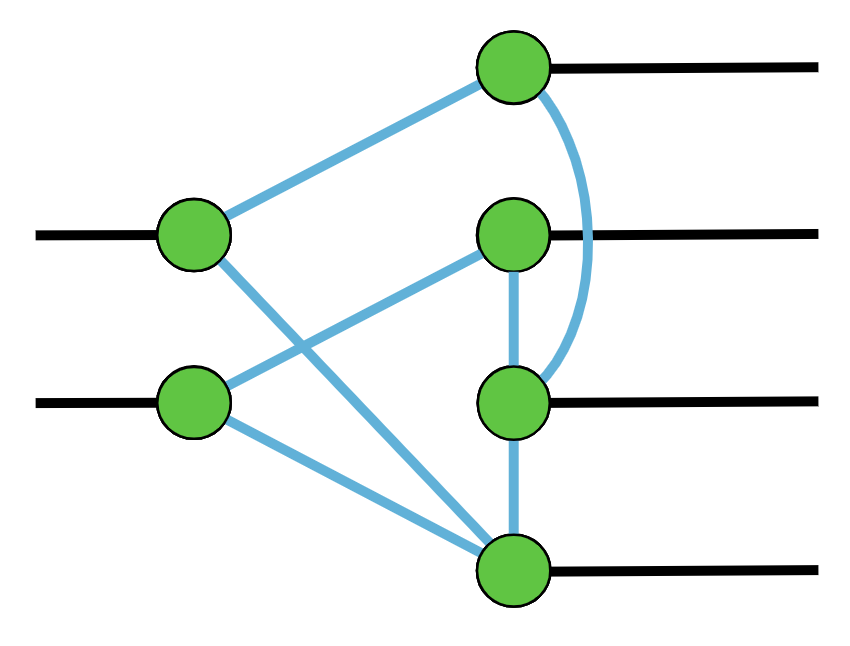}& \includegraphics[scale=0.08]{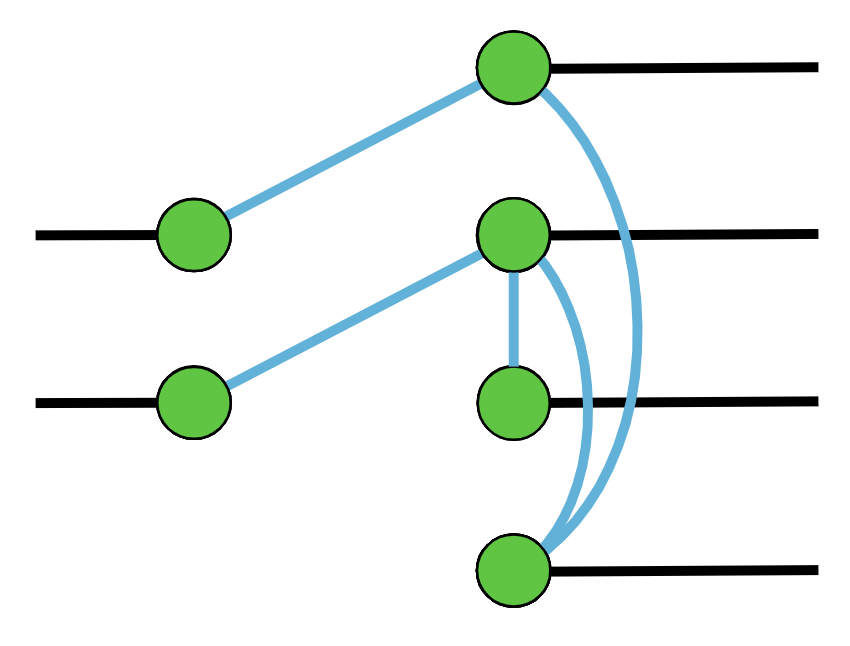}& \includegraphics[scale=0.08]{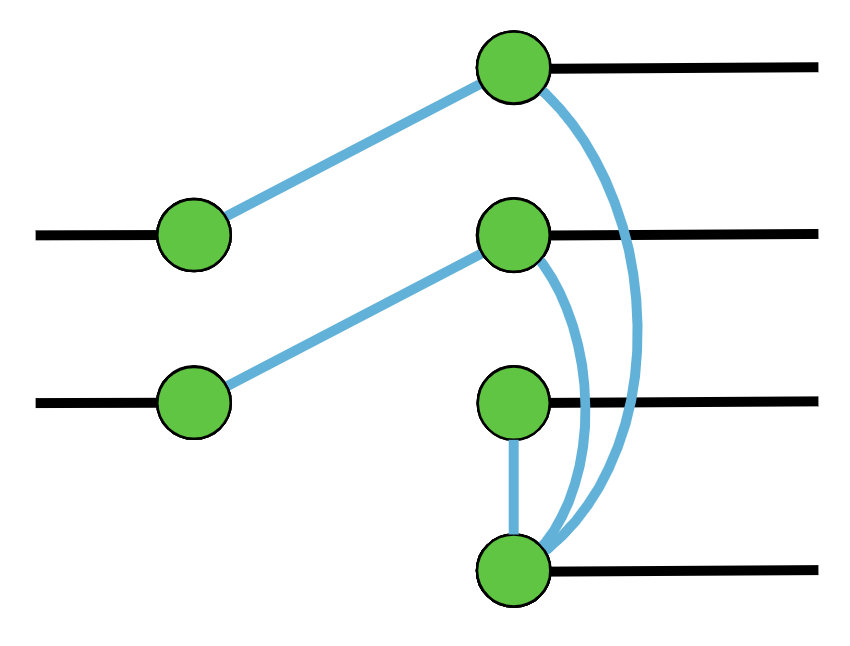}&\includegraphics[scale=0.08]{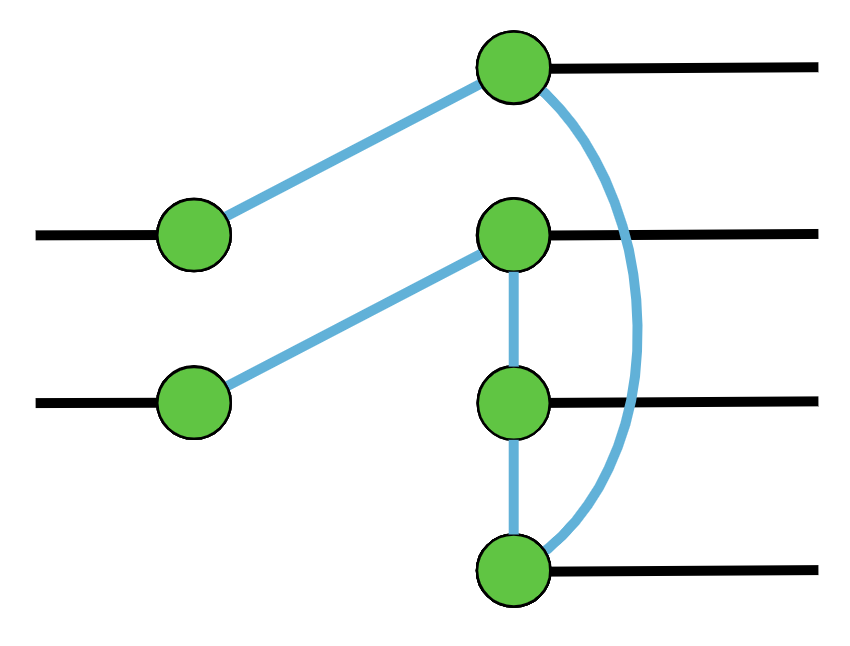}\\
\hline
Size: & 36&45&99&234
\end{tabular}
\egroup

\bigskip

{\small (c) $[4,2]$ codes equivalence classes showing the size of the class underneath a representative.}

\bigskip
\bgroup
\def\arraystretch{1.5}

\setlength{\tabcolsep}{0.5em} 
    \begin{tabular}{c|c|c|c|c|c}

        \includegraphics[scale=0.08]{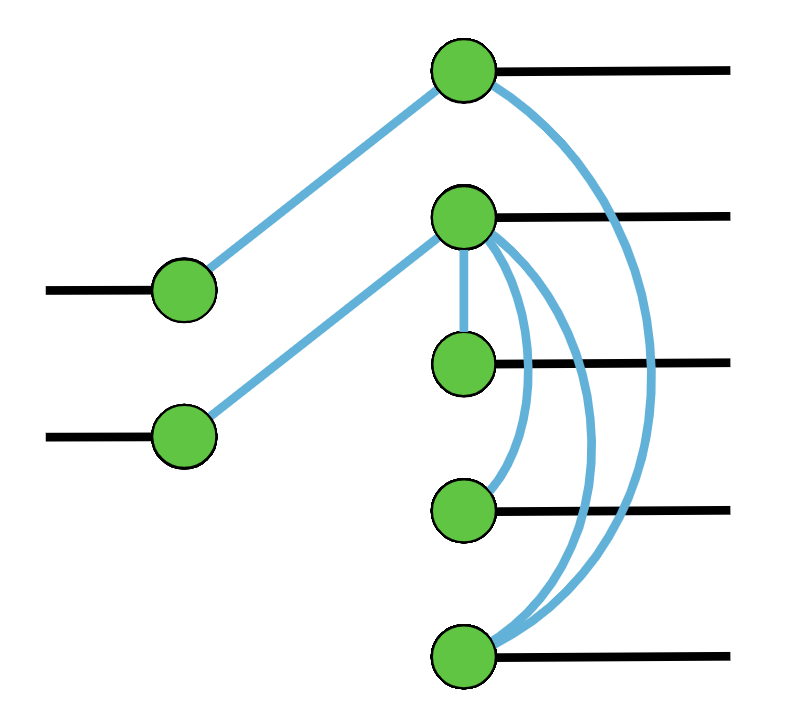} & \includegraphics[scale=0.08]{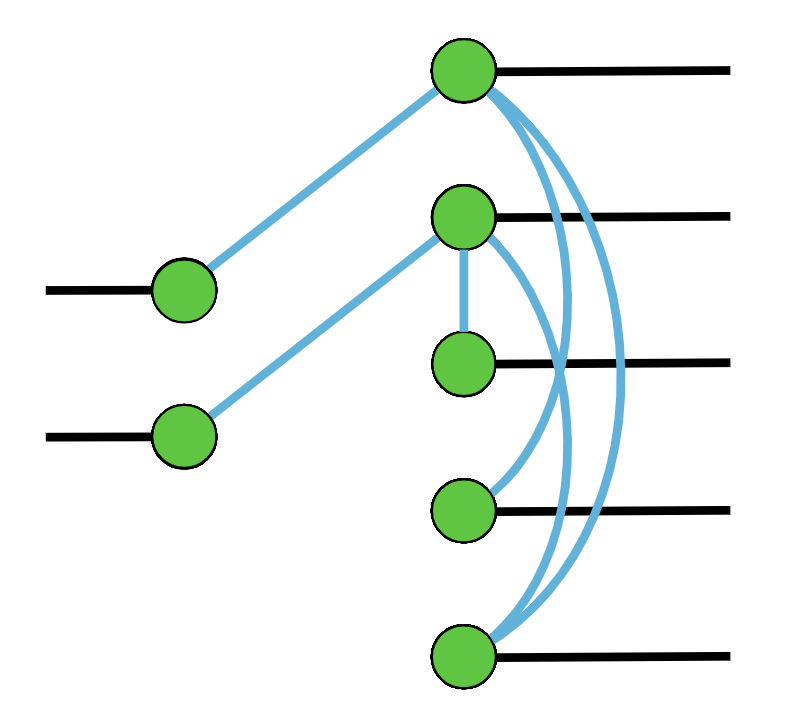} &                                     \includegraphics[scale=0.08]{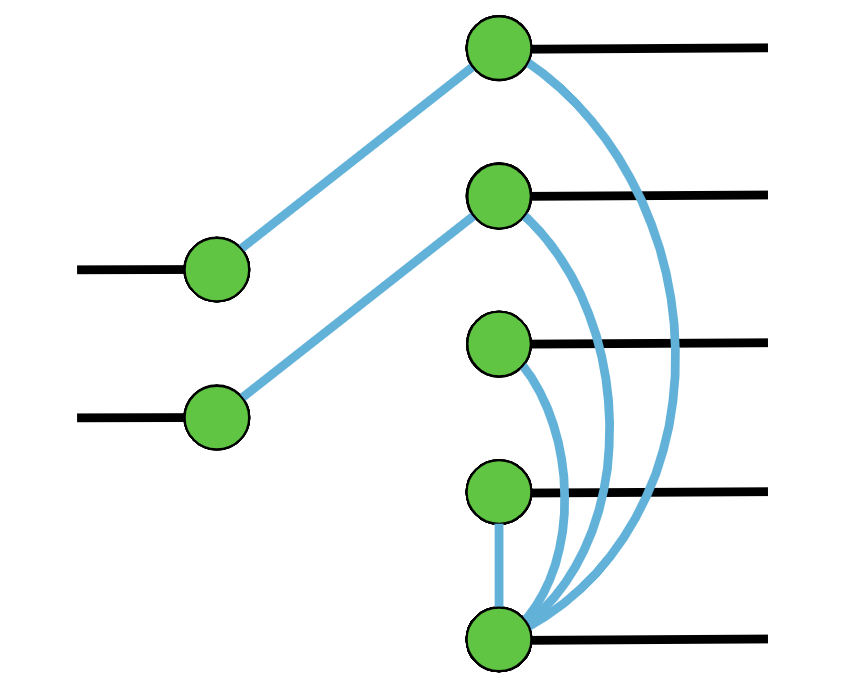} &  \includegraphics[scale=0.08]{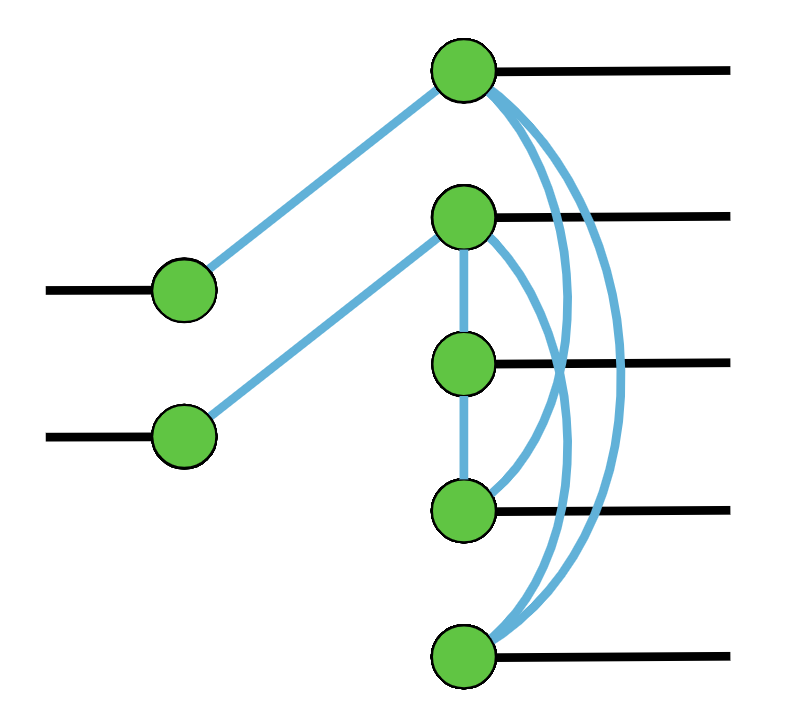} &                    \includegraphics[scale=0.08]{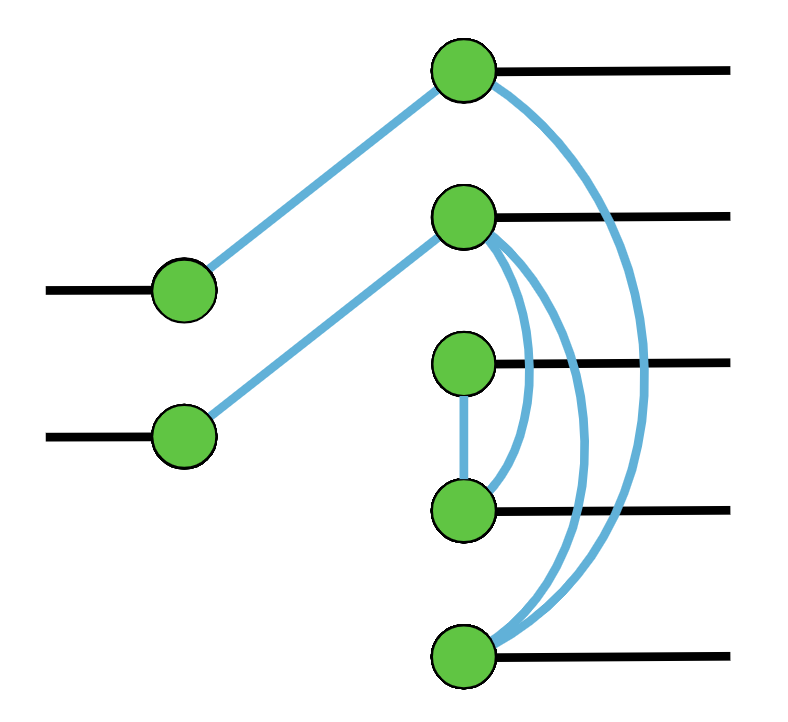} &                \includegraphics[scale=0.08]{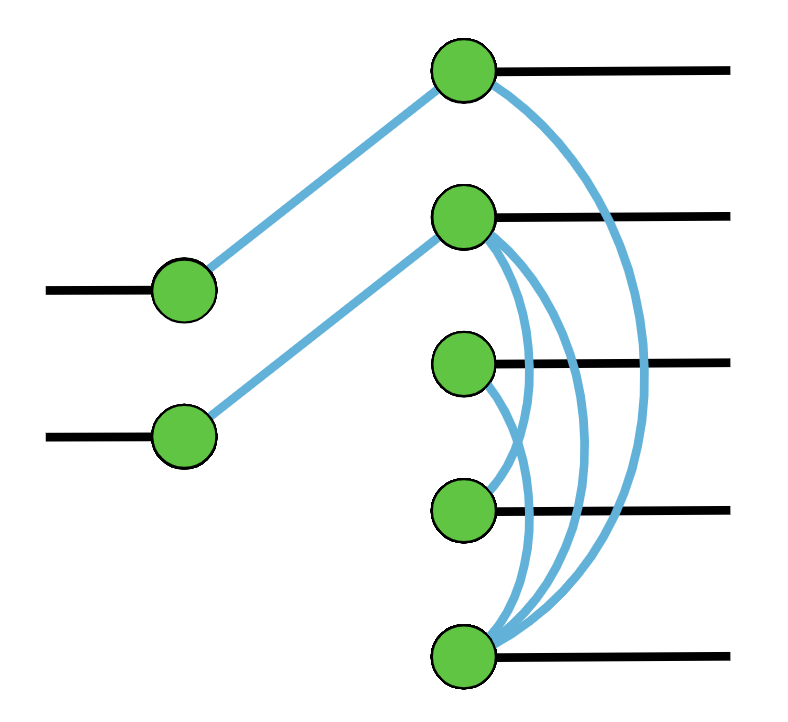} \\
         \hline
          63 & 108 & 144 & 414 & 459 & 486\\

                \hhline{=|=|=|=|=|=}

        \includegraphics[scale=0.08]{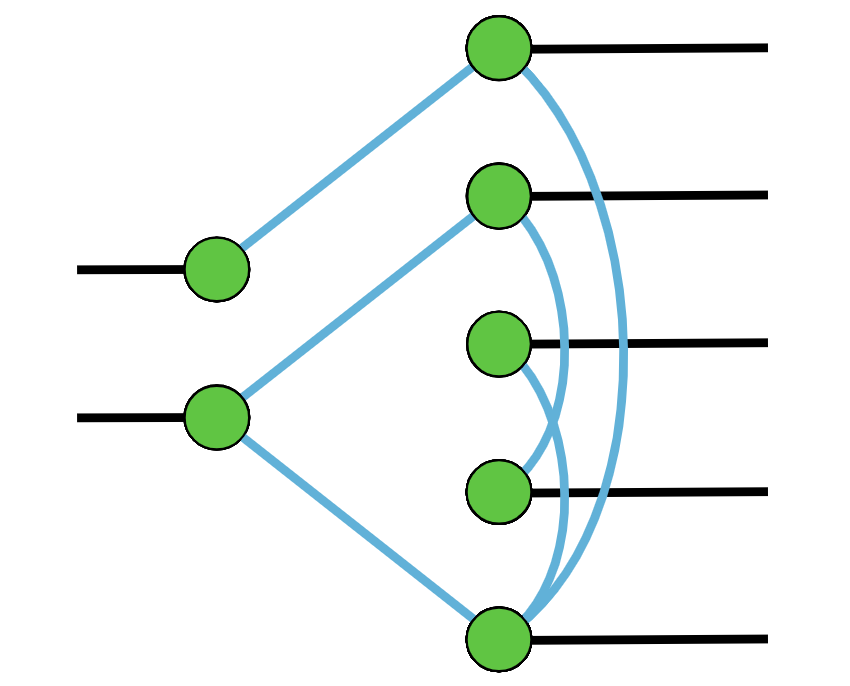}&                \includegraphics[scale=0.08]{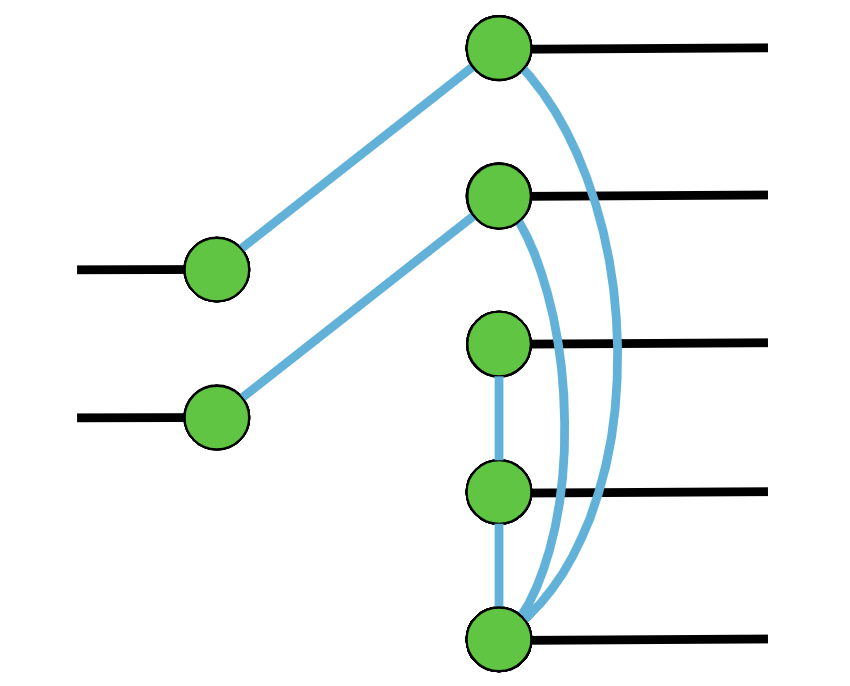}&                  \includegraphics[scale=0.08]{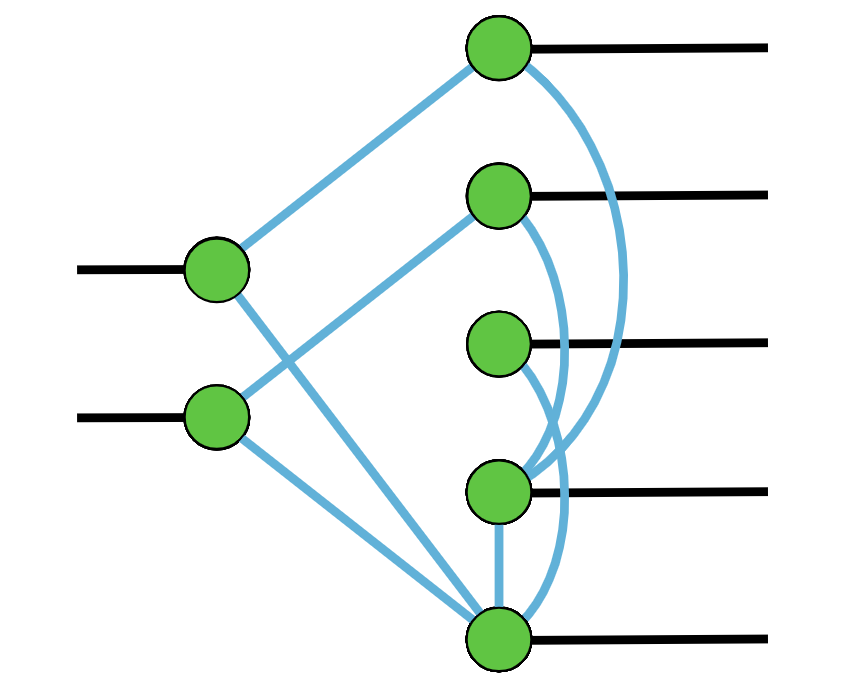} &                 \includegraphics[scale=0.08]{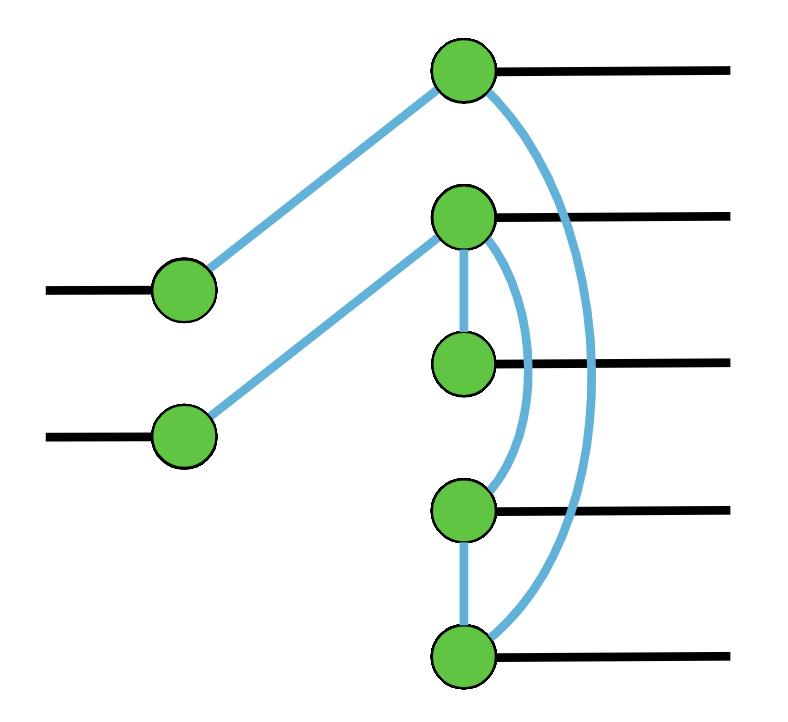} & \includegraphics[scale=0.08]{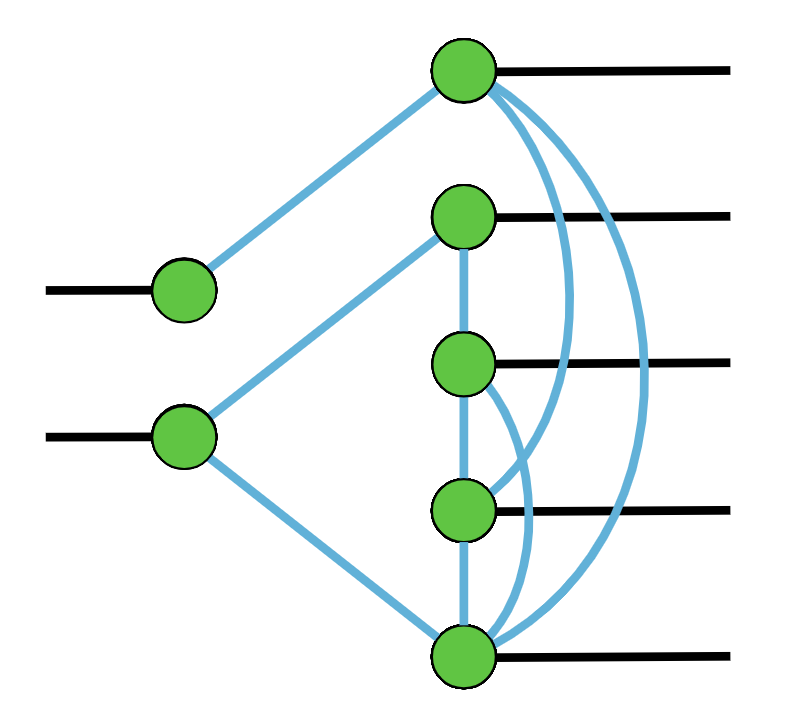} &                      \includegraphics[scale=0.08]{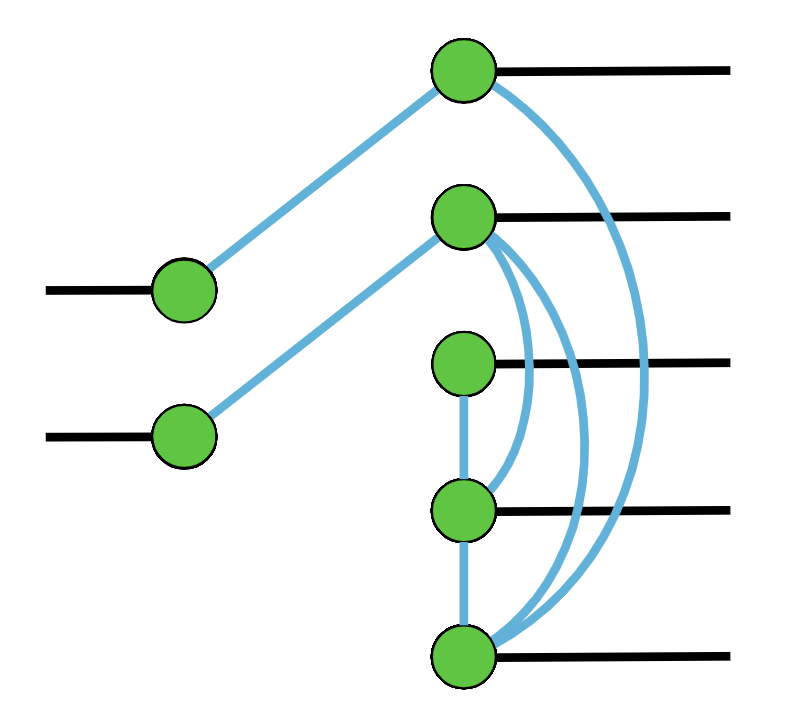}\\
        \hline
        540 & 972 & 1080 & 1080 & 1152 & 1188\\
             \hhline{=|=|=|=|=|=}
        \includegraphics[scale=0.08]{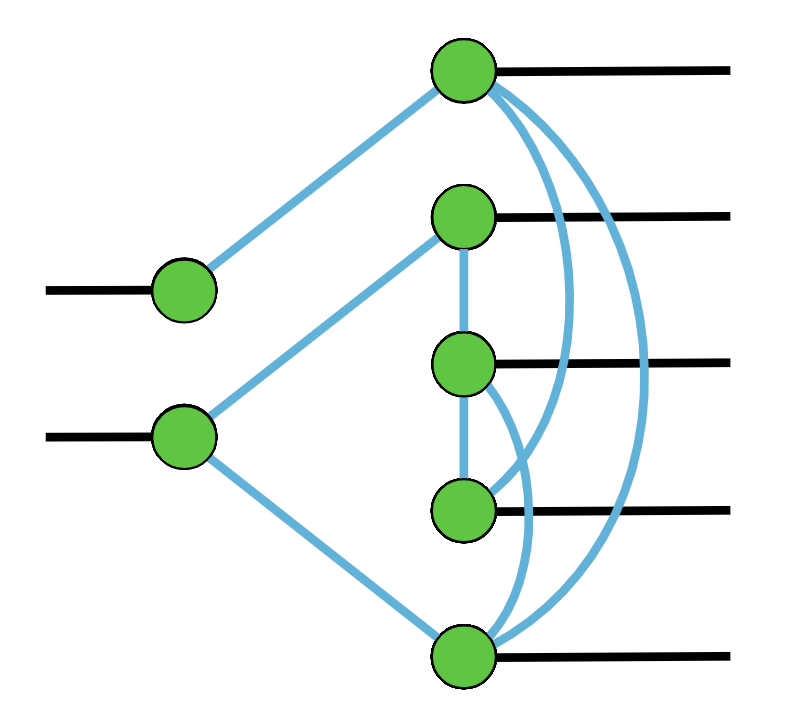} &                    \includegraphics[scale=0.08]{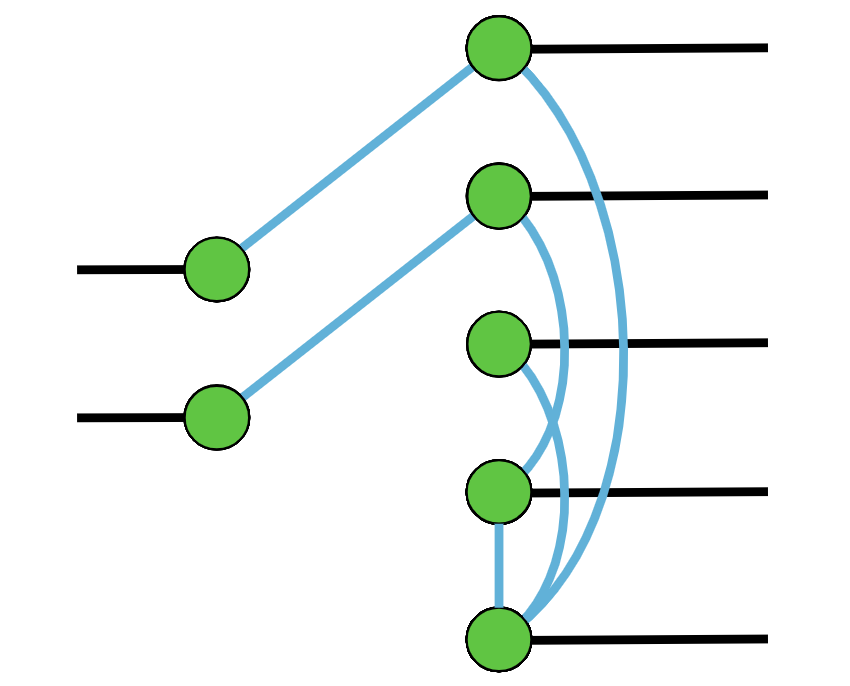}&                         \includegraphics[scale=0.08]{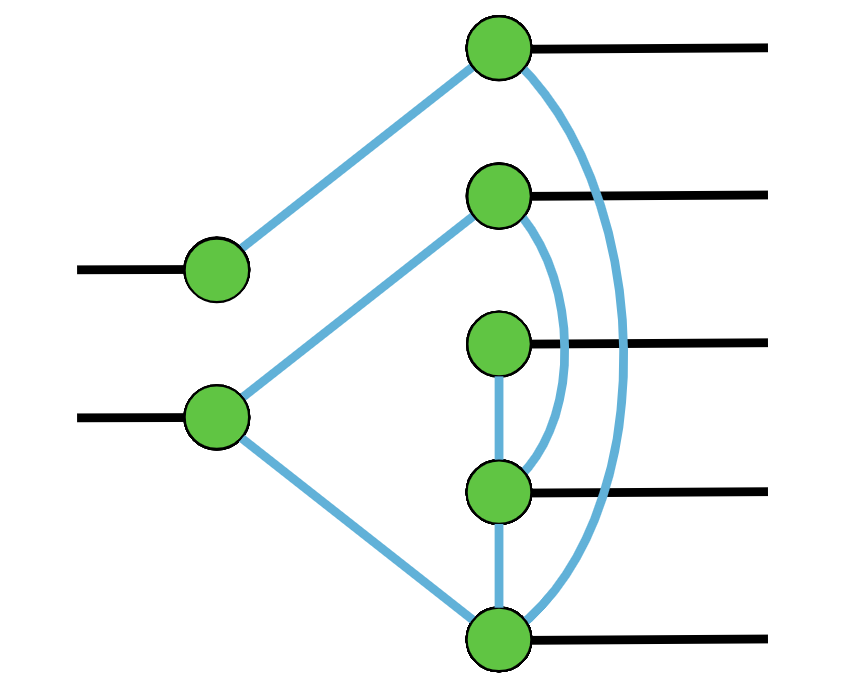} &                                           \includegraphics[scale=0.08]{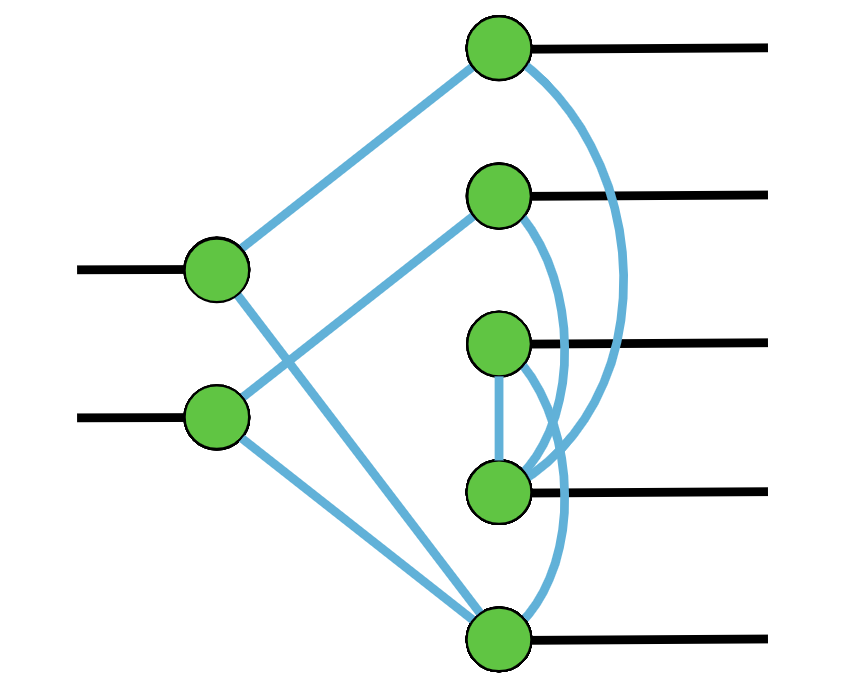} &                                 \includegraphics[scale=0.08]{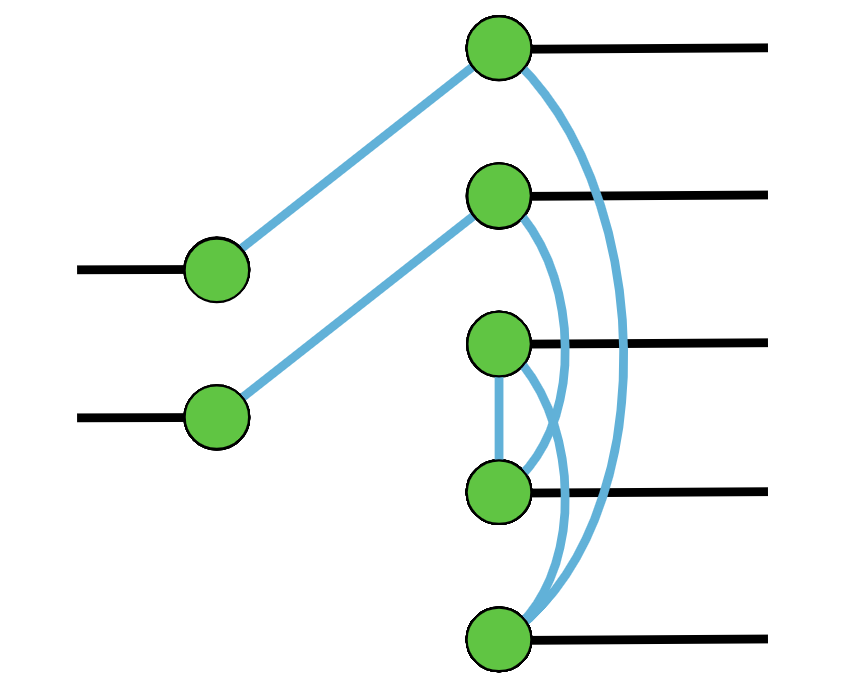} &                      \includegraphics[scale=0.08]{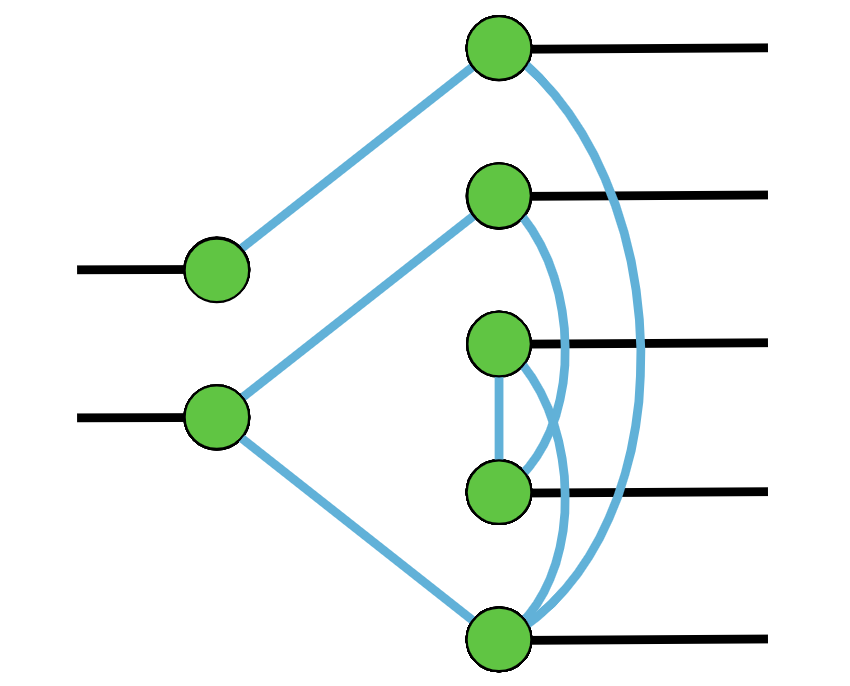} \\
        \hline
           1620 & 2268& 2484 & 4896 & 5184 & 5832\\

    \end{tabular}
    
\egroup

\bigskip

{\small (d) $[5,2]$ codes equivalence classes showing the size of the class underneath a representative.
}

\end{center}
    
    \caption{\justifying (a) shows the number of equivalence classes for $[n,2]$ encoder graphs. (b-d) show an element of the equivalence classes to denote the representative of the class and gives the size of the class. We only consider classes in which every graph is prime.}

    \label{coded codes}
\end{figure*}

\section{Tabulations from code}
\label{sec: tabulations from code}

%First, we choose to ignore local Clifford operations performed on all input or output vertices. Next, any permutation of the input vertices produces an equivalent code. Similarly, any permutation of the output vertices produces an equivalent code.

%Consider the $(n-k)\times n$ adjacency matrix formed by the $n-k$ input vertices and $n$ output vertices. From \cite{KLS}, the encoder stays equivalent when this adjacency matrix is put into row-reduced echelon form (RREF). 

%To find patterns in the equivalence classes, we wrote code to take encoder graphs and assign them into disjoint sets. 

Keeping in mind the new definition of equivalence and the simplifications made on the set of Clifford encoder graphs being considered, we now sort the encoders into their equivalence classes. To do this, we used the {disjoint set algorithm} to split encoders into equivalence classes based on whether an operation from Conjecture \ref{operations} caused one encoder to change into another.

Each encoder graph is converted into an integer based on the \textit{variable} edges present in the graph, which are the input-free edges, pivot-free edges, and free-free edges. Note that the input-pivot, input-input, and pivot-pivot edges are fixed, so these are not included among the variable edges.

The variable edges' values in the adjacency matrix are made into a single integer using a binary representation. Note that this adjacency matrix includes all vertices, so it is a $(n+k)\times (n+k)$ matrix.

For example, in the $[5,2]$ codes, the $7\times 7$ adjacency matrix would look like the following:
$$\begin{pmatrix}
    0 & 0 & 1 & 0 & a_{14} & a_{13} & a_{12} \\
    0 & 0 & 0 & 1 & a_{11} & a_{10} & a_{9} \\
    1 & 0 & 0 & 0 & a_{8} & a_7 & a_6 \\
    0 & 1 & 0 & 0  & a_5 & a_4 & a_3\\
    a_{14} & a_{11} & a_{8} & a_5 & 0 & a_2 & a_1\\
    a_{13} & a_{10} & a_7 & a_4 & a_2 & 0 & a_0\\
    a_{12} & a_{9} & a_6 & a_3 & a_1 & a_0 & 0\\
    
\end{pmatrix}$$

The top-left $4\times 4$ submatrix reflects the fixed input-pivot edges, as well as the lack of input-input edges and pivot-pivot edges. We place $a_0$ near the bottom-right corner and fill in the rows above from right to left.

After converting the ZX diagrams into integers, we use the {disjoint set algorithm}, which is useful for separating the whole set of possible encoder graphs into equivalence classes.

The code takes an integer representation, say $n$, of an encoder graph, performs one operation from Conjecture~\ref{operations} on the encoder graph, then merges the disjoint sets of $n$ and the integer representing the resulting encoder graph. All possible operations are applied, and the resulting values are merged with $n$'s disjoint set. 

When a local complementation is performed on a free output, it is possible that an input-pivot edge is removed. Furthermore, some input-input edges could be added. To fix this, we first employ operation 5 from Conjecture~\ref{operations} to set all input-input edges to 0. Then, operations 3 and 4 are used to turn the submatrix representing the input-to-output adjacency matrix into RREF. Operation 2 is used to put the pivots back into their fixed positions, so they once again correspond to their input vertices.

The results of the code are shown in Figure \ref{[n,1] codes big tables} and \ref{coded codes}. We have only shown the equivalence classes that have a single connected component, similar to the prime codes discussed in Section \ref{sec: prime codes}. The equivalence classes with multiple connected components can always be built up from connected components of smaller sizes, and finding these classes reduces to finding ways to partition the graph into groups of nodes within connected components. As shown in Figure \ref{[n,1] codes big tables}(a) and \Cref{coded codes}(a), the number of equivalence classes increases quickly as the number of output vertices increases.

For the $[n,1]$ codes, the number of equivalence classes for $n=1$ through 4 are $(n-1)!$. However, this pattern seems to break for larger values of $n$. Furthermore, for the $[n,2]$ codes, the number of equivalence classes for $n=2$ through 5 are $(n-2)\cdot (n-2)!$. It is possible that these patterns for small $n$ arise from the number of ways to permute the non-pivot output nodes.

Furthermore, in Figure \ref{[n,1] codes big tables}(b-e) and \Cref{coded codes}(b-d), the equivalence classes show a variety of sizes, with many of the sizes having a factor of 3 or 9. The values of the sizes generally have many divisors, suggesting nice combinatorial patterns.

%\begin{conjecture}
%    The large number of factors arises from the ability to permute the output nodes, which roughly increases the number of possible equivalent encoder graphs by $(n-k)!$.
%\end{conjecture}

%If the free output nodes can be permuted freely amongst each other, the factor of $(n-k)!$ naturally arises, giving a rough estimate on how these permutations affect the class sizes.

%\begin{conjecture}
%    The sizes of the $[3,2]$, $[4,2]$, and $[5,2]$ equivalence classes made up of graphs with a single connected component share a common factor of 9 coming from $k^2$, where $k$ is the number of inputs.   
%\end{conjecture}

We make the following conjecture regarding the recurring factors of 3 and 9.

\begin{conjecture}
    For positive integers $n > k$, the equivalence classes of $[n,k]$ codes with a single connected component have sizes divisible by $3^k$.
\end{conjecture}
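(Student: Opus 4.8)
The plan is to exhibit a $3^k$-element subgroup of operations (or an explicit orbit-counting argument) that acts freely on each single-connected-component equivalence class. The natural candidates come from the pivot structure: after the simplifications of Claim~\ref{claim: simplifications}, every $[n,k]$ encoder in the class has $k$ input nodes, each with its own pivot output node among the top $k$ outputs, and by hypothesis the diagram is connected. For each pivot node $v_j$ there is a local operation available on its free output edge, but since local Clifford gates have been stripped, the relevant ``size-preserving'' moves are instead the local complementation at $v_j$ together with local complementation at the corresponding input node. First I would analyze the effect of local complementation at a pivot $v_j$ followed by local complementation at input $j$ (and the re-RREF-ing that Section~\ref{sec: tabulations from code} describes): I expect this composite to act as an order-$3$ element on the variable-edge data associated to $v_j$, because local complementation is an involution but the \emph{pair} of complementations interleaved with the row-reduction step generates an $S_3$- or $\mathbb{Z}_3$-type action on the neighbourhood of the pivot, analogous to the well-known fact that local-complementation moves on a single qubit generate the order-$3$ cyclic rotation of the local Clifford group modulo Paulis. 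Then I would argue these $k$ order-$3$ actions, one per pivot, commute (they act on disjoint pivot neighbourhoods up to corrections that vanish because pivot--pivot edges are excluded) and generate a group of order $3^k$.

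The key steps, in order: (1) fix a representative $\mathcal{D}$ in the class satisfying all four bullets of Claim~\ref{claim: simplifications}; (2) for each $j \in \{1,\dots,k\}$ define the operation $\tau_j$ as the Conjecture~\ref{operations}-legal composite (local complementation at pivot $v_j$, then the input-cleanup/RREF-restoration of Section~\ref{sec: tabulations from code}) and show $\tau_j^3 = \mathrm{id}$ while $\tau_j \neq \mathrm{id}$ on any connected diagram --- this nontriviality is where connectedness is used, since a pivot in a connected component necessarily has a non-pivot neighbour for the complementation to act on; (3) show $\tau_i \tau_j = \tau_j \tau_i$ for $i \neq j$, using that pivots are mutually non-adjacent so the neighbourhood toggles do not interfere; (4) conclude $\langle \tau_1,\dots,\tau_k\rangle \cong (\mathbb{Z}_3)^k$ acts on the class; (5) show the action is free, i.e. the stabilizer of any diagram is trivial, so every orbit has size exactly $3^k$, hence $3^k \mid |\text{class}|$.

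The main obstacle will be step (2), verifying that the correct Conjecture~\ref{operations}-legal composite genuinely has order $3$ rather than order $2$ or something larger, and that it is nontrivial precisely when the diagram is connected. A single local complementation is an involution, so the order-$3$ behaviour must come from the interaction with the RREF-restoration bookkeeping (operations 2, 3, 4, 5), and pinning down exactly which sequence of these yields a clean $\mathbb{Z}_3$ action --- and checking it does not secretly depend on the rest of the diagram in a way that breaks commutativity in step (3) --- is the delicate part. A fallback, if the clean group-action picture fails, is a direct orbit-counting / double-counting argument: partition the class by the isomorphism type of the diagram after deleting the $k$ pivots and their incident variable edges, and count the number of ways to re-attach each pivot, showing this count is always a multiple of $3$ (independently for each pivot, giving the $3^k$), with connectedness ensuring no pivot attachment is forced. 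Secondary obstacles are confirming freeness in step (5) --- ruling out a diagram fixed by some nontrivial $\tau_i$ --- and making sure the argument respects the ``single connected component'' hypothesis throughout, since the analogous statement can fail when a pivot is attached to an isolated component.
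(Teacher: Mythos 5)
The statement you are targeting is labelled a \emph{Conjecture} in the paper, and the paper does not prove it. The authors only offer the heuristic remark that a factor of $3$ should arise from each input--pivot pair via the six local Clifford operations $\{I,Z,S,SZ,H,HZ\}$ on the pivot's free edge, together with the single worked example of $[n,n-1]$ codes, where they observe a class of size $3^{n-1}$. So there is no ``paper's own proof'' to compare against; the right benchmark is whether your proposal is a credible route to establishing a result the authors left open.

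Your high-level intuition matches the authors' speculation closely: a factor of $3$ per input--pivot pair, tied to local operations at the pivot. That alignment is a good sign. But as you yourself flag, every load-bearing step in the plan is unverified, and at least two of them look genuinely hard rather than merely routine. First, the order-$3$ claim for $\tau_j$ is asserted by analogy with the order-$3$ rotation in the local Clifford group modulo Paulis, but the actual operation in this setting is a composite of a local complementation (an involution) with re-RREF-ing bookkeeping; nothing in the proposal shows this composite is even well defined independently of the bookkeeping choices, let alone that it closes up into a $\mathbb{Z}_3$. Second, and more seriously, freeness of the putative $(\mathbb{Z}_3)^k$-action (your step (5)) is exactly the content of the conjecture: if the action has nontrivial stabilizers at some diagrams, orbit sizes could be $1$ or $3^j$ with $j<k$, and divisibility of the total class size by $3^k$ would not follow without a separate counting argument. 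You would need to rule out fixed points, and the paper gives no mechanism for doing so. The tabulated class sizes (e.g.\ $84$, $84$, $558$ for $[5,1]$; $1080$, $1080$ for $[5,2]$) are all divisible by the right power of $3$, which is consistent with the conjecture, but the repeated sizes also hint that the class structure is subtler than a single free group action. In short: this is a reasonable research plan in the same spirit as the authors' own speculation, not a proof, and the gaps you name (order-$3$, commutativity, freeness) are real and remain open. Your ``fallback'' double-counting idea is worth pursuing, but it too would require connectedness to enter in a concrete combinatorial way that is not yet visible.
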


Note that the factor of 9 is shared across the sizes of the equivalence classes of encoder graphs for the $[3,2]$, $[4,2]$, and $[5,2]$ codes. Similarly, there is a common factor of 3 across the sizes for the $[2,1], [3,1],[4,1],$ and $[5,1]$ codes. Also, based on the diagrams in Figure \ref{[n,1] codes big tables}(b) and Figure \ref{coded codes}(b), an analogous diagram can be drawn for $[4,3]$ codes, with the free output node having an edge to each pivot node. The resulting diagram is in a class of size 27 because each input-pivot pair has three different ways to share edges with the free node. Note that this is also the only class with prime graphs among $[4,3]$ codes. Thus, for $[n,n-1]$ codes, there is a clear way to see why the analogously constructed graphs have a class size of $3^{n-1}$. In general, we speculate that a factor of 3 arises from each input-pivot pair and the effects of different local Clifford gates (among $\{I,Z,S,SZ,H,HZ\}$) when applied on the pivot free edge. If this holds, then all equivalence classes for $[n,k]$ codes for $n > k$ would have sizes divisible by $3^k$.

\section{Equivalence classes with bipartite form(s)}
\label{sec:bipartite forms}

\begin{figure*}
    \includegraphics[scale=0.5]{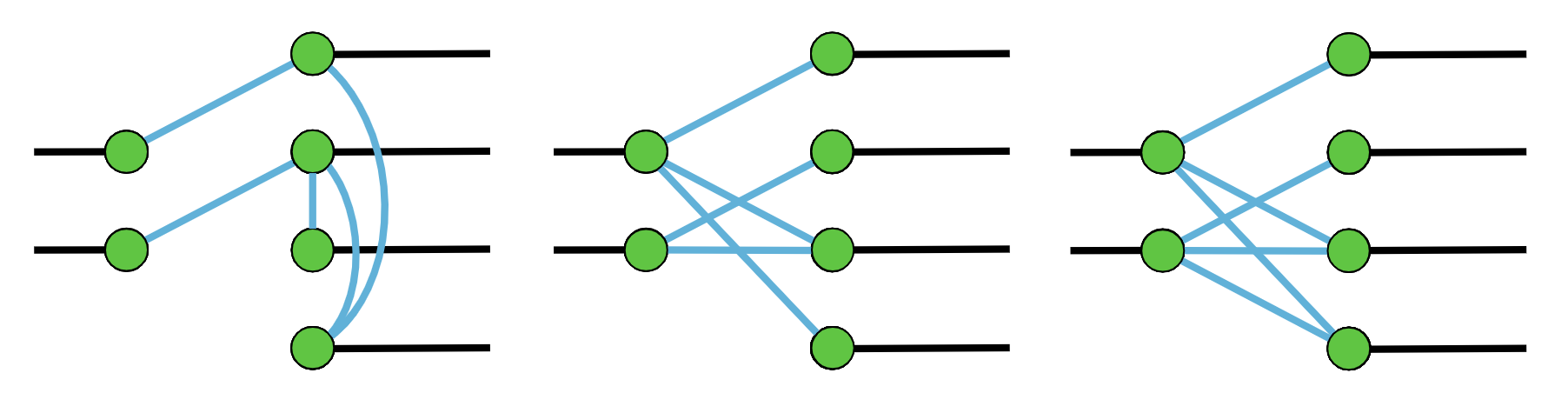}

    \caption{\justifying In the $[4,2]$ equivalence class with the representative shown at left, two possible bipartite forms are shown.}
    \label{diagram for 2-to-4}
\end{figure*}

When an equivalence class for an $[n,k]$ code has a bipartite form, we can narrow our search for a canonical form to the bipartite forms. These graphs are simpler, with no edges among output nodes, making them good candidates as canonical forms for these equivalence classes. We now turn to setting criteria for selecting a specific bipartite form in an equivalence class that could serve as a simple representative.

First, consider the case of $[4,2]$ codes. An example of the input-output adjacency matrix of a bipartite form, taking into account the RREF and pivot simplifications from Section \ref{sec 4}, could look like:
$$\begin{pmatrix}
    1 & 0 & \mathbf{1}&\mathbf{1}\\
    0 & 1& \mathbf{0} & \mathbf{1}\\
\end{pmatrix}$$

By changing the bolded entries between 0s and 1s, there are 16 possible bipartite forms among all $[4,2]$ codes. 

Similar to Section \ref{sec: prime codes} on Prime Codes, we will be considering codes that cannot be disconnected. In this section, codes that cannot be disconnected are called \textit{connected}.

\begin{proposition}
    \label{2-to-4 special}
    Among all connected $[4,2]$ bipartite ZX diagrams, there is only 1 distinct diagram up to equivalence through operations 2 through 4 from Conjecture~\ref{operations}.
\end{proposition}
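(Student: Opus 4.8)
The plan is to enumerate the $16$ bipartite forms encoded by the four bolded entries of the matrix $\begin{psmallmatrix} 1 & 0 & b_1 & b_2 \\ 0 & 1 & b_3 & b_4 \end{psmallmatrix}$, restrict to the \emph{connected} ones, and then collapse them under operations 2, 3, and 4 of Conjecture~\ref{operations} (permuting outputs, permuting inputs, and row operations on the input-output adjacency matrix). First I would identify which of the $16$ choices give a connected diagram: a form is disconnected exactly when the bipartite adjacency matrix $B = \begin{psmallmatrix} b_1 & b_2 \\ b_3 & b_4\end{psmallmatrix}$, together with the two pivot columns, splits the vertex set. Concretely, disconnection happens when some free output node has no neighbours (a zero column of $B$), or when the two input-pivot-plus-neighbourhood blocks decouple — e.g. $B = 0$, or $B$ has a zero column, or $B = \begin{psmallmatrix}1&0\\0&1\end{psmallmatrix}$ or its transpose after the relevant permutation, etc. I would simply list the connected ones; I expect only a handful survive.

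Next, on the connected forms I would apply the allowed symmetry operations. Permuting the two inputs (operation 3) simultaneously swaps the two rows of $B$ \emph{and} swaps the two pivot columns; permuting the two free output nodes (operation 2) swaps the two columns of $B$; row operations (operation 4) let me add row~2 to row~1 of the \emph{full} input-output matrix, i.e. replace row~1 of $[I \mid B]$ by row~1 $+$ row~2 — but this destroys the RREF/pivot normalization unless I re-apply a permutation, so effectively the usable combined moves are: swap columns of $B$, swap rows of $B$ (with pivot swap), and the conjugation-type move that row-reducing back to RREF induces. The cleanest bookkeeping is to note that, after fixing pivots in the first two positions, the residual group acting on $B$ is generated by (i) column swap of $B$ and (ii) row swap of $B$, together with the row-addition move that, combined with re-pivoting, acts as $B \mapsto B'$ for an explicitly computable $B'$. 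I would tabulate the orbit of each connected $B$ under this group and check that all connected forms lie in a single orbit, yielding "only 1 distinct diagram."

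The main obstacle I anticipate is getting the group action on $B$ exactly right — in particular tracking what operation 4 (row operations on the input-output adjacency matrix) does once you insist on returning to RREF with pivots in the fixed leading positions. Adding row~2 to row~1 turns $[I\mid B]$ into $\begin{psmallmatrix}1&1&b_1+b_3&b_2+b_4\\0&1&b_3&b_4\end{psmallmatrix}$, which is no longer in the normalized pivot form; re-reducing and re-permuting outputs to restore the convention gives a nontrivial transformation of the entries of $B$, and this is the computation that actually does the work of merging the orbits. I would carry this out explicitly for each surviving connected $B$, which is a short finite check, and conclude. A secondary (minor) point is making sure "connected" is characterized correctly for the bipartite form — since there are no output-output edges, connectivity is just connectivity of the bipartite graph $\{I_1,I_2\}\cup\{\text{pivots}\}\cup\{\text{free outputs}\}$ with the input-pivot edges always present — so the only way to disconnect is to isolate a free output or to have the graph fall into two input-pivot-centered pieces, and I would verify this list against the $16$ cases directly.
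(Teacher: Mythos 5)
Your approach is essentially the same as the paper's: enumerate the $2^4 = 16$ choices of the bolded submatrix $B$, discard the disconnected ones (which turn out to be exactly those with at most two $1$'s in $B$, leaving the $\binom{4}{3} + \binom{4}{4} = 5$ connected cases), and then collapse the survivors under output permutation, input permutation, and row operations followed by a re-pivoting permutation. The paper carries out the orbit computation explicitly by chaining the five matrices together one operation at a time, whereas you describe the same computation as a group action on $B$; the subtlety you flag — that adding rows of $[I\mid B]$ destroys the pivot normalization and must be followed by a column permutation to restore it — is exactly the step the paper handles with its "permuting the outputs achieves..." moves, so your plan is sound and would close out the same way once the short finite check is written down.
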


\begin{proof}
    There are 16 total possible input-output adjacency matrices for $[4,2]$ bipartite codes, since there are $2^2$ ways for each input to connect to free outputs. If there are two or less input-to-free output edges, then either an output node is alone or the input nodes are in separate connected components.
    
    Then, there are only 5 possible input-output adjacency matrices in this case:
    \begin{align*}&\begin{pmatrix}
        1 & 0 & 1 & 1\\
        0 & 1 & 1 & 0\\
    \end{pmatrix}, \begin{pmatrix}
        1 & 0 & 1 & 1\\
        0 & 1 & 0 & 1\\
    \end{pmatrix},\\ &\begin{pmatrix}
        1 & 0 & 1 & 0\\
        0 & 1 & 1 & 1\\
    \end{pmatrix},
    \begin{pmatrix}
        1 & 0 & 0 & 1\\
        0 & 1 & 1 & 1\\
    \end{pmatrix},\\ &\begin{pmatrix}
        1 & 0 & 1 & 1\\
        0 & 1 & 1 & 1\\
    \end{pmatrix}.\end{align*}

    Consider the first matrix above. Switching the third and fourth output vertices (corresponding to the third and fourth columns of the matrix) results in $$\begin{pmatrix}
        1 & 0 & 1 & 1\\
        0 & 1 & 0 & 1\\
    \end{pmatrix},$$which is the second matrix. From here, we use operation 4 to replace the first row with the sum of the first and second rows modulo 2:
    $$\begin{pmatrix}
        1 & 1 & 1 & 0\\
        0 & 1 & 0 & 1\\
    \end{pmatrix}.$$Permuting the outputs achieves the third and fourth matrices. Lastly, starting from the above matrix, we use operation 4 to replace the second row with the sum of the current first and second rows modulo 2 to find
    $$\begin{pmatrix}
        1 & 1 & 1 & 0\\
        1 & 0 & 1 & 1\\
    \end{pmatrix}.$$This can be permuted to give the fifth matrix. Analogous sequences of operations can bring any of the other matrices to another, so all 5 of the graphs are equivalent, as desired.
\end{proof}

From Proposition \ref{2-to-4 special}, the representative form for the equivalence class that contains these 5 adjacency matrices can be chosen to be 
$$\begin{pmatrix}
    1 & 0 & 1 & 1 \\
    0 & 1 & 1 & 0\\
\end{pmatrix}.$$
Note that this graph has the property of having the least number of edges. We can distinguish between the four matrices with the least number edges by taking the graph with more edges on the first input and first output.

By writing one of these input-to-output adjacency matrices into the full $6\times 6$ matrix, we can produce its ZX diagram. In \Cref{diagram for 2-to-4}, the ZX diagram of the above matrix is shown in the center and the representative of the equivalence class (from Figure \ref{coded codes}) of this diagram is shown at left. The rightmost ZX diagram is another bipartite form in the same equivalence class.

\bigskip

\begin{figure*}
    \includegraphics[scale=0.5]{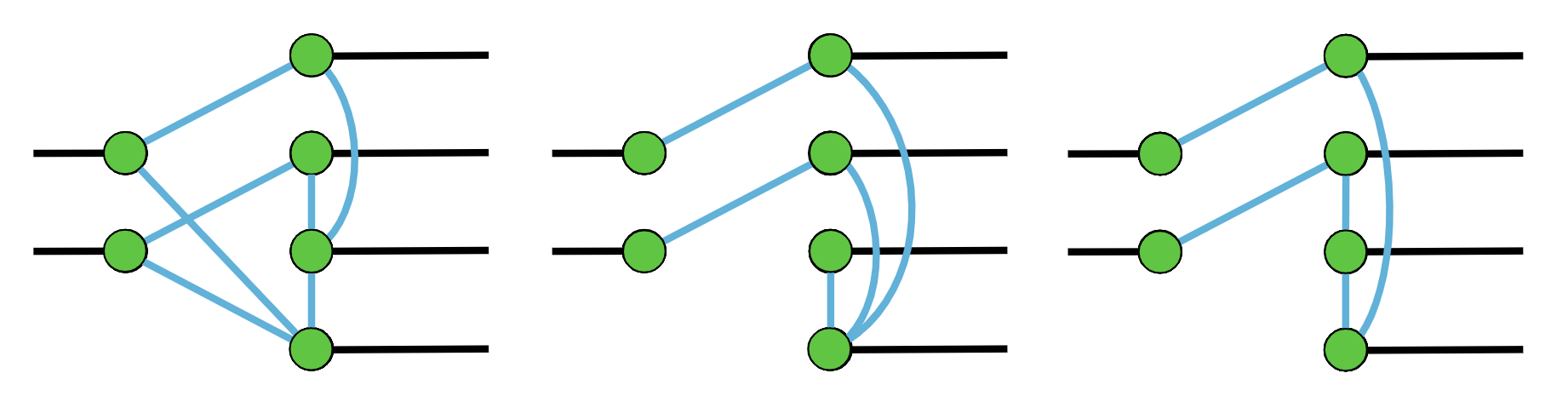}

    \caption{\justifying The representatives of the three different equivalence classes lacking bipartite forms among the classes for $[4,2]$ ZX diagrams.}

    \label{small non-disjoint}
\end{figure*}

Now, we present a general method of simplifying a bipartite $2\times n$ input-to-output adjacency matrix.

\begin{proposition}
    \label{minimum bound 2-to-n}
    Consider a $[n,2]$ encoder graph equivalent to some bipartite form. It is also equivalent to a bipartite form where the two inputs are both connected to at most $\big\lfloor\frac{n}{2}  - 1\big\rfloor$ of the same free outputs.
\end{proposition}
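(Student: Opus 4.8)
The plan is to work entirely at the level of the $2 \times n$ input-to-output adjacency matrix $M$, using only operations 2, 3, and 4 of Conjecture~\ref{operations} (column permutations, row permutations, and row additions mod 2), since these are exactly the moves that preserve the bipartite structure and the equivalence class. After the RREF and pivot simplifications of Section~\ref{sec 4}, the first two columns of $M$ form the identity block, and the remaining $n-2$ columns record, for each free output, which of the two inputs it is attached to. Classify the free outputs by their column type: type $(1,0)$ (attached to input 1 only), type $(0,1)$ (attached to input 2 only), and type $(1,1)$ (attached to both). Let $a$, $b$, $c$ be the respective counts, so $a + b + c = n - 2$. The quantity we want to bound is $c$, the number of free outputs common to both inputs.

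First I would observe how the allowed operations permute these counts. Swapping the two inputs (a row swap, operation 3, together with the corresponding relabeling) exchanges $a \leftrightarrow b$ and fixes $c$; column permutations (operation 2) let us freely reorder outputs within and across types; and the one nontrivial move is the row operation ``replace row 1 by row 1 $+$ row 2'' (operation 4). Tracking its effect on column types: a type $(1,0)$ column stays $(1,0)$, a type $(0,1)$ column becomes $(1,1)$, and a type $(1,1)$ column becomes $(1,0)$. However this row operation destroys the identity block in the first two columns — it sends $\begin{psmallmatrix}1&0\\0&1\end{psmallmatrix}$ to $\begin{psmallmatrix}1&1\\0&1\end{psmallmatrix}$ — so to return to a valid pivot-normalized bipartite form we must re-establish RREF, which amounts to performing a compensating column permutation: the pivot for input 1 is now whichever column is the leftmost $(1,0)$-type column. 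Carrying this out carefully shows that the combined move takes the triple $(a, b, c)$ to $(c, a, b)$ (possibly after also applying the input swap), i.e. it cyclically permutes $(a,b,c)$. Combined with the transposition $a \leftrightarrow b$ from the input swap, the full group of reachable permutations of $(a,b,c)$ is all of $S_3$.

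Hence within the equivalence class we may assume $c = \min(a, b, c)$, and since $a + b + c = n - 2$ with $c$ the smallest, $3c \le n - 2$, giving $c \le \lfloor (n-2)/3 \rfloor$. This is slightly stronger than, hence implies, the claimed bound $c \le \lfloor n/2 - 1 \rfloor$, so I would then just remark that $\lfloor (n-2)/3 \rfloor \le \lfloor n/2 - 1 \rfloor$ for all $n \ge 2$ and conclude. The main obstacle I anticipate is the bookkeeping in the previous paragraph: verifying precisely how operation~4 interacts with re-normalizing to RREF/pivot form, and confirming that the induced action on $(a,b,c)$ is exactly the claimed cyclic shift rather than something that loses information (e.g. a case where a pivot column coincides with a counted free output). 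One must check the edge cases $a = 0$ or $b = 0$ separately, since then a ``free'' output may be forced to serve as the new pivot and the connectedness hypothesis is what rules out the degenerate possibilities — this is the same phenomenon seen in the proof of Proposition~\ref{2-to-4 special}, and I would handle it by the same kind of explicit small-case check before invoking the general counting argument.
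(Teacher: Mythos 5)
Your approach --- tracking the triple $(a,b,c)$ of free-output column types and its orbit under the allowed moves --- is genuinely different from the paper's, which argues by contradiction with a single row operation and a column count. It is the more structured argument, but it contains a computational error that propagates to your final bound. Under ``replace row $1$ by row $1+$ row $2$'' a column $(c_1,c_2)$ becomes $(c_1+c_2,\,c_2)$, so the column types transform as $(1,0)\to(1,0)$, $(0,1)\to(1,1)$, $(1,1)\to(0,1)$; you wrote $(1,1)\to(1,0)$, which is the transition under the \emph{other} row operation, so your stated rule mixes the two. Carrying the correct transitions through the re-pivoting step (the old pivot column $2$, now type $(1,1)$, must be traded for a fresh $(0,1)$ column), the combined move is
\begin{equation*}
(a,b,c)\;\longmapsto\;(a,\;c-1,\;b+1),
\end{equation*}
not the cyclic shift $(c,a,b)$. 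The symmetric group $S_3$ does appear, but it permutes the \emph{shifted} triple $(\alpha,\beta,\gamma)=(a+1,\,b+1,\,c)$, which sums to $n$ rather than $n-2$: the move above is the transposition $\beta\leftrightarrow\gamma$, the input swap is $\alpha\leftrightarrow\beta$, and ``row $2+$ row $1$'' is $\alpha\leftrightarrow\gamma$. Minimizing $\gamma=c$ over the orbit therefore gives $c\le\lfloor n/3\rfloor$, not the stronger $\lfloor(n-2)/3\rfloor$ you claimed. The corrected bound still satisfies $\lfloor n/3\rfloor\le\lfloor n/2-1\rfloor$ for all $n\ge 4$, so once repaired your argument does prove the proposition in that range --- and it is in fact more robust than the paper's own proof, whose final step asserts the strict inequality $n-1-\lfloor n/2\rfloor<\lfloor n/2\rfloor$, which is an equality rather than a strict inequality when $n$ is odd. (For $n=3$ the connected case genuinely has $c=1$, so the statement should be read as applying to $n\ge 4$; your edge-case concern at the end is the right instinct, though the breakdown is at $n=3$, not at $a=0$ or $b=0$.)
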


\begin{proof}In this proof, we only consider encoders that are equivalent to a bipartite form.

For the sake of contradiction, suppose all bipartite forms of this equivalence class have at least $\big\lfloor\frac{n}{2} \big\rfloor$ shared free outputs. In an input-to-output adjacency matrix, this would look like
$$\begin{pmatrix}
    1 & 0 & 1 & 1 & 1 & 1 & 0 \\
    0 & 1 & 1 & 1 & 1 & 0 & 1\\
\end{pmatrix}.$$The first two columns are fixed to be input-pivot edges, as usual. If there are at least $\big\lfloor\frac{n}{2} \big\rfloor$ shared free outputs, the other $n-2$ columns must have a majority of columns containing two 1's. In this example, 3 out of 5 columns contain two 1's.

However, using operation 4 from Conjecture~\ref{operations}, the top row can be replaced with the sum of the top and bottom row modulo 2. 

Note that this means all the free outputs that were shared by both inputs have their edges with the first input disconnected, so at least $\big\lfloor\frac{n}{2} \big\rfloor$ columns do not have two 1's. 

Furthermore, after the operation, the first column cannot possibly have two 1's, so one additional column does not have two 1's. The example matrix above turns into
$$\begin{pmatrix}
    1 & 1 & 0 & 0 & 0 & 1 & 1 \\
    0 & 1 & 1 & 1 & 1 & 0 & 1\\
\end{pmatrix}.$$We can rearrange output vertices to bring back the pivots. The following is thus equivalent
$$\begin{pmatrix}
1 & 0 & 1 & 1 & 1 & 0 & 0 \\
0 & 1 & 1 & 1 & 0 & 1 & 1 \\
\end{pmatrix}.$$

Thus, the maximum number of columns with two 1's is now $n - 1 - \big\lfloor\frac{n}{2}\big\rfloor$. However,
$$n - 1 - \Big\lfloor\frac{n}{2} \Big\rfloor < \Big\lfloor \frac{n}{2}\Big\rfloor,$$so we reach a contradiction, since there are now less than $\big\lfloor \frac{n}{2}\big\rfloor$ shared free outputs in an equivalent bipartite form. 

Therefore, the claim holds.
\end{proof}

Proposition \ref{minimum bound 2-to-n} demonstrates that we can choose a bipartite form that has a relatively small number of shared free outputs. In fact, if the top row is the horizontal vector $\textbf{a}$ and the bottom row is the horizontal vector $\textbf{b}$, by linear operations, there are only 3 possibilities of unordered combinations of two vectors in the rows. It could be $(\textbf{a}, \textbf{b}), (\textbf{a+b}, \textbf{b}), (\textbf{a+b}, \textbf{a})$. Then, we can choose which of these bipartite forms has the least number of shared free outputs and thus minimizes this number.

\bigskip

In an attempt to show the uniqueness of the bipartite forms in an equivalence class, we conjecture the following, which would allow an efficient way to check whether two bipartite forms are equivalent:
\begin{conjecture}
\label{conjecture: bipartite}
    In an equivalence class with bipartite forms, all bipartite forms in the class can be transformed from one to another using only output permutations, input permutations, and row operations on the input-output adjacency matrix.
\end{conjecture}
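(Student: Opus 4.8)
The plan is to show that if two bipartite forms $D_1$ and $D_2$ lie in the same equivalence class (under the full set of operations in Conjecture~\ref{operations}), then $D_1$ can be reached from $D_2$ using \emph{only} output permutations, input permutations, and row operations on the input-output adjacency matrix — i.e., without ever leaving the bipartite world. Since both diagrams have no output-output edges and no local operations (by the simplifications of Claim~\ref{claim: simplifications}), the entire data of a bipartite form is just the input-to-output adjacency matrix $A$, an $n \times k$ matrix over $\mathbb{F}_2$ whose left $k \times k$ block is the identity (pivot simplification). So the conjecture reduces to a purely linear-algebraic statement: two such matrices represent equivalent codes iff they are related by $A \mapsto R A P$ for an invertible $R \in GL_k(\mathbb{F}_2)$ (row operations $=$ input relabeling composed with additions) and a permutation matrix $P$ on the $n$ outputs.

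First I would reduce to the level of the codespace: by the remark after Claim~\ref{claim: simplifications} and the discussion of Section~\ref{sec 4}, the equivalence class is determined by the local-Clifford-plus-output-permutation orbit of the code. For a bipartite form, the stabilizer group has a particularly rigid structure — reading off the stabilizers as in Lemmas~\ref{lemma: red nodes have Z checks}--\ref{lemma: inputs have logical ops} (specialized to the bipartite setting where all output nodes are $Z$ nodes and the diagram is a graph state on inputs $\cup$ outputs with a bipartite edge set), the $X$-type and $Z$-type stabilizer generators are each supported in a way dictated directly by the rows and columns of $A$. The key observation to establish is that among all diagrams in the orbit, the bipartite ones are exactly those whose stabilizer tableau, after putting the input-output block in RREF, has a specific block form; and the only freedom left is (i) which output becomes which (a permutation $P$), and (ii) which basis of the logical/pivot space one chooses (a $GL_k$ action, realized by row operations). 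The crux is ruling out that a local complementation at a \emph{free output} node — the one operation that genuinely changes the output-output edge set — could be needed as an intermediate step: one must show that whenever a sequence of operations starts bipartite and ends bipartite, the net effect on $A$ is achievable by permutations and row operations alone. I would attack this by tracking the output-output adjacency block $B$ (initially zero) through the sequence; a local complementation at free output $v$ toggles edges within $N(v)$, and bipartiteness at the end forces $B = 0$ again, which should pin down the combined effect on the input-output block.

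The main obstacle — and the reason this is stated as a conjecture rather than a theorem — is precisely controlling these excursions through non-bipartite intermediate diagrams. Local complementation at a free output can create output-output edges, create input-input edges (later cleared by operation~5), and even disturb the RREF/pivot structure; there is no obvious monovariant forcing the path to stay ``close to bipartite,'' and the group generated by all six operations acting on the $(n+k)\times(n+k)$ adjacency matrix is large and not transparently understood. A viable line of attack is to find a complete invariant of bipartite forms under $(R,P)$ — for instance the multiset of column weights of $A$ together with the matroid/linear-code structure of the column space — and then show this invariant is preserved by the full equivalence (using that local complementation corresponds to a symplectic transformation that, restricted to the bipartite locus, must act as a permutation composed with a change of logical basis). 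If that invariant turns out to be complete, the conjecture follows; the gap is in proving completeness, which likely requires a careful case analysis or an appeal to the classification of graph states under local Clifford equivalence (AMDS~\cite{adcock2020mapping}) adapted to the encoder setting with designated inputs.
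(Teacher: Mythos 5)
This statement is a \emph{conjecture} in the paper (Conjecture~\ref{conjecture: bipartite}), not a theorem: the paper offers no proof, only a one-paragraph sketch of a possible attack. You have correctly recognized this — your ``proof proposal'' is really a candid analysis of why the result is hard, and you explicitly acknowledge the gap (``controlling these excursions through non-bipartite intermediate diagrams'') rather than claiming a proof. That is the right judgment.

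Where your sketch and the paper's sketch genuinely diverge is in the proposed invariant. The paper suggests arguing by contradiction via an \emph{entanglement-theoretic} invariant: if $G_1, G_2$ are bipartite forms not related by output permutations, input permutations, and row operations, show that some pair of nodes has different entanglement in the two diagrams, quantified by a partial trace. Your sketch instead proposes a \emph{combinatorial/linear-algebraic} invariant of the input-output matrix $A$ under the action $A \mapsto RAP$ (with $R \in GL_k(\mathbb{F}_2)$, $P$ a permutation) — e.g.\ the matroid or linear-code structure of the column space — and then argues that this invariant is preserved by the full equivalence including local complementations. Your reduction of a bipartite form to ``just the matrix $A$'' and your framing of the target operations as a $GL_k \times S_n$ action are clean and correct, and your identification of the real obstacle (local complementations at free outputs temporarily leave the bipartite locus, and there is no obvious monovariant keeping the path nearby) is sharper than what appears in the paper. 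Neither your route nor the paper's closes the gap, but your proposal to pull in the graph-state local-Clifford classification of AMDS is arguably a more concrete next step than the paper's partial-trace suggestion, which is stated at the level of an idea rather than a computable quantity. One caution: before investing in column-weight multisets or matroid invariants, test them against the $[4,2]$ data from Proposition~\ref{2-to-4 special} and Figure~\ref{diagram for 2-to-4} — the five bipartite matrices listed there are all in one class, so any proposed complete invariant must collapse them together while distinguishing them from the non-bipartite classes of Figure~\ref{small non-disjoint}.
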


One straightforward approach starts by assuming for the sake of contradiction that two bipartite graphs, $G_1$ and $G_2$, are equivalent even though they cannot be transformed from one to another using only the three operations in Conjecture \ref{conjecture: bipartite}. A contradiction could be reached if we are able to show that the entanglement between two of the nodes is different in the diagrams. Quantifying the entanglement could be possible using the partial trace.

\begin{figure}[h]
\centering
\includegraphics[scale=0.40]{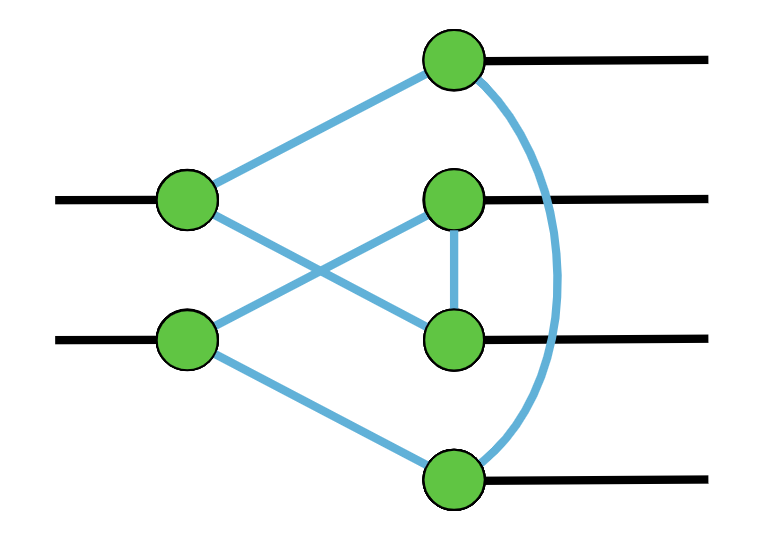}
\caption{\justifying A ZX diagram with the minimum number of output-output edges in its equivalence class. The equivalence class has no bipartite forms and does not contain any ZX diagrams with 0 output-output edges or 1 output-output edge.}
\label{new 300 graph}
\end{figure}

Besides equivalence classes with bipartite forms, there are also some classes that do not have bipartite forms. Among the equivalence classes of $[4,2]$ codes from Figure \ref{coded codes}, five of the classes have zero bipartite forms. Three of these classes have connected graphs, and they are shown in Figure \ref{small non-disjoint}. Finding a clean, representative form for these classes is less intuitive, but analyzing a few other graphs in these equivalence classes could give a clue as to what to choose. For example, Figure \ref{new 300 graph} shown above could be a better representative for the equivalence class containing the leftmost diagram in Figure \ref{small non-disjoint}. Figure \ref{new 300 graph} is symmetric, and it has the least number of output-output edges.

\section{Conclusion}

This paper presented our work on producing the KL canonical forms for CSS codes, extending the work done by \cite{KLS}. Furthermore, we show the resulting KL canonical forms of the toric code and certain surface codes. Furthermore, we introduced the notion of prime codes and proved our Fundamental Theorem of Clifford Codes. We also tabulated results found when considering codes with much looser equivalence conditions, and we analyzed possible representative forms (such as bipartite forms) for the equivalence classes found.

The work done on CSS codes conclusively finds an elegant, minimal form in the ZX calculus for CSS codes. From Kissinger \cite{kissinger2022phase}, it was known that phase-free ZX diagrams are CSS codes, but now we have shown an optimal representation, reducing the number of nodes to have one per input and output. Future works on CSS codes could use the models produced by this work and related works on finding more powerful CSS codes, with each model offering its own strengths and limitations.

Our work on prime codes provides a new framework when working with quantum error-correcting codes, formulating and proving the Fundamental Theorem of Clifford Codes. Future works could examine the structures of primes, generate larger primes, and find useful things to tabulate about prime codes in a similar manner to prime numbers and prime knots.

Furthermore, the tabulations of equivalence classes and the work on the bipartite forms could aid future works in determining patterns among equivalence class sizes and representatives. Extending the definition of equivalence to allow for permutation of outputs is physically significant as these permutations do not affect the manner that the inputs are transformed into outputs. A general formulation of codes representative of huge families of equivalent codes could lay the groundwork to determining stronger codes in the future.

\section{Acknowledgments}\label{sec:acknowledgements}

We would like to thank the MIT PRIMES-USA program for the opportunity to conduct this research. Thank you to Jonathan Lu for support in creating the diagrams. ABK was supported by the National Science Foundation (NSF) under Grant No. CCF-1729369.

\bibliography{bib}

\newpage

%\iffalse
\appendix
\section{Constructing the 3-by-3 toric code}
\label{appendix sec: construct 3-by-3 toric}

\begin{figure}
\centering
\includegraphics[scale=0.35]{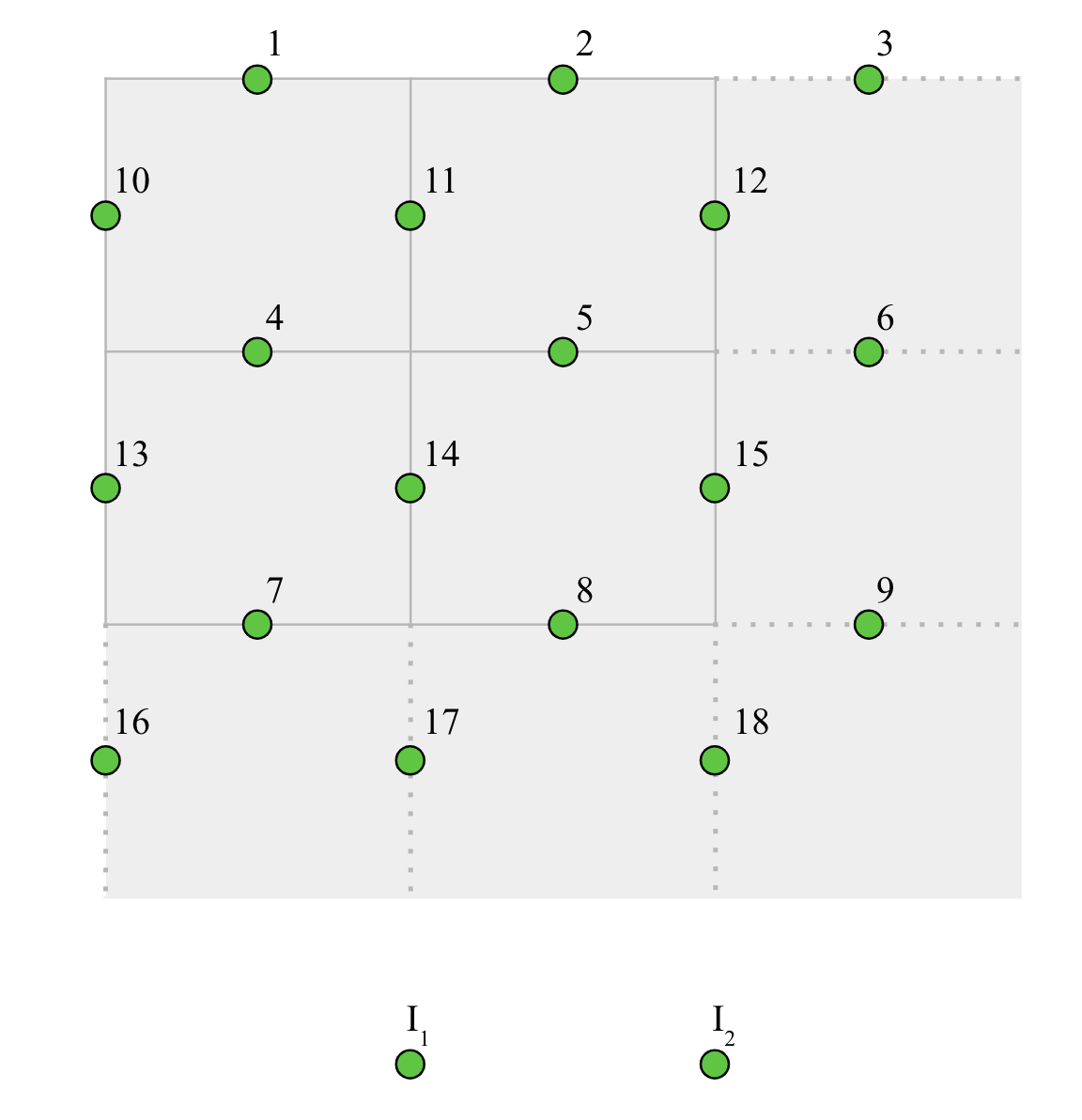}
\caption{\justifying The input and output nodes in the 3-by-3 toric code. The boundary conditions are periodic, and all nodes are initially set to $Z$ nodes.}
\label{3-by-3 fig}
\end{figure}

In Section \ref{section:surfacetoric}, we provided the 2-by-2 and 3-by-3 toric codes in ZX calculus. Here, we provide a more detailed description of the methodology used to determine the structure of the 3-by-3 toric code.

Among the 18 outputs in Figure \ref{3-by-3 fig}, we expect some of the free output edges to contain Hadamard gates. (For now, we are keeping the nodes as $Z$ nodes.) Suppose the output edge onto vertex 1 has a Hadamard. This implies that applying the stabilizer $Z_1Z_4Z_{10}Z_{11}$ would result in sliding a $Z$ gate from the end of the output edge, through the Hadamard (which converts the $Z$ gate to an $X$ gate), then through vertex 1 itself. By the $\pi$-copy rule (see Definition \ref{def: basic rewrite rules}), the $X$ gate, which is an $X$ node with phase $\pi$, copies itself onto the edges (excluding the output edge) connected to vertex 1.

Similarly, while continuing the assumption that output node 1 has a Hadamard, if we instead applied the stabilizer $X_1X_3X_{10}X_{16}$, we slide an $X$ gate from the end of the output edge, through the Hadamard (which converts the $X$ gate to a $Z$ gate), then onto vertex 1. By the merging rule (see Definition \ref{def: basic rewrite rules}), the $Z$ gate, which has phase $\pi$, merges with vertex 1, a phase 0 $Z$ node. This results in vertex 1 gaining a phase of $\pi$.

\bigskip

By the preceding paragraphs, the behavior of the $Z$ and $X$ gates on an output node with a Hadamard is understood. The analogous behavior occurs on an output node without a Hadamard by switching all the colors used in the processes above.

To determine all of the edges in the 3-by-3 toric code in Figure \ref{3-by-3 fig}, we consider the process of applying the stabilizers onto the output nodes. Note that all of the internal edges among nodes in the diagram must be edges with Hadamards so that the merging rule cannot be applied to merge multiple nodes into one. To simplify our work, we set the output nodes with Hadamards to be $1, 2, 3, 7, 8, 9,$ and $16, 17, 18$. Then, by stabilizer $Z_1Z_4Z_{10}Z_{11}$, the nodes 4, 10, and 11 gain phase $\pi$ from their $Z$ gates while node 1 will cause a $\pi$-copy rule to move $X$ gates onto the internal edges connected node 1. Since all internal edges have Hadamards, moving the $X$ gates through the Hadamards will result in $Z$ gates. If these $Z$ gates went to any nodes other than nodes 4, 10, and 11, the stabilizer would not have kept the configuration the same. Therefore, the $Z$ gates must arrive at only nodes 4, 10, and 11. This works because the $\pi$'s from these $Z$ gates cancel with the $\pi$s already at the nodes. Thus, the only internal edges to node 1 are from nodes 4, 10, and 11.

Using similar reasoning, we can deduce the rest of the internal edges among the output nodes. Furthermore, to determine the logical operators (to connect the input nodes to), we look for sets of nodes that, when any stabilizer is applied, keep the input node at phase 0. The resulting figure is given in the text (see Figure \ref{final 3 by 3}).

\begin{figure*}
    \centering
    
    \begin{minipage}{0.4\textwidth}
        \begin{subfigure}[b]{\textwidth}
            \includegraphics[width=0.9\linewidth]{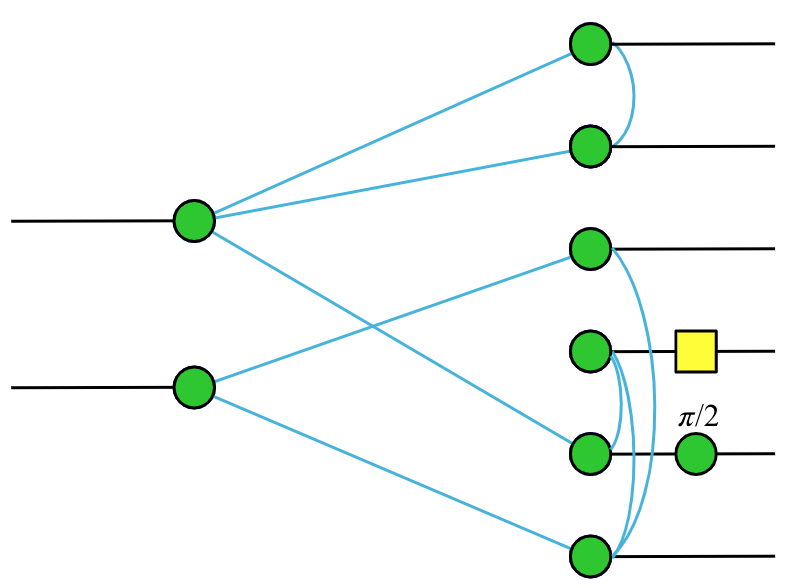}
            \caption{\justifying The KLS form of a QECC. Blue edges represent edges with Hadamards. Note that two of the output edges have local operations. One of the local operations is a Hadamard gate while the other is a green $\pi/2$ or $S$ gate.}
            \label{kls form of QECC}
        \end{subfigure}

        \begin{subfigure}[b]{\textwidth}
            \includegraphics[width=0.9\linewidth]{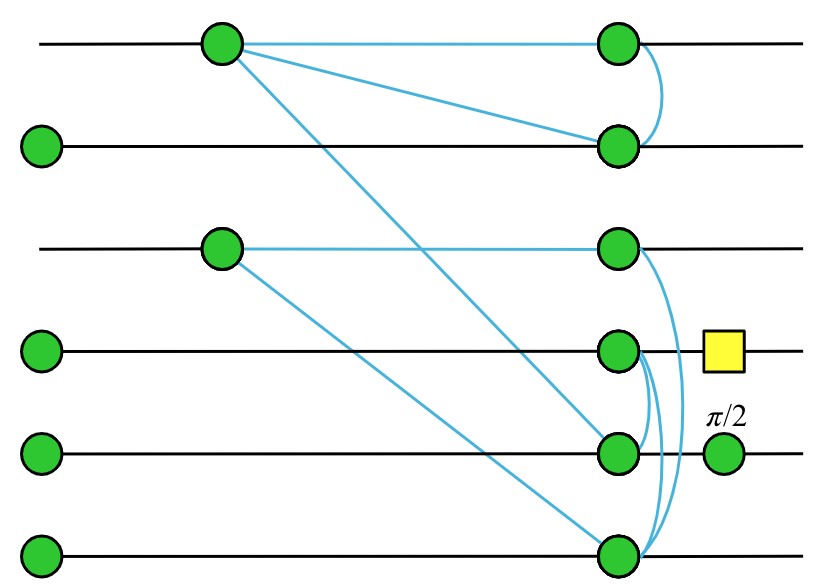}
            \caption{\justifying The non-pivot output nodes are un-merged into two $Z$ nodes each, and one of each pair is placed to the left.}
            \label{step1}
        \end{subfigure}

        \begin{subfigure}[b]{\textwidth}
            \includegraphics[width=0.9\linewidth]{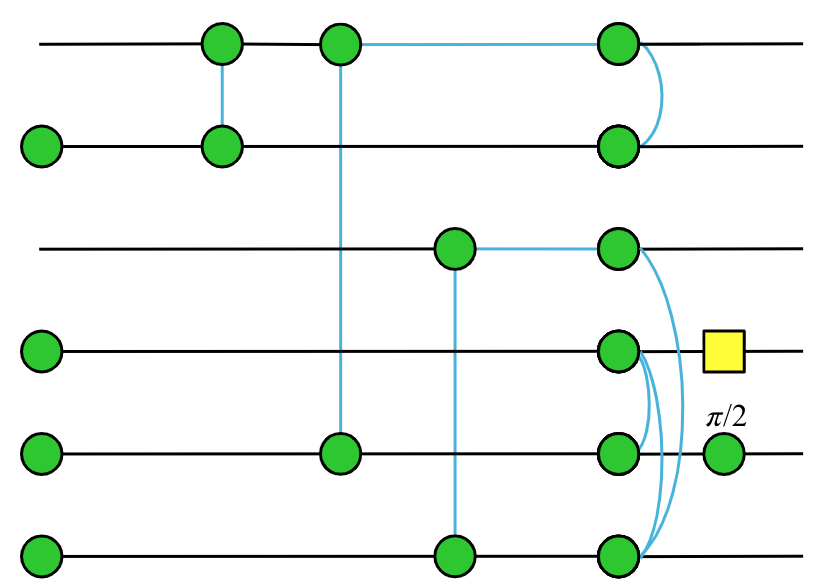}
            \caption{\justifying By un-merging the input nodes, the edges between the inputs and non-pivot outputs can be shown separately from each other.}
            \label{step2}
        \end{subfigure}
    \end{minipage}
    \hspace{0.1\textwidth} % Adjust the horizontal space between minipages
    \begin{minipage}{0.4\textwidth}
        \begin{subfigure}[b]{\textwidth}
            \includegraphics[width=\linewidth]{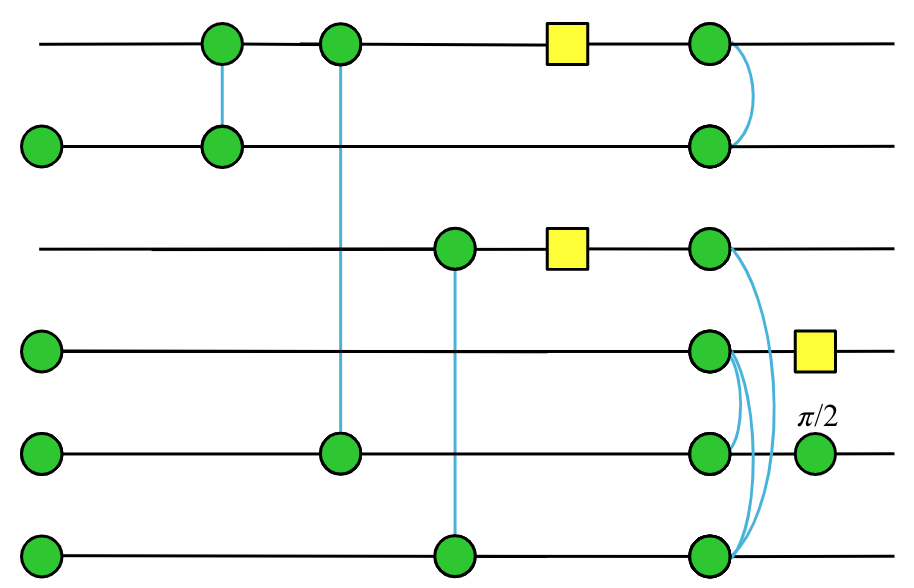}
            \caption{\justifying The input-pivot edges are exchanged for edges with a yellow Hadamard gate on them.}
            \label{step3}
        \end{subfigure}

        \vspace{0.6 cm}
        \begin{subfigure}[b]{\textwidth}
            \includegraphics[width=\linewidth]{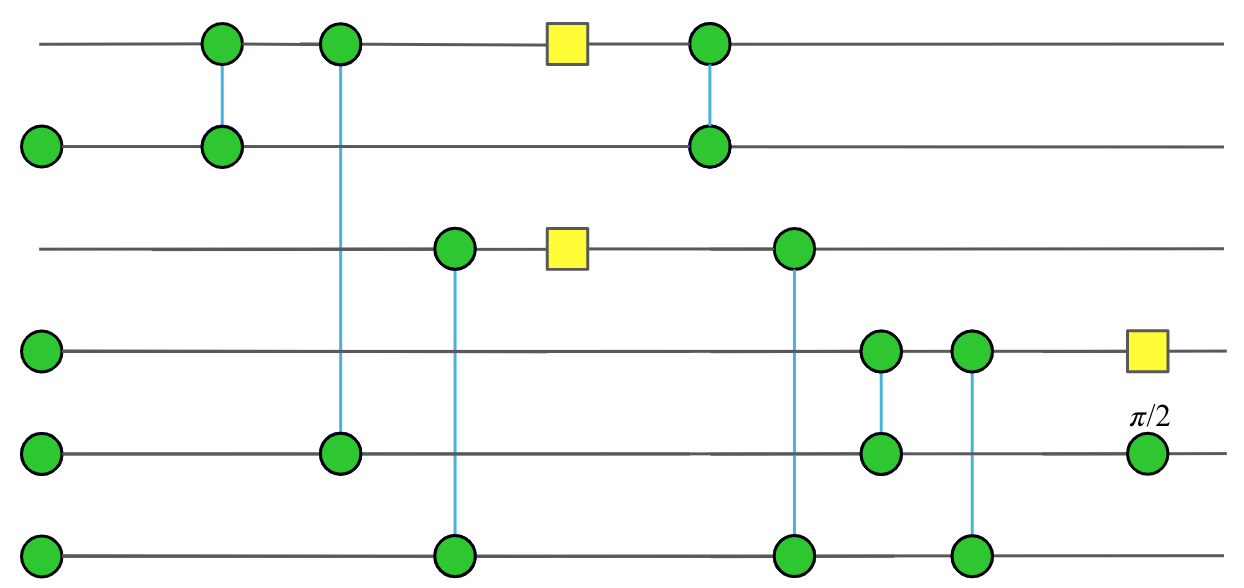}
            \caption{\justifying Similar to \ref{step2}, each of the output nodes with more than one connection are un-merged to separate the output-output edges from each other. Note that the local operations are still at the very right-hand side of the diagram.}
            \label{step4}
        \end{subfigure}

        \vspace{0.6 cm}
        \begin{subfigure}[b]{\textwidth}
            \includegraphics[width=\linewidth]{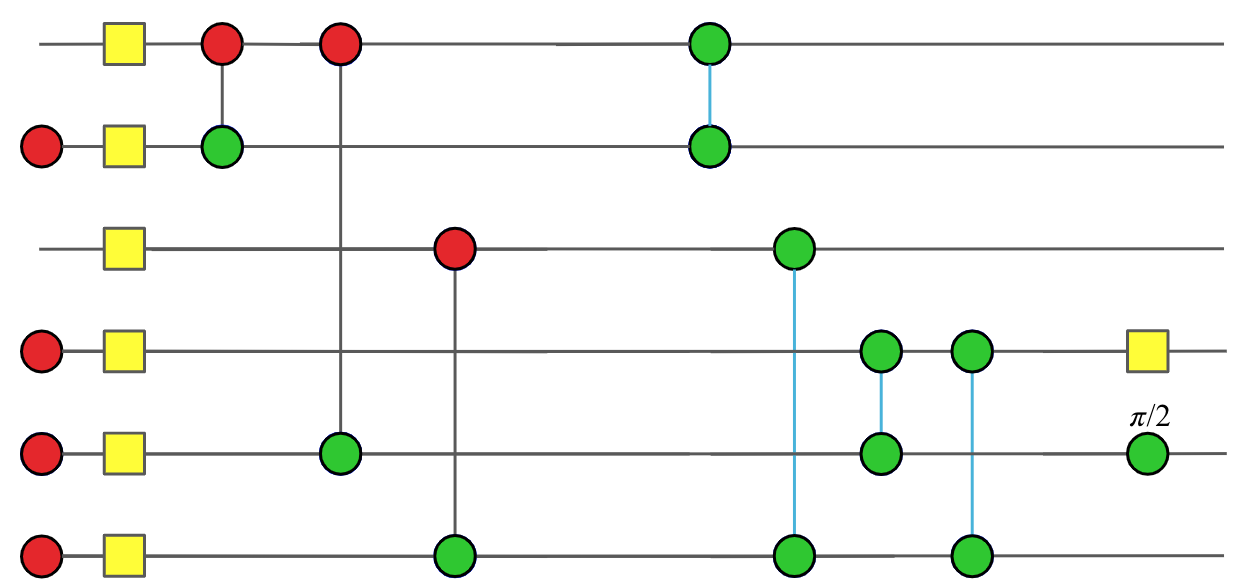}
            \caption{\justifying The Hadamards in the middle of \ref{step4} are pushed to the left, and the green $\ket{+}$ states are exchanged for the equivalent representation of $H\ket{0}$, which is a Hadamard on a red $\ket{0}$ state. }
            \label{step5}
        \end{subfigure}
    \end{minipage}
    
    \caption{\justifying Conversion of a KLS form ZX diagram into the equivalent quantum circuit diagram.}
    \label{overall}
\end{figure*}

\section{Converting a KLS canonical form into a quantum circuit}
\label{appendix sec: convert KLS to circuit}

In Section \ref{section:background}, we described a procedure for converting a KLS canonical form into a quantum circuit, which is repeated here.

\begin{enumerate}
\item Start with $n-k$ open wires representing the inputs of the circuit.
\item Add a $\ket{0}$ state for each of the $k$ non-pivot output nodes.
\item Apply an $H$ gate to all $n$ wires.
\item Apply a $CX$ gate between the wires corresponding to the edges between inputs and non-pivot outputs. The input node is the target qubit, and the output node is the controlled qubit.
\item Apply a $CZ$ gate between the wires corresponding to the edges between only outputs.
\item Apply the local operations attached to the outputs.
\end{enumerate}

We will now show why this works.

Consider the example given in \Cref{kls form of QECC}. We will convert this KLS form into a circuit. We can first move the input nodes to be along the same horizontal wire as their pivots nodes. Then, we split the non-pivot nodes by un-merging two zero-phase $Z$ nodes. The resulting diagram is in \Cref{step1}. From here, the edges from inputs to non-pivot outputs can be separated by un-merging nodes and expressing each edge separately, as in \Cref{step2}. In \Cref{step3}, the Hadamards between the inputs and pivots are shown explicitly. In \Cref{step4}, we do a similar un-merging of nodes to separately express the edges between nodes.

The steps used in these diagrams hold in general. We can un-merge each node until all the edges are expressed separately (and the non-pivot nodes have an initial state), and, to keep things organized, we can keep the input-output edges on the left side and the output-output edges on the right side.

From \Cref{step4}, note that the edges with Hadamards between the nodes of a ZX diagram are equivalent to the $CZ$ gates between the corresponding wires in a quantum circuit. Also, the $Z$ nodes at the start are equivalent to $\ket{+}$.

Now, consider sliding the two Hadamards in the middle towards the left of the diagram. Because $ZH = HX$, this means each of the $CZ$'s that the $H$'s pass through turns into a $CX$ with the target qubit on the input's wire. Also, we may exchange the $Z$ nodes at the start for a $X$ node and an $H$, since $H\ket{0} = \ket{+}$. This gives \Cref{step5}. From here, we can see why the procedure for creating the circuit from KLS form works, since each of the ZX calculus components in \Cref{step5} can quickly be converted to a circuit diagram component.

\end{document}